\def\hata{\hat{a}}
\def\hatx{\hat{x}}
\def\hatm{\hat{m}}
\global\long\def\EE{\mathbb{E}}
\global\long\def\PP{\mathbb{P}}
\global\long\def\11{\mathbbm{1}}
\def\ulinetau{\underline{\tau}}
\def\ulineY{\underline{B}}
\def\ulinem{\underline{m}}
\def\ulinee{\underline{e}}
\def\ulineX{\underline{X}}
\def\ulinex{\underline{x}}
\def\ulineCalX{\underline{\mathcal{X}}}
\def\ulineCalM{\underline{\mathcal{M}}}
\def\ulineY{\underline{Y}}
\def\ulineX{\underline{X}}
\def\ulineY{\underline{Y}}
\def\ulinelambda{\underline{\lambda}}
\def\ulinehatx{\underline{\hat{x}}}
\def\olinekappa{\overline{\kappa}}
\def\3To1BC{$3-$to$-1$}
\def\define{:{=}~}
\def\naturals{\mathbb{N}}
\def\hatm{\hat{m}}
\newif\ifProofForORDBC
\def\parsec{\par\noindent}
\def\med{\medskip\parsec}
\def\tildea{\tilde{a}}
\def\tildem{\tilde{m}}
\def\CalF{\mathcal{F}}
\def\CalJ{\mathcal{J}}
\def\CalK{\mathcal{K}}
\def\CalH{\mathcal{H}}
\def\CalI{\mathcal{I}}
\def\CalP{\mathcal{P}}
\def\CalM{\mathcal{M}}
\def\CalV{\mathcal{V}}
\def\CalX{\mathcal{X}}
\def\EE{\mathbb{E}}
\def\PP{\mathbb{P}}
\def\11{\mathbbm{1}}
\def\3To1BC{$3-$to$-1$}
\def\define{:{=}~}
\def\naturals{\mathbb{N}}
\def\USB{\mathscr{U}\!\mathcal{S}\!\mathcal{B}-}
\def\TDelta{\mathcal{T}_{\delta}^{(n)}}
\def\pial{\pi_{l}^a}
\def\pimal{\pi_{m_1}^{a,l}}
\def\gammaJoint{\gamma^{a,l}_{m_1}}
\def\gammaXhatU{\gamma_{\hat{x}^n_1}^{u^n}}
\def\gammaXUhat{\gamma_{{x}^n_1}^{\hat{u}^n}}
\def\gammaXhatUhat{\gamma_{\hat{x}^n_1}^{\hat{u}^n}}
\def\deq{\mathrel{\ensurestackMath{\stackon[1pt]{=}{\scriptstyle\Delta}}}}
\def\define{\mathrel{\ensurestackMath{\stackon[1pt]{=}{\scriptstyle\Delta}}}}
\newcounter{relctr} 
\everydisplay\expandafter{\the\everydisplay\setcounter{relctr}{0}} 
\newcommand\labelrel[2]{%
  \begingroup
    \refstepcounter{relctr}%
    \stackrel{\textnormal{(\alph{relctr})}}{\mathstrut{#1}}%
    \originallabel{#2}%
  \endgroup
}
\newif\ifJournal
\newtheorem{theorem}{Theorem}
\newcommand{\comment}[1]{}
\begin{document}

\sloppy
\newtheorem{remark}{\it Remark}
\newtheorem{thm}{Theorem}
\newtheorem{corollary}{Corollary}
\newtheorem{definition}{Defnition}
\newtheorem{lemma}{Lemma}
\newtheorem{example}{Example}
\newtheorem{prop}{Proposition}

\title{\huge Achievable rate-region for $3-$User Classical-Quantum Interference Channel using Structured Codes}


\author{\IEEEauthorblockN{  Touheed Anwar Atif, Arun Padakandla and S. Sandeep
    Pradhan}\\
\IEEEauthorblockA{Department of Electrical Engineering and Computer Science,\\
University of Michigan, Ann Arbor, MI 48109, USA.\\
Email: \tt touheed@umich.edu, arunpr@utk.edu, pradhanv@umich.edu}
\thanks{This work was supported by NSF grant CCF-2007878.}
}
\maketitle
\thispagestyle{plain}
\pagestyle{plain}

\begin{abstract}


We consider the problem of characterizing an inner bound to the capacity region of a $3-$user classical-quantum interference channel ($3-$CQIC). The best known coding scheme for communicating over CQICs is based on unstructured random codes and employs the techniques of message splitting and superposition coding. For classical $3-$user interference channels (ICs), it has been proven that coding techniques based on coset codes - codes possessing algebraic closure properties - strictly outperform all coding techniques based on unstructured codes. In this work, we develop analogous techniques based on coset codes for $3$to$1-$CQICs - a subclass of $3-$user CQICs. We analyze its performance and derive a new inner bound to the capacity region of $3$to$1-$CQICs that subsume the current known largest and strictly enlarges the same for identified examples.

\end{abstract}

\section{Introduction}
The study of characterizing the \textit{capacity} of a communication medium sheds light on the \textit{structure} of an \textit{optimal coding scheme}. Indeed, the proof of achievability answers questions such as: What structural properties - empirical, algebraic etc. - of a code can enhance its information carrying capability? Can intelligent encoding and decoding rules harness such properties to enhance communication rates? Our findings in this article maybe viewed as providing answers to these questions in the context of classical-quantum (CQ) channels.


We consider the scenario of communicating over a $3-$user classical-quantum interference channel ($3-$CQIC) (Fig.~\ref{Fig:CommnOver3CQIC}). We undertake a Shannon-theoretic study and focus on the problem of 
for characterizing an inner bound to its capacity region. The current known coding schemes for CQICs \cite{sen2012achieving,savov2012network,sen2018inner,hirche2016polar} are based on unstructured codes. In this work, we propose a new coding scheme for a $3-$CQIC based on \textit{nested coset codes} (NCCs) - codes possessing algebraic structure. Analyzing its performance, we derive a new inner bound (Sec.~\ref{Sec:AchReg3to1CQIC}) to the capacity region of $3$to$1-$CQIC - a sub-class of $3-$CQICs. The inner bound is proven to subsume any current known inner bounds based on unstructured codes. Furthermore, we identify examples of $3$to$1-$CQICs for which the derived inner bound is strictly larger. These findings are a first step towards characterizing a new inner bound to the capacity region of a general $3-$CQIC.

The current approach of characterizing the performance limits of CQ channels is based on unstructured codes. 
The reason for this is that for several decades Shannon's
Unstructured codes remained to be the de facto ensemble of codes to be employed in information-theoretic study of any classical channels. Spurred by an ingenious work of K\"orner and Marton \cite{korner1979encode} and followed by findings in a multitude of network communication scenarios \cite{korner1979encode,krithivasan2011distributed,200710TIT_NazGas,200906TIT_PhiZam,201109TITarXiv_JafVis,padakandla2016achievable,padakandla2013computing}, it has been analytically proven that coding schemes designed using codes endowed with algebraic closure properties can strictly outperform all known coding schemes based on unstructured codes. These findings have proven that mere empirical properties are insufficient to achieve performance limits.

The goal of this work is to build on this  and enhance current known coding schemes in the context of CQ channels. Our experience with classical channels suggests that a first step toward this is
 to design and analyze coding schemes for basic building block channels. Indeed, the ensemble of NCCs studied in the simple context of point-to-point (PTP) channels form an important element of this work \cite{2020Bk_PraPadShi}. On the other hand, the mathematical complexity of analyzing CQ channels makes it challenging to generalize even well known coding schemes to the CQ setting. In the light of this, our work maybe viewed as a first step in designing new coding schemes for network CQ channels based on coset codes.

In the context of CQICs, the focus of current research is on $2-$user. There has been considerable effort \cite{fawzi2012classical,sen2012achieving,savov2012network,hirche2016polar,sen2018one,sen2018inner} at proving the achievability of the Han-Kobayashi rate-region (CHK) \cite{198101TIT_HanKob} for $2-$user ICs.
\begin{figure}
    \centering
    \includegraphics[width=3in]{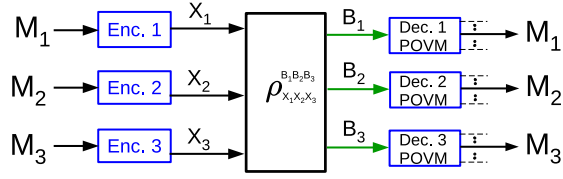}
    \caption{Communication over $3-$CQIC.}
    \label{Fig:CommnOver3CQIC}
\end{figure}
Analogous to these, one can leverage all known coding techniques - message splitting, superposition coding, Marton's binning - and derive an achievable rate region for a $3-$CQIC. See discussion in \cite[Sec.~III]{padakandla2016achievable}. This rate region, henceforth referred to as the $\USB$region contains the largest current known inner bound for any $3-$CQIC. In this work, we focus on $3$to$1-$CQICs (Defn.~\ref{Defn:3To1CQICs}) - a subclass of $3-$IC in which only one receiver (Rx) experiences interference. We propose a coding scheme based on NCCs and derive an inner bound for this sub-class that subsumes the $\USB$region in general, and strictly larger for identified examples (see Ex.~\ref{Ex:NonAdditive3CQIC}). 

To study coset code based coding schemes for basic building block channels, and for pedagogical reasons, we present our findings in two steps. In the first step (Thm. \ref{thm:3to1CQIC}), we demonstrate a construction of a $n$-letter POVM that  can simultaneously decode (i) the correct message and (ii) a bivariate interference component. This first step enables us study performance of NCCs for CQ-PTP channels (Sec.~\ref{Sec:NCCAchieveCQ-PTPCapacity}) and simultaneous decoding of unstructured and NCC codes (Sec.~\ref{Sec:AchReg3to1CQIC}). Our analysis of this simultaneous decoder builds on the technique proposed in \cite{201206TIT_FawHaySavSenWil}. In the next step, we leverage these building blocks and employ a multi-terminal simultaneous decoder \cite{sen2018one} to derive a new achievable rate region for $3$to$1-$CQICs. 
To aid a reader unfamiliar with the central idea of coset codes, we have provided a brief preliminary subsection (Sec.~\ref{sec:prelims}) to convey the utility of algebraic structure.

\section{Preliminaries and Problem Statement}
\label{sec:prelims}
We supplement notation in \cite{2013Bk_Wil} with the following.
For $n\in \mathbb{N}$, $[n] \define \left\{1,\cdots,n \right\}$. 
For a Hilbert space $\mathcal{H}$, $\mathcal{L}(\mathcal{H}),\mathcal{P}(\mathcal{H})$ and $\mathcal{D}(\mathcal{H})$ denote the collection of linear, positive and density operators acting on $\mathcal{H}$ respectively. 
All associated Hilbert spaces are assumed to be finite dimensional. 
We let an \underline{underline} denote an appropriate aggregation of objects. For example, $\ulineCalX \define \mathcal{X}_{1}\times \mathcal{X}_{2} \times \mathcal{X}_{3}$, $\ulinex \define (x_{1},x_{2},x_{3}) \in \ulineCalX$ and in regards to Hilbert spaces $\mathcal{H}_{Y_{i}}: i \in [3]$, we let $\mathcal{H}_{\ulineY} \define \otimes_{i=1}^{3}\mathcal{H}_{Y_{i}}$. We abbreviate the Positive Operator Valued Measure and Block-Length as POVM and B-L, respectively.

Consider a (generic) \textit{$3-$CQIC} $(\rho_{\ulinex} \in \mathcal{D}(\mathcal{H}_{\ulineY}): \ulinex \in \ulineCalX,\kappa_{j}:j \in [3])$ specified through (i) three finite sets $\mathcal{X}_{j}: j \in [3]$, (ii) three Hilbert spaces $\mathcal{H}_{Y_{j}}: j \in [3]$, (iii) a collection of density operators $( \rho_{\ulinex} \in \mathcal{D}(\mathcal{H}_{\ulineY} ): \ulinex \in \ulineCalX )$ and (iv) three cost functions $\kappa_{j}:\mathcal{X}_{j} \rightarrow [0,\infty) : j \in [3]$. The cost function is assumed to be additive, i.e., cost expended by encoder $j$ in preparing the state $\otimes_{t=1}^{n}\rho_{x_{1t}x_{2t}x_{3t}}$ is $\olinekappa_{j}^{n} \define \frac{1}{n}\sum_{t=1}^{n}\kappa_{j}(x_{jt})$. Reliable communication on a $3-$CQIC entails identifying a code. 
\begin{definition}
A \textit{$3-$CQIC code} $c=(n,\ulineCalM,\ulinee,\ulinelambda)$ of B-L $n$ consists of three (i) message index sets $[\mathcal{M}_{j}]: j \in [3]$, (ii) encoder maps $e_{j}: [\mathcal{M}_{j}] \rightarrow \mathcal{X}_{j}^{n}: j \in [3]$ and (iii) POVMs $\lambda_{j} \define \{ \lambda_{j,m}: \mathcal{H}_{j}^{\otimes n} \rightarrow \mathcal{H}_{j}^{\otimes n} : m \in [\mathcal{M}_{j}] \} : j \in [3]$. The average probability of error of the $3-$CQIC code $(n,\ulineCalM,\ulinee,\lambda^{[3]})$ is
\begin{eqnarray}
 \label{Eqn:AvgErrorProb}
 \overline{\xi}(\ulinee,\ulinelambda) \define 1-\frac{1}{\mathcal{M}_{1}\mathcal{M}_{2}\mathcal{M}_{3}}\sum_{\ulinem \in \ulineCalM}\tr(\lambda_{\ulinem}\rho_{c,\ulinem}^{\otimes n}).
 \nonumber
\end{eqnarray}
where $\lambda_{\ulinem} \define \otimes_{j=1}^{3}\lambda_{j,m_{j}}$, $\rho_{c,\ulinem}^{\otimes n} \define \otimes_{t=1}^{n}\rho_{x_{1t}x_{2t}x_{3t}}$ where $(x_{jt}:1\leq t \leq n) = x_{j}^{n}(m_{j}) \define e_{j}(m_{j})$ for $j \in [3]$. Average cost per symbol of transmitting message $\ulinem \in \ulineCalM \in \ulinetau(\ulinee|\ulinem) \define \left( \olinekappa_{j}^{n}(e_{j}(m_{j})): j \in [3] \right)$ and the average cost per symbol of $3-$CQIC code is $\ulinetau(\ulinee) \define \frac{1}{|\ulineCalM|}\sum_{\ulinem \in \ulineCalM}\ulinetau(\ulinee|\ulinem)$.
\end{definition}
\begin{definition}A rate-cost vector $(R_{1},R_{2},R_{3},\tau_{1},\tau_{2},\tau_{3}) \in [0,\infty)^{6}$ is \textit{achievable} if there exists a sequence of $3-$CQIC code $(n,\ulineCalM^{(n)},\ulinee^{(n)},\ulinelambda^{(n)})$ for which $\displaystyle\lim_{n \rightarrow \infty}\overline{\xi}(\ulinee^{(n)},\ulinelambda^{(n)}) = 0$,
\begin{eqnarray}
 \label{Eqn:3CQICAchievability}
 \lim_{n \rightarrow \infty} n^{-1}\log \mathcal{M}_{j}^{(n)} = R_{j}, \mbox{ and }\lim_{n \rightarrow \infty} \ulinetau(\ulinee)_{j} \leq \tau_{j} :j \in [3].
 \nonumber
\end{eqnarray}
\end{definition}
The capacity region $\mathcal{C}(\rho_{\ulinex}:\ulinex \in \mathcal{X})$ of the $3-$CQIC $(\rho_{\ulinex} \in \mathcal{D}(\mathcal{H}_{\ulineY}): \ulinex \in \ulineCalX)$ is the set of all achievable rate-cost vectors.
We define below the sub-class of $3$to$1-$CQICs. 
\begin{definition}
\label{Defn:3To1CQICs}
A $3-$CQIC $(\rho_{\ulinex} \in \mathcal{D}(\mathcal{H}_{\ulineY}): \ulinex \in \ulineCalX)$ is a $3$to$1-$CQIC if (i) for every $\Lambda \in \mathcal{P}(\mathcal{H}_{Y_{2}})$, $\Gamma \in \mathcal{P}(\mathcal{H}_{Y_{3}})$, $\tr((I \otimes \Lambda \otimes I)\rho_{x_{1}x_{2}x_{3}}) = \tr((I \otimes \Lambda \otimes I)\rho_{\hatx_{1}\hatx_{2}\hatx_{3}})$ for every $\ulinex,\ulinehatx \in \ulineCalX$ satisfying $x_{2}=\hatx_{2}$, and (ii) $\tr((I \otimes I \otimes \Gamma)\rho_{x_{1}x_{2}x_{3}}) = \tr((I \otimes I \otimes \Gamma)\rho_{\hatx_{1}\hatx_{2}\hatx_{3}})$ for every $\ulinex,\ulinehatx \in \ulineCalX$ satisfying $x_{3}=\hatx_{3}$.
\end{definition}
\subsection{Illustration of the Central Idea}
\label{Sec:ExAndCentralIdea}
The goal here is to demonstrate the utility of algebraic closure in coding schemes for $3-$ICs. While, we state Ex.~\ref{Ex:Additive3CQIC} in the context of $3$to$1-$CQICs, we discus in the context of a classical $3$to$1-$IC.
The latter provides an exposition on the utility of algebraic closure in network scenarios.
\begin{example}
 \label{Ex:Additive3CQIC}
 Let $\mathcal{X}_{j}=\mathcal{X}=\{0,1\}, \mathcal{H}_{j}=\mathcal{H}$, $\sigma^{(j)}_{b} \in \mathcal{D}(\mathcal{H})$ for $j \in [3]$ and $b \in \mathcal{X}$. For $\ulinex \in \ulineCalX$, let 
$\rho_{\ulinex} \deq \sigma^{(1)}_{x_{1}\oplus x_{2} \oplus x_{3}} \otimes \sigma^{(2)}_{x_{2}} \otimes \sigma^{(3)}_{x_{3}}$.
 For $x \in \{0,1\}$, we let $\kappa_{1}(x)=x$ and $\kappa_{k}(x)=0$ for $k =2,3$.
\end{example}
Let $\mathcal{H} = \mathbb{C}^{2}$, $\sigma_{b}(\eta) \define (1-\eta)\ket{b}\bra{b}+\eta \ket{1-b}\bra{1-b}$ for $b \in \mathcal{X}$, $\eta \in [0,1]$. Let $\sigma_{b}^{(1)}\define \sigma_{b}(\delta_{1})$ and $\sigma^{(2)}_{b}\define \sigma^{(3)}_{b}\define \sigma_{b}(\delta)$ for $b \in \mathcal{X}$ and some specified $\delta, \delta_1 \in (0,1)$ . In addition, let $\tau \in (0,\frac{1}{2})$ specify a Hamming cost constraint on Tx $1$'s input. With this choice, one identifies the above example with a $3$to$1-$IC $Y_{1}= X_{1} \oplus X_{2} \oplus X_{3} \oplus N_{1}$, $Y_{k} = X_{k} \oplus N_{k} : k=2,3$ with $N_{1}\sim $ Ber$(\delta_{1})$, $N_{k} \sim$ Ber$(\delta)$ $k=2,3$ being independent. 
Tx $k \in \{2,3\}$ splits its information into $U_{k},X_{k}$. Rx $1$ decodes $U_{2},U_{3},X_{1}$, while Rx $k \in \{2,3\}$ decodes $U_{k},X_{k}$. So long as $H(U_{k}|X_{k}) > 0$ for either $k \in \{2,3\}$, it can be shown that $H(X_{2}\oplus X_{3}|U_{2},U_{3}) > 0$ implying Tx-Rx $1$ cannot achieve $h_{b}(\delta_{1}*\tau)-h_{b}(\delta_{1})$ - its interference free cost constrained capacity. If $h_{b}(\delta_{1}*\tau)-h_{b}(\delta_{1}) + 2(1-h_{b}(\delta)) > 1-h_{b}(\delta_{1})$, it can be shown that $H(U_{k}|X_{k}) > 0$ for either $k \in \{2,3\}$ precluding Tx-Rx $1$ achieving a rate $h_{b}(\delta_{1}*\tau)-h_{b}(\delta_{1})$ using unstructured coding. 
Suppose users $2,3$ employ codes of rate $1-h_{b}(\delta)$ that are cosets of the \textit{same linear code}, then the above condition does not preclude Tx-Rx $1$ from achieving a rate  $h_{b}(\delta_{1}*\tau)-h_{b}(\delta_{1})$, so long as $\tau * \delta < \delta$, even if $1+h_{b}(\tau*\delta_{1}) > 2h_{b}(\delta)$. 
The reason is, user $2$ and $3$'s codebooks when added is another coset of the same rate $1-h_{b}(\delta)$. Rx $1$ can just decode this interference if $h_{b}(\delta_{1}*\tau)-h_{b}(\delta_{1})+1-h_{b}(\delta) < 1- h_{b}(\delta_{1})$ which is equivalent to $\tau * \delta < \delta < \frac{1}{2}$. 
Hence, for this $3$to$1-$IC, if $h_{b}(\delta_{1}*\tau)-h_{b}(\delta_{1}) + 2(1-h_{b}(\delta)) > 1-h_{b}(\delta_{1})$ and $\tau * \delta < \delta < \frac{1}{2}$ hold, then coset codes are strictly more efficient than unstructured codes. 

\section{Rate region using Coset Codes for 3to1-CQIC} 
\label{Sec:AchReg3to1CQIC}
In this section we consider the above described 3to1-CQIC and provide an achievable rate-region.
\begin{theorem} \label{thm:3to1CQIC}
Given a $3to1$-CQIC $(\rho_{\ulinex} \in \mathcal{D}(\mathcal{H}_{\ulineY}): \ulinex \in \ulineCalX, \kappa_j:j\in[3])$ and a PMF $p_{V_2V_3X_1X_2X_3} = p_{X_1}p_{V_2X_2}p_{V_3X_3}$ on $\mathcal{V}_2\cross\mathcal{V}_3\cross\mathcal{X}_2\cross\mathcal{X}_3$ where $\mathcal{V}_2 = \mathcal{V}_3 = \mathcal{F}_q$, a rate-cost triple $(R_1,R_2,R_2,\tau_1,\tau_2,\tau_3)$ is achievable if it satisfies the following 
\begin{align}
    R_1 &\leq I(Y_1;X_1|U)_{\sigma_1},\quad 
    R_j  \leq I(Y_j;V_j)_{\sigma_2},\nonumber \\
    R_j &\leq \min\{H(V_2),H(V_3)\} - H(U) + I(Y_1;U|X_1)_{\sigma_1},\nonumber\\
    R_1 + R_j &\leq \min\{H(V_2),H(V_3)\} - H(U) + I(Y_1;V_1U)_{\sigma_1}, \nonumber 
\end{align}
for $j=2,3,$ and $\EE[\kappa_{j}(X_j)] \leq \tau_j: j\in[3]$,  where
\begin{align}
    \sigma_1^{\ulineY}&\deq  \sum_{x_1\in\mathcal{X}_1,u \in \CalF_q}p_{X_1}(x_1)p_{U}(u)\rho_{x_1,u}^{\ulineY}\otimes \ketbra{x_1}\otimes \ketbra{u},\nonumber \\
    \rho_{x_1,u}^{\ulineY} &\deq \sum_{v_2,v_3}\sum_{x_2,x_3}p_{V_2,V_3,X_2,X_3|U}(v_2,v_3,x_2,x_3|u)\rho_{\ulinex}^{\ulineY} \nonumber \\
    \sigma_2 &= \!\!\sum_{v_1,v_2,v_3}\! p_{\ulineX V_2V_3}(\ulinex,v_2,v_3)\rho_{\ulinex}^{\ulineY} \otimes \ketbra{v_2}\otimes\ketbra{v_3}, \nonumber
\end{align}for $U \deq V_2 \oplus V_3,$ and $\{\ket{v_2}\}$, $\{\ket{v_3}\}$ as some 
orthonormal basis on $\CalH_Y$.
\end{theorem}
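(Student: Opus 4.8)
\medskip
\noindent\textbf{Proof outline.} The plan is a random coding argument in which Transmitters~$2$ and $3$ use nested coset codes (NCCs) over $\CalF_q$ erected on a \emph{common} algebraic scaffold, Transmitter~$1$ uses an unstructured i.i.d.\ code, and Receiver~$1$ employs a single $n$-letter POVM that simultaneously decodes its own message and the bivariate interference $u^n=v_2^n\oplus v_3^n$. I would fix a fine linear code $\Lambda_f$ of rate $\hat R$ and, for $j\in\{2,3\}$, a nested coarse code $\Lambda_{c,j}\subseteq\Lambda_f$ of rate $\hat R-R_j$; the $q^{nR_j}$ cosets of $\Lambda_{c,j}$ in $\Lambda_f$ index the messages of Tx~$j$, and to encode $m_j$ the encoder picks, inside the designated coset, a $p_{V_j}$-typical sequence $v_j^n(m_j)$, a selection that succeeds with probability tending to $1$ exactly because $\hat R-R_j\ge \log q - H(V_j)$. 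The channel inputs are then $x_j^n$ drawn conditionally i.i.d.\ through $p_{X_j|V_j}(\cdot\,|\,v_j^n(m_j))$ for $j=2,3$ and $x_1^n(m_1)$ drawn i.i.d.\ through $p_{X_1}$. The crucial structural fact, to be isolated as a lemma, is algebraic closure: $v_2^n(m_2)\oplus v_3^n(m_3)$ always lies in the \emph{single} code $\Lambda_f$, so the interference seen by Rx~$1$ ranges over an ensemble of only $\approx q^{n\hat R}$ sequences and is itself a coset codeword. I would also import from \cite{2020Bk_PraPadShi} the moment statistics of the NCC ensemble (uniform single-codeword marginals, pairwise independence of distinct codewords) needed below.

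\emph{Receivers $2$ and $3$.} By the $3$to$1$ structure these see no interference, so each is a point-to-point CQ decoder for an NCC; invoking the NCC analysis over CQ-PTP channels (Sec.~\ref{Sec:NCCAchieveCQ-PTPCapacity}) the message $m_j$ is decoded with vanishing error provided $R_j\le I(Y_j;V_j)_{\sigma_2}$, with $\sigma_2$ exactly as in the statement. Combining this packing requirement with the covering constraint $\hat R-R_j\ge\log q-H(V_j)$ forced at the encoder, and eliminating the shared parameter $\hat R$, produces the terms $\min\{H(V_2),H(V_3)\}-H(U)$ appearing in the theorem.

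\emph{Receiver $1$: the simultaneous decoder.} This is the heart of the argument and the step I expect to be the main obstacle. Following \cite{201206TIT_FawHaySavSenWil}, from the $Y_1$-marginal of $\sigma_1$ I would build the weakly-typical projector of the fully averaged state on $\mathcal{H}_{Y_1}^{\otimes n}$, the conditionally-typical projector of $\rho_{x_1^n}$ given $x_1^n$, the conditionally-typical projector of $\rho_{u^n}$ given $u^n$, and the doubly-conditioned projector of $\rho_{x_1^n,u^n}$; assembling measurement operators indexed by $(m_1,\hat u^n)$ and symmetrizing them into a legitimate POVM via the Hayashi--Nagaoka operator inequality (equivalently, a pretty-good measurement). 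The Hayashi--Nagaoka bound decomposes the error into: the correct-codeword term, which vanishes by the gentle-measurement lemma together with concentration of these projectors under $\sigma_1$; the event that $x_1^n$ is wrong while $u^n$ is correct, contributing $\approx q^{nR_1}\,2^{-nI(Y_1;X_1|U)_{\sigma_1}}$ and hence $R_1\le I(Y_1;X_1|U)_{\sigma_1}$; the event that $x_1^n$ is correct while $u^n$ is wrong, where summing over the $\approx q^{n\hat R}$ competing interference sequences (using only their \emph{pairwise} independence) and substituting $\hat R$ yields $R_j\le\min\{H(V_2),H(V_3)\}-H(U)+I(Y_1;U|X_1)_{\sigma_1}$; and the event that both are wrong, which yields the stated sum-rate constraint. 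The obstacle is that these union bounds must be carried out simultaneously over a \emph{structured} codebook for $u^n$ and an \emph{unstructured} one for $x_1^n$: one must control, at the operator level, the cross terms produced by summing square-root measurement operators over the coset ensemble, which is precisely why the $n$-letter POVM of \cite{201206TIT_FawHaySavSenWil} (rather than a naive successive-cancellation decoder) is needed and why the NCC second-moment estimates are essential.

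\emph{Cost and derandomization.} Finally I would enforce $\EE[\kappa_j(X_j)]\le\tau_j$ by expurgating codewords violating a per-user cost-typicality condition, which perturbs rates and error by $o(1)$; since the ensemble-averaged error vanishes whenever the displayed inequalities are strict, a deterministic code attaining the claim exists, and a continuity/closure argument extends achievability to the stated region.
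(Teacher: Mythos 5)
Your proposal is correct and follows essentially the same route as the paper: a common nested-coset-code scaffold for Transmitters 2 and 3 (so that $v_2^n\oplus v_3^n$ lies in a single coset code), an unstructured i.i.d.\ code for Transmitter 1, point-to-point NCC decoding at Receivers 2 and 3 reduced to the CQ-PTP module, and a square-root simultaneous decoder at Receiver 1 analyzed via Hayashi--Nagaoka with the three binding error events ($m_1$ wrong, $u^n$ wrong, both wrong) and pairwise independence of coset codewords, exactly as in Propositions~\ref{prop:LemmaT_21}--\ref{prop:LemmaT_23}. The operator-level difficulty you flag is precisely what the paper resolves with its pinching lemma (Lemma~\ref{lem:LemmaPinching}) and the typical-projector operator inequalities in the appendices.
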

\begin{example}
 \label{Ex:NonAdditive3CQIC}
 Let $\mathcal{X}_{j}=\mathcal{X}=\{0,1\}, \mathcal{H}_{j}=\mathbb{C}^2$ and let \\
 \begin{align}
 \sigma_{0}\deq\left[
  \begin{array}{cc} 2/3 &  0 \\ 0 & 1/3 \end{array} \right], \mbox{ and }
 \sigma_1\deq\left[
  \begin{array}{cc} 1/2 &  1/6 \\ 1/6 & 1/2 \end{array} \right].\nonumber
\end{align}
 \begin{align*} 
 \label{Eqn:Add3CQICExDensOpDesc}
  \rho_{\underline{x}} &\deq  [ (1-\delta_1)\sigma_{x_{1}\oplus x_{2} \oplus x_{3}} +\delta_1
\sigma_{x_{1}\oplus x_{2} \oplus x_{3}
\oplus 1}]  \otimes[(1-\delta) \sigma_{x_{2}} + \delta\sigma_{x_{2}\oplus 1}]
 \otimes [(1-\delta)\sigma_{x_{3}}+ \delta\sigma_{x_{3}\oplus 1}], \nonumber
 \end{align*}
 for $\ulinex \in \ulineCalX,$  where $N_1$, $N_2$ and $N_3$ are mutually independent Bernoulli random variables with biases
 $\delta_1,\delta$ and $\delta$, respectively. We let $\delta_1,\delta \in (0,0.5)$. 
 For $x \in \{0,1\}$, we let $\kappa_{1}(x)=x$ and $\kappa_{k}(x)=0$ for $k =2,3$.
Let $\rho(p) := p\sigma_0+(1-p)\sigma_1$.
Note that $\rho(p)$ and $\rho(1-p)$ do not 
commute except for $p=0.5$. 
 It can be checked that $S(\rho(p))$ is a symmetric concave function of $p \in (0,1)$.
 Consider the case when $\tau * \delta_1 \leq \delta$.
 Using NCC, the three users can achieve their PTP capacities simultaneously:
 $S(\rho(\tau * \delta_1))-S(\rho(\delta_1))$, $S(\rho(0.5))-S(\rho(\delta))$, and
 $S(\rho(0.5))-S(\rho(\delta))$, respectively. These correspond to the rates given by
 $I(X_1;B_1|X_2 \oplus X_3)$, $I(X_2;Y_2)$, and $I(X_3;Y_3)$.
 One can show that if
$ S(\rho(\tau *\delta_1))  -S(\rho(\delta_1))+2(S(\rho(0.5))-S(\rho(\delta)))  > S(\rho(0.5))-S(\rho(\delta_1))$,
 then using unstructured codes, all three users cannot achieve their respective capacities
 simultaneously. This condition is equivalent to the condition:
$S(\rho(\tau * \delta_1))+S(\rho(0.5)) >2S(\rho(\delta))$.
Hence by choosing $\tau * \delta_1=\delta$, and $\delta<0.5$, we see that NCC-based coding scheme enables all users achieve their respective capacities simultaneously, while this is not possible in unstructured coding scheme.
 
 \end{example}
 
\begin{proof}
We divide the proof into three parts entailing the encoding, decoding and error analysis techniques.
\subsection{Encoding Technique}
Consider a PMF $p_{V_2V_3\ulineX}$ on $\CalV_2\times\CalV_3\times\ulineCalX$ with $\CalV_2 = \CalV_3 = \CalF_q$, and choose $n $ and $R_j: j=[3]$ as non-negative integers. For encoder 1, we use the random coding strategy and construct a codebook $\mathcal{C}_1 \deq \{x_1(m_1): m_1 \in [2^{nR_1}]\}$ on $\mathcal{X}_1$ using the marginal PMF $p_{X_1}^n$. Let $e_1(m_1) \deq x_1(m_1): m_1 \in [2^{nR_1}] $ denote this encoding map. However, to construct the codebooks for encoders 2 and 3, we employ a technique based on nested coset codes. Since, the structure and encoding rules are identical for the these two encoders, we describe it using a generic index $j \in \{2,3\}.$ Let $e_j: \CalF_q \rightarrow \CalX_j^n: j=1,2$ denote the encoding maps. We define an NCC as
follows. 
\begin{definition}
 \label{Defn:NCC}
 An $(n,k,l,g_{I},g_{O/I},b^{n})$ NCC built over a finite field $\mathcal{V}=\mathcal{F}_{q}$ comprises of (i) generator matrices $g_{I} \in \mathcal{V}^{k \times n}$, $g_{O/I} \in \mathcal{V}^{l \times n}$ (ii) a dither/bias vector $b^{n}$, an encoder map $e :\mathcal{V}^{l} \rightarrow \mathcal{V}^{k}$. We let $v^{n}(a,m) = ag_{I}\oplus_{q}mg_{O/I}\oplus_{q}b^{n}: (a,m) \in \mathcal{V}^{k} \times \mathcal{V}^{l}$ denote elements in its range space.
\end{definition}
Consider two NCCs with parameters $(n,k,l,g_{I},g_{O/I},Y_j^n) : j\in \{2,3\}$ defined using the above definition,
with their range spaces denoted by $v_j^n(a_j,m_j) : j \in \{2,3\},$ respectively. Note that the choice of $g_{I}$ and $g_{O/I}$ are identical for the two NCCs. Further, let $\theta_j(m_j) \deq \sum_{a_j \in \CalF_q^k}\11_{\{v_j^n(a_j,m_j) \in \TDelta(p_{V_j})\}}$. For every message $m_j$ the encoder $j$ looks for a codeword in the coset $v_j^n(a_j,m_j) :a_j \in \mathcal{F}_q^k $ that is typical according to $p_{V_j}$. If it finds at least one such codeword, one of them,  say $v_j^n(\alpha_j(m_j),m_j),$ is chosen randomly and uniformly. $e_j(m_j)$ is generated according to $p_{X_j|V_j}^n(\cdot |v_j^n(\alpha_j(m_j),m_j))$ and is transmitted on the CQIC. Otherwise, if it finds none in the coset that is typical according to $p_{V_j}$,, and error is declared. This specifies the encoding rule for the three encoders. Now we describe the decoding rule.

\subsection{Decoding Description}
We begin the describing first decoder. Unlike a generic $3$ CCIC decoding technique of recovering the three messages, the decoder here is constructs its POVM to recover its own message and only a bi-variate function of the two interfering messages. Since, the POVMs here require joint typicality of two messages, we employ the POVM construction similar to \cite{fawzi2012classical}, while incorporating the bi-variate function being decoded.
For this, we equip the decoder 1 with the NCC $(n,k,l,g_{I},g_{O/I},b^n),$ where $b^n = b_1^n\oplus Y_2^n.$. We define $u^n(a,l)$ as the range space of the above NCC. Toward specifying the decoding POVM, we let $\pi_{m_1}\deq \pi_{x_1^n(m_1)}
$, $\pi_{a,l} \deq \pi_{u^n(a,l)}\11_{\{u^n(a,l) \in \TDelta(p_U)\}}$, $\pi_{m_1}^{a,l}\deq \pi_{x_1^n(m_1),u^n(a,l)}\11_{\{(x_1^n(m_1),u^n(a,l)) \in \TDelta(p_{X_1U})\}}$,  denote the conditional typical projectors (as defined in \cite[Def. 15.2.4]{2013Bk_Wil}) with respect to the states 
$ \rho_{x_1}^{Y_1} \deq \sum_{u}p_{U}(u)\rho_{x_1,u}^{Y_1}$, $ \rho_{u}^{Y_1} \deq \sum_{x_1} p_{X_1}(x_1) \rho_{x_1,u}^{Y_1}$ and $ \rho_{x_1,u}^{Y_1}$
, respectively, where $ \rho_{x_1,u}^{Y_1}$ is as defined in the theorem statement. In addition, let $\pi_{\rho}^{Y_1}$ denote the typical projector with respect to the state $\rho \deq \sum_{x_1,u}p_{X_1}(x_1)p_{U}(u)\rho_{x_1,u}^{Y_1}$. Using these projectors, we define the POVM $\lambda_{\CalI_1}^{Y_1} \deq \{\lambda^{Y_1}_{m_1,a,l}\}$, where
\begin{align}
    \lambda^{Y_1}_{m_1,a,l} \deq \Big(\sum_{\substack{\hat{m}_1 \in \\ [2^{nR_1}]}}\sum_{\substack{\hat{a}\in \CalF_q^k\\\hat{l}\in \CalF_q^l }}\gamma^{\hat{a},\hat{l}}_{\hat{m}_1}\Big)^{-1/2}\hspace{-5pt}\gamma_{m_1}^{a,l}    \Big(\!\sum_{\substack{\hat{m}_1 \in \\ [2^{nR_1}]}}\sum_{\substack{\hat{a}\in \CalF_q^k\\\hat{l}\in \CalF_q^l }}\gamma^{\hat{a},\hat{l}}_{\hat{m}_1}\Big)^{-1/2}, \nonumber
\end{align}
$\lambda_{-1} \deq I - \sum_{{m_1}\in [2^{nR_1}]}\sum_{{a} \in \CalF_q^k}\sum_{{l} \in \CalF_q^l} \lambda^{Y_1}_{m_1,a,l}$ and $ \gamma^{a,l}_{m_1} \deq \pi_{\rho}\pi_{m_1}\pi_{m_1}^{a,l}\pi_{m_1}\pi_{\rho}$.
Having described the first decoder, we move on to describing the other two. Since these two decoders are identical, we use a generic variable $j$ to refer to each of these. We define $\pi_{\rho}^{j}$ and $\pi_{a_j,m_j}^{j}$ as the typical \cite[Def. 15.1.3]{2013Bk_Wil} and the conditional typical projectors \cite[Def. 15.2.4]{2013Bk_Wil} with respect to the states $\rho^{Y_j} \deq \sum_{v_j}p_{V_j}(v_j)\rho_{v_j}^{Y_j}$ and $\rho_{v_j}^{Y_j},$ respectively. Using this, we construct the POVM $\lambda_{\CalI_j}^{Y_j} \deq \{\lambda^{Y_j}_{m_j,a_j}\}$, for encoder $j$ as 
\begin{align}
\lambda_{a_j,m_j}^{Y_j}\! &\deq\! \Big(\!\!\sum_{\hat{a_j} \in \CalF_q^k}\!\sum_{\hat{m_j}\in \CalF_q^l}\!\!\!\zeta_{\hat{a_j},\hat{m_j}}\!\Big)^{-1/2}\hspace{-15pt}\zeta_{a_j,m_j}    \Big(\!\!\sum_{\hat{a_j} \in \CalF_q^k}\!\sum_{\hat{m_j}\in \CalF_q^l}\!\!\!\zeta_{\hat{a_j},\hat{m_j}}\!\Big)^{-1/2}, \nonumber
\end{align}
$\lambda_{-1}^{Y_j}\define I-\sum_{m \in \CalV^{l}}\sum_{a \in \CalV^{k}}\lambda_{a,m}^{Y_j}$ and $\zeta_{a_j,m_j} \define \pi_{\rho}^{j}\pi_{a_j,m_j}^{j}\pi_{\rho}^{j}$. Lastly, we provide the distribution of the random NCC.\\
\paragraph*{Distribution of the Random Coset Code} : The objects $g_{I}\in \mathcal{V}^{k \times n},g_{O/I} \in \mathcal{V}^{l \times n},b^{n} \in \mathcal{V}^{n}$ and the collection $(a_{m} \in s(m): m \in \mathcal{V}^{l})$ specify a NCC CQ-PTP code unambiguously. A distribution for a random code is therefore specified through a distribution of these objects. We let upper case letters denote the associated random objects, and obtain
\begin{eqnarray}
 \CalP\!\left( \begin{array}{c} G_{I}=g_{I},G_{O/I}=g_{O/I}\\B^{n}_j=b^{n}_j,\alpha_j(m_j)=a_{j}: m_j \in \CalF_q^l \end{array} \right) \!= q^{-(k+l+1)n}\prod_{m \in \CalF_q^{l}}\frac{1}{\Theta_j(m_j)}.\nonumber
\end{eqnarray}
\subsection{ Error Analysis}
As in a general information theoretic setting, we derive upper bounds on probability of error $\overline{\xi}(\ulinee,\ulinelambda) $ by averaging over the random code of the first user and the ensemble of nested coset codes used by the other two users. The error probability of this code is given by
\begin{align}
 \overline{\xi}(\ulinee,\ulinelambda) \define 1-\frac{1}{\mathcal{M}_{1}\mathcal{M}_{2}\mathcal{M}_{3}}\sum_{\ulinem \in \ulineCalM}\tr(\lambda_{\ulinem}^{\ulineY}\rho_{c,\ulinem}^{\otimes n}). 
\end{align} 
Using the  inequality 
\begin{align}
    \displaystyle (I - \lambda_{\uline m}^{\ulineY}) \leq (I - \lambda_{m_1}^{Y_1})\otimes I^{Y_2Y_3} + (I - \lambda_{m_1}^{Y_2})\otimes I^{Y_1Y_3} + (I - \lambda_{m_1}^{Y_3})\otimes I^{Y_1Y_2},
\end{align}
from \cite{abeyesinghe2009mother}, we get  $\overline{\xi}(\ulinee,\ulinelambda) \leq S_1 +S_2 + S_3,$ where 
\begin{align}
    S_j \deq \frac{1}{\underline{\mathcal{M}}}\sum_{\ulinem \in \ulineCalM}\tr(\left((I-\lambda_{m_j}^{Y_j})\otimes I^{\ulineY\backslash B_i}\right) \rho_{c,\ulinem}^{\otimes n}): j \in [3].\nonumber
\end{align}
Using the definition of $3to1$-CQIC, we can further simplify $S_2$ and $S_3$ as
\begin{align}
    S_j = \frac{1}{\CalM_j}\sum_{m_j}\tr((I-\lambda_{m_j}^{Y_j})\rho_{e(m_j)}) : j\in \{2,3\}. \nonumber
\end{align}

We first consider the terms $S_2, S_3$
. Note that, due to the nature of the $3$to$1-$CQIC problem definition, the terms $S_2$ and $S_3$ are identical to a point-to-point (PTP) setup. Therefore, to bound these terms we construct a CQ-PTP problem setup in the sequel (see Sec.~\ref{Sec:NCCAchieveCQ-PTPCapacity}) 
and employ that as a module in bounding $S_2, S_3$. The following proposition formalizes this.


\begin{prop}\label{prop:Lemma for S_23}
There exists  $\epsilon_{S}(\delta), \delta_{S}(\delta),$ such that for  all 
$\delta$ and sufficiently large $n$, we have $\EE\left[S_2+S_3\right] \leq \epsilon_{{S}}(\delta) $, if $ R_j \leq  I(Y_j;V_j )_{\sigma_2}  + \delta_{S} : j= 2,3$, 
where   
$\epsilon_{{S}},\delta_{S} \searrow 0$ as $\delta \searrow 0$.
\end{prop}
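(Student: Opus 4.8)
The plan is to exploit the structure of a $3\mathrm{to}1$-CQIC (Definition~\ref{Defn:3To1CQICs}) to recognize $S_2$ and $S_3$ as the ensemble-averaged decoding errors of two point-to-point (PTP) CQ coding problems, and then to invoke the analysis of NCCs over CQ-PTP channels developed in Sec.~\ref{Sec:NCCAchieveCQ-PTPCapacity} as a module. For $j\in\{2,3\}$ with $\{j'\}=\{2,3\}\setminus\{j\}$, Definition~\ref{Defn:3To1CQICs} guarantees that the partial trace of $\rho_{\ulinex}^{\ulineY}$ over $\CalH_{Y_1}\otimes\CalH_{Y_{j'}}$ depends on $\ulinex$ only through $x_j$; denote it $\rho_{x_j}^{Y_j}$. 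Hence $S_j=\frac{1}{\CalM_j}\sum_{m_j}\tr\!\big((I-\lambda_{m_j}^{Y_j})\,\rho_{x_j^n(m_j)}^{Y_j}\big)$ with $\lambda_{m_j}^{Y_j}\deq\sum_{a_j}\lambda_{a_j,m_j}^{Y_j}$, and since $e_j$ draws $x_j^n$ conditionally i.i.d.\ from $p_{X_j|V_j}^n(\cdot\,|\,v_j^n(\alpha_j(m_j),m_j))$, averaging over that layer of randomness and using linearity of the trace replaces $\rho_{x_j^n(m_j)}^{Y_j}$ by $\rho_{v_j^n(\alpha_j(m_j),m_j)}^{Y_j}\deq\otimes_{t=1}^{n}\rho_{v_{j,t}}^{Y_j}$ (here $v_{j,t}$ is the $t$-th symbol of $v_j^n(\alpha_j(m_j),m_j)$ and $\rho_{v_j}^{Y_j}=\sum_{x_j}p_{X_j|V_j}(x_j|v_j)\rho_{x_j}^{Y_j}$). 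Thus $\EE[S_j]$ is exactly the ensemble-averaged message-decoding error of the NCC $(n,k,l,g_I,g_{O/I},b_j^n)$ used over the CQ-PTP channel $\{\rho_{x_j}^{Y_j}:x_j\in\CalX_j\}$ under input law $p_{V_jX_j}$ --- the object treated in Sec.~\ref{Sec:NCCAchieveCQ-PTPCapacity} --- and applying that module with the $(Y_j,V_j)$-marginal of $\sigma_2$ gives the claim, with the slack $\delta_S\searrow 0$ as $\delta\searrow 0$.

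First I would apply the Hayashi--Nagaoka operator inequality to the POVM $\{\lambda_{a_j,m_j}^{Y_j}\}\cup\{\lambda_{-1}^{Y_j}\}$, taking the ``success'' subset for the transmitted $m_j$ to be $\{(a_j,m_j):a_j\in\CalF_q^{k}\}$; this gives, for each code realization,
\begin{align}
\tr\!\Big(\Big(I-\!\sum_{a_j}\lambda_{a_j,m_j}^{Y_j}\Big)\rho_{v_j^n(\alpha_j(m_j),m_j)}^{Y_j}\Big)
&\le 2\,\tr\!\big((I-\zeta_{\alpha_j(m_j),m_j})\,\rho_{v_j^n(\alpha_j(m_j),m_j)}^{Y_j}\big)\nonumber\\
&\quad+\,4\!\!\sum_{\substack{(\hat{a}_j,\hat{m}_j):\\ \hat{m}_j\neq m_j}}\!\!\tr\!\big(\zeta_{\hat{a}_j,\hat{m}_j}\,\rho_{v_j^n(\alpha_j(m_j),m_j)}^{Y_j}\big).\nonumber
\end{align}
For the first term, conditioned on the encoder not failing (i.e.\ $\Theta_j(m_j)\ge 1$) the selected codeword $v_j^n(\alpha_j(m_j),m_j)$ is $p_{V_j}$-typical, so the standard trace estimates for typical and conditionally typical projectors \cite[Ch.~15]{2013Bk_Wil} together with a gentle-measurement argument bound it by $\varepsilon(\delta)\searrow 0$. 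The encoder-failure probability is controlled by a second-moment (Chebyshev) estimate on $\Theta_j(m_j)=\sum_{a_j}\11_{\{v_j^n(a_j,m_j)\in\TDelta(p_{V_j})\}}$: its mean is $\doteq q^{k}2^{-n(\log q-H(V_j))}$, and --- using that the random dithers $b_j^n$ render any two distinct codewords pairwise independent and marginally uniform over $\CalF_q^n$ --- its variance is small relative to the square of its mean, so $\PP(\Theta_j(m_j)=0)\to 0$ provided $\tfrac{k}{n}\log q>\log q-H(V_j)$.

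Next I would bound the competing-codeword sum. Taking the expectation over the code, the transmitted codeword's state averages to $(\rho^{Y_j})^{\otimes n}$ (up to the mild size-biasing caused by the uniform selection $\alpha_j(m_j)$ among typical codewords), and each $p_{V_j}$-typical competing codeword is statistically independent of it; hence each such term contributes at most $2^{-n(I(Y_j;V_j)_{\sigma_2}-c\delta)}$, via the rank bound $\tr(\pi_{\hat{a}_j,\hat{m}_j}^{j})\le 2^{n(H(Y_j|V_j)_{\sigma_2}+c'\delta)}$ on the conditionally typical projector and the typical-projector bound $\pi_\rho^{j}(\rho^{Y_j})^{\otimes n}\pi_\rho^{j}\le 2^{-n(H(Y_j)_{\sigma_2}-c''\delta)}\pi_\rho^{j}$. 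The decisive quantitative point is that the expected number of $p_{V_j}$-typical codewords with $\hat{m}_j\neq m_j$ is $\doteq q^{k+l}2^{-n(\log q-H(V_j))}$, which --- with $\tfrac{k}{n}\log q$ chosen just above the encoding threshold $\log q-H(V_j)$ --- collapses to $\doteq 2^{nR_j}$; the competing sum is therefore $\lesssim 2^{nR_j}2^{-n(I(Y_j;V_j)_{\sigma_2}-c\delta)}\to 0$ exactly when $R_j$ stays below $I(Y_j;V_j)_{\sigma_2}$ up to a slack $\delta_S=\delta_S(\delta)$ with $\delta_S\searrow0$. Averaging over $m_j$ and over the stated code distribution, and adding $S_2$ and $S_3$, gives $\EE[S_2+S_3]\le\epsilon_S(\delta)$ with $\epsilon_S(\delta)\searrow 0$; the cost constraints $\EE[\kappa_j(X_j)]\le\tau_j$ are met because $x_j^n$ is jointly $p_{V_jX_j}$-typical with high probability, so its empirical cost concentrates around $\EE[\kappa_j(X_j)]$.

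I expect the principal difficulty to lie not in the reduction of the first paragraph but inside the module: orchestrating the Hayashi--Nagaoka splitting so that the random NCC codebook --- whose codewords are only \emph{pairwise}, not mutually, independent --- still delivers the single-letter mutual-information exponent, and calibrating $k$ so that the typicality indicator in the encoder search exactly cancels the $\tfrac{k}{n}\log q$ codebook overhead and leaves an effective ensemble of $2^{nR_j}$ active codewords. This is precisely the content of the CQ-PTP analysis of Sec.~\ref{Sec:NCCAchieveCQ-PTPCapacity}, which adapts the simultaneous-decoding techniques of \cite{201206TIT_FawHaySavSenWil,fawzi2012classical} to the coset-code ensemble; granting it, the present proposition follows from the first paragraph.
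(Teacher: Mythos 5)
Your overall route is the same as the paper's: use Definition~\ref{Defn:3To1CQICs} to reduce $S_2,S_3$ to the decoding error of an NCC over a CQ\nobreakdash-PTP channel, and then run the Hayashi--Nagaoka / typical-projector / pairwise-independence analysis of Sec.~\ref{Sec:NCCAchieveCQ-PTPCapacity}, with the encoder-failure event controlled by a second-moment bound on $\Theta_j(m_j)$ under the constraint $\tfrac{k}{n}\log q>\log q-H(V_j)$. The reduction in your first paragraph and the counting in your third are exactly what the paper does.

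There is, however, one concrete gap in your displayed Hayashi--Nagaoka step. The POVM element for message $m_j$ is the coset sum $\sum_{a_j}\lambda^{Y_j}_{a_j,m_j}$, so to apply Hayashi--Nagaoka you must pick an operator $S$ with $0\le S\le I$. You cannot take $S=\sum_{a_j}\zeta_{a_j,m_j}$ (the sum of $q^{k}$ positive operators need not be $\le I$), and once you take $S=\zeta_{\alpha_j(m_j),m_j}$ --- which is what your first term $2\tr((I-\zeta_{\alpha_j(m_j),m_j})\rho^{Y_j}_{v_j^n(\alpha_j(m_j),m_j)})$ implicitly does --- the error operator $T$ must contain \emph{all} remaining $\gamma$'s, including the intra-coset terms $\{\zeta_{\hat{a}_j,m_j}:\hat{a}_j\neq\alpha_j(m_j)\}$. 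Your competing sum ranges only over $\hat{m}_j\neq m_j$ and silently drops these $q^{k}-1$ terms. They are precisely the term $T_{22}$ in the proof of Theorem~\ref{Thm:NCCAchievesCQPTP}, whose bound $2^{-n[\chi(\{p_{V};\rho_{v}\})+2\log q-2H(p_{V})-\frac{2k}{n}\log q+\epsilon_V]}$ imposes the additional constraint $\tfrac{2k}{n}\log q<2\log q-2H(p_{V_j})+\chi(\{p_{V_j};\rho_{v_j}\})$. That constraint turns out to be non-binding once $\tfrac{k}{n}\log q$ sits just above the encoding threshold $\log q-H(V_j)$ (it reduces to $\chi>0$), which is why the final rate region is unaffected --- but the term must be exhibited and bounded; it cannot be omitted from the union bound. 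With $T_{22}$ restored, your argument coincides with the paper's proof.
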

\begin{proof}
The proof is provided in Section \ref{Sec:NCCAchieveCQ-PTPCapacity}.
\end{proof}

Now, we move on to bounding the term $S_1$. Let $\mathscr{E} \deq \{\theta_1(m_1) =0 \text{ or } \theta_2(m_2) = 0\}$. By noting that $S_1 \leq 1$, we obtain $S_1 \leq S_1'  + \11_{\mathscr{E}}$, where $S_1' \deq S_1\cdot \11_{\mathscr{E}^c}$. As a first step, we bound the indicator $\11_{\mathscr{E}}$ using the following proposition.
\begin{prop}\label{prop:PTP:Lemma for E}
There exist  $\epsilon_{E}(\delta), \delta_{E}(\delta),$ such that for  all 
$\delta$ and sufficiently large $n$, we have $\EE_{\CalP}\left[\mathscr{E}\right] \leq \epsilon_{{E}}(\delta) $, if $ \frac{k}{n} \geq \log{q} - \min\{H(V_1),H(V_2)\} + \delta_E$, where  $\epsilon_{{E}},\delta_{E} \searrow 0$ as $\delta \searrow 0$.
\end{prop}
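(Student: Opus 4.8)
The plan is to bound the probability of the "encoding failure" event $\mathscr{E} = \{\theta_1(m_1)=0 \text{ or } \theta_2(m_2)=0\}$ by a union bound over the two constituent events, and then to analyze $\PP(\theta_j(m_j)=0)$ for a generic $j\in\{2,3\}$ by a second-moment (Chebyshev / Paley--Zygmund) argument on the count $\theta_j(m_j) = \sum_{a_j\in\CalF_q^k}\11_{\{v_j^n(a_j,m_j)\in\TDelta(p_{V_j})\}}$ of coset elements that are $p_{V_j}$-typical. First I would fix $m_j$ and observe that, under the random NCC distribution with $v_j^n(a_j,m_j)=a_j g_I \oplus m_j g_{O/I}\oplus B_j^n$, each $v_j^n(a_j,m_j)$ is marginally uniform on $\CalF_q^n$ (because $B_j^n$ is uniform), so $\EE[\theta_j(m_j)] = q^k \PP(V^n\in\TDelta(p_{V_j}))$ where $V^n$ is uniform; by the properties of typical sets this is at least $q^k q^{-n(\log q - H(V_j) + \varepsilon(\delta))}$ for large $n$. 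Hence the mean grows exponentially precisely when $\frac{k}{n} > \log q - H(V_j) + \varepsilon(\delta)$, and $\min\{H(V_2),H(V_3)\}$ appears because we need this for \emph{both} $j=2$ and $j=3$ simultaneously (this is the content of the stated threshold; note the statement's "$\min\{H(V_1),H(V_2)\}$" should be read as $\min\{H(V_2),H(V_3)\}$ consistent with the encoders being indexed $2,3$).

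Next I would control the variance. The key structural fact is pairwise independence of the codewords in a coset of a random linear code: for $a_j\neq a_j'$, the pair $(v_j^n(a_j,m_j), v_j^n(a_j',m_j))$ is uniformly distributed on $\CalF_q^n\times\CalF_q^n$, because $a_j g_I$ and $a_j' g_I$ are jointly uniform over the rows being linearly independent — more carefully, $(a_j - a_j')g_I$ is uniform and independent of $a_j' g_I$ when $a_j\neq a_j'$ provided $g_I$ is a uniformly random $k\times n$ matrix, and the shared dither $B_j^n$ does not destroy this. This yields $\Var(\theta_j(m_j)) \leq \EE[\theta_j(m_j)]$ plus a cross term that, by pairwise independence, equals $q^{2k}\PP(V^n\in\TDelta)^2 - (\EE\theta_j)^2 \leq 0$ up to lower-order corrections. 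Then Chebyshev's inequality (or the Paley--Zygmund inequality) gives $\PP(\theta_j(m_j)=0)\leq \PP(|\theta_j - \EE\theta_j|\geq \EE\theta_j) \leq \Var(\theta_j)/(\EE\theta_j)^2 \leq 1/\EE[\theta_j] + o(1)$, which vanishes exponentially fast under the rate condition. A union bound over $j\in\{2,3\}$ then gives $\EE_{\CalP}[\11_{\mathscr{E}}] = \PP(\mathscr{E})\leq \epsilon_E(\delta)$ with $\epsilon_E(\delta)\searrow 0$ as $\delta\searrow 0$, setting $\delta_E$ to absorb the typicality slack $\varepsilon(\delta)$.

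I anticipate the main obstacle to be the variance computation, specifically handling the dependence introduced by the \emph{shared} generator matrices $g_I, g_{O/I}$ across the two NCCs (users $2$ and $3$ use the same $g_I, g_{O/I}$) together with the per-message random selection $\alpha_j(m_j)$ in the code distribution. For the single-index claim $\PP(\theta_j(m_j)=0)\to 0$ the sharing is irrelevant — the event depends only on encoder $j$'s codebook — so the pairwise-independence argument within one coset suffices, and this is the standard NCC covering-lemma estimate (analogous to the one underlying the point-to-point NCC analysis referenced via \cite{2020Bk_PraPadShi} and Sec.~\ref{Sec:NCCAchieveCQ-PTPCapacity}). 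The only care needed is to verify that conditioning on the row space of $g_I$ having full rank $k$ (an event of probability $1-o(1)$ for $q^k = o(q^n)$, i.e. $k<n$) makes the pairwise-independence claim exact, and to track that the typicality parameter $\delta$ controlling $\TDelta(p_{V_j})$ feeds cleanly into both $\epsilon_E$ and $\delta_E$ so that both vanish as $\delta\searrow 0$.
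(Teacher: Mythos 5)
Your proposal is correct and is essentially the argument the paper relies on: the paper's own ``proof'' is a citation to the appendix of the MAC-with-states reference, which is precisely the second-moment covering estimate you describe (union bound over the two encoders, $\EE[\theta_j(m_j)] \approx q^{k}\,2^{-n(\log q - H(V_j))}$ by uniformity of each coset element, pairwise independence of distinct coset elements, then Chebyshev giving $\PP(\theta_j(m_j)=0)\leq 1/\EE[\theta_j(m_j)]$). One minor simplification: no conditioning on $g_I$ having full rank is needed, since for $a\neq a'$ the difference $(a-a')g_I$ is a nonzero combination of the i.i.d.\ uniform rows of $g_I$ and is therefore exactly uniform and independent of $v_j^n(a',m_j)$, so pairwise independence holds unconditionally.
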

\begin{proof}
The proof follows from \cite[App.~B]{MACwithStates_ArunPad_SandeepPra}.
\end{proof}
Now considering the term $S_1'$, and using the linearity of trace while ignoring some negative terms, we get
\begin{align}\label{eq:boundS_1'}
      S_1' &\leq \frac{1}{\underline{\mathcal{M}}}\sum_{\ulinem \in \ulineCalM}\tr(\Big((I-\lambda_{m_1,a,l}^{Y_1})\otimes I^{Y_2Y_3}\Big) \rho_{c,\ulinem}^{\otimes n})\11_{\mathscr{E}^c} \nonumber \\
      &\leq \frac{1}{\underline{\mathcal{M}}}\sum_{\ulinem \in \ulineCalM}{\tr((I-\lambda_{m_1,a,l}^{Y_1}) \pimal\rho_{c,\ulinem}^{Y_1}\pimal)}\11_{\mathscr{E}^c}  + S_{11},
\end{align}
where the second inequality defines the following $S_{11} \deq \left\|\pimal\rho_{c,\ulinem}^{Y_1}\pimal - \rho_{c,\ulinem}^{Y_1} \right\|_1, \rho_{c,\ulinem}^{Y_1}\deq  \tr_{Y_2Y_3}(\rho_{c,\ulinem}^{\otimes n})$, $a \deq \alpha_1(m_1) \oplus \alpha_2(m_2)$, and $ l \deq  m_1 \oplus m_2$ and uses the inequality  $\tr(\lambda\rho) \leq \tr(\lambda\sigma) + \left\|\rho-\sigma\right\|_1$ which holds for all $0 \le \rho,\sigma,\lambda \leq 1$. 
Before we begin the proof, we provide the following lemma based on the pinching for non-commutating operators \cite{2013Bk_Wil,sutter2018approximate}. 
\begin{lemma}\label{lem:LemmaPinching}
For $\pimal, \pi_{m_1}, \pi_l^a $ and $\pi_{\rho} $ as defined above, we have  $$\mbox{tr}(\pimal\rho_{c,\ulinem}^{Y_1})\geq 1- \epsilon_{p_1}(\delta), \quad \mbox{tr}(\pi_{m_1}\rho_{c,\ulinem}^{Y_1})\geq 1- \epsilon_{p_2}(\delta), \quad \mbox{tr}(\pi_l^a\rho_{c,\ulinem}^{Y_1})\geq 1- \epsilon_{p_3}(\delta),  \quad \mbox{tr}(\pi_\rho\rho_{c,\ulinem}^{Y_1}) \geq 1- \epsilon_{p_4}(\delta),$$ where $\epsilon_{p_i}(\delta):i\in[4] \searrow 0$ as $\delta \searrow 0.$
\end{lemma}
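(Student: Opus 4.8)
The plan is to prove each of the four trace inequalities by the same template: express the relevant projector as a (conditional) typical projector for a state that is the $n$-fold average of a single-letter density operator, and then invoke the standard properties of typical projectors from \cite[Ch.~15]{2013Bk_Wil} together with a law-of-large-numbers argument over the random codebook. First I would set up the conditioning: fix a message triple $\ulinem$, and condition on the realized codeword $x_1^n(m_1)$ and on the realized NCC codewords $v_2^n(\alpha_2(m_2),m_2)$ and $v_3^n(\alpha_3(m_3),m_3)$, so that in particular $u^n(a,l)=v_2^n(\alpha_2(m_2),m_2)\oplus v_3^n(\alpha_3(m_3),m_3)$ is determined, where $a=\alpha_1(m_1)\oplus\alpha_2(m_2)$ and $l=m_1\oplus m_2$. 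Because of the encoding rule, on $\mathscr{E}^c$ the chosen codewords $v_j^n$ are $\delta$-typical according to $p_{V_j}$, and hence (their sum being a coordinate-wise function) $u^n(a,l)$ is $\delta'$-typical according to $p_U$ for a suitably enlarged $\delta'$; likewise the pair $(x_1^n(m_1),u^n(a,l))$ is jointly typical according to $p_{X_1 U}$ with high probability over the codebook, by the standard property that a typical $x_1^n$ drawn i.i.d.\ conditioned on a typical $u^n$ is jointly typical.

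The four bounds are then handled as follows. For $\mathrm{tr}(\pi_\rho \rho_{c,\ulinem}^{Y_1})$: the averaged channel state is $\rho = \sum_{x_1,u}p_{X_1}(x_1)p_U(u)\rho_{x_1,u}^{Y_1}$, and since the codewords are generated i.i.d.\ from $p_{X_1}$ and $p_U$ (marginally), $\EE[\rho_{c,\ulinem}^{Y_1}]$ is close to $\rho^{\otimes n}$; more precisely $\rho_{c,\ulinem}^{Y_1}$ is a product of single-letter states whose empirical distribution is typical, so $\mathrm{tr}(\pi_\rho \rho_{c,\ulinem}^{Y_1})\geq 1-\epsilon_{p_4}(\delta)$ follows from the unit-probability property of the typical projector applied to a state whose type is close to the true distribution (Property in \cite[Ch.~15]{2013Bk_Wil}, together with a continuity/robust-typicality estimate). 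For $\mathrm{tr}(\pi_{m_1}\rho_{c,\ulinem}^{Y_1})$: here $\pi_{m_1}=\pi_{x_1^n(m_1)}$ is the conditional typical projector for $\rho_{x_1}^{Y_1}=\sum_u p_U(u)\rho_{x_1,u}^{Y_1}$; conditioned on $x_1^n(m_1)$ being typical (true w.h.p.), the state $\rho_{c,\ulinem}^{Y_1}$ averaged over the codebook randomness in $u^n$ equals $\bigotimes_t \rho_{x_{1t}}^{Y_1}$ up to small perturbation, and the conditional-typical-projector property gives the bound. The cases $\mathrm{tr}(\pi_l^a\rho_{c,\ulinem}^{Y_1})$ and $\mathrm{tr}(\pimal\rho_{c,\ulinem}^{Y_1})$ are the analogous statements with $\rho_u^{Y_1}$ (conditioning on $u^n$) and $\rho_{x_1,u}^{Y_1}$ (conditioning on the jointly typical pair $(x_1^n,u^n)$), respectively — the last one being the genuinely conditional typical projector appearing in \cite[Def.~15.2.4]{2013Bk_Wil}. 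In every case the $\searrow 0$ behavior is inherited from the typicality parameter $\delta$ via the explicit bounds in \cite[Ch.~15]{2013Bk_Wil}.

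The main obstacle is not any single inequality but the bookkeeping needed to make the conditioning rigorous: the projectors $\pi_l^a$ and $\pimal$ carry the indicator $\11_{\{u^n(a,l)\in\TDelta(p_U)\}}$ (resp.\ $\11_{\{(x_1^n,u^n)\in\TDelta(p_{X_1U})\}}$), so one must be careful that these indicators are $1$ on the event we are working on — this is exactly why we restrict to $\mathscr{E}^c$ and why Proposition \ref{prop:PTP:Lemma for E} is invoked upstream — and that the state $\rho_{c,\ulinem}^{Y_1}$ we feed into $\mathrm{tr}(\pi\cdot)$ is itself (close to) a tensor product of single-letter states with the right empirical type, so that the pinching/typical-projector machinery of \cite{2013Bk_Wil,sutter2018approximate} applies verbatim. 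Once the event $\mathscr{E}^c$ is fixed and the empirical distributions of $(x_1^n,u^n)$ are shown to be $\delta$-close to $(p_{X_1},p_U)$ with high probability over the random code, each of the four bounds reduces to a direct citation of the standard typical-projector lemmas, with $\epsilon_{p_i}(\delta)$ chosen as the maximum of the resulting error terms.
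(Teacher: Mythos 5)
Your plan correctly identifies the overall architecture --- restrict to $\mathscr{E}^c$ so the typicality indicators are active, establish joint typicality of $(x_1^n(m_1),u^n(a,l))$, and pair each projector with the conditional state it is built from --- but it glosses over the one step that carries the mathematical weight, and that step is \emph{not} a direct citation of the standard typical-projector lemmas. In each of the four traces the conditioning of the projector and the conditioning of the state do not match: $\pi_{m_1}^{a,l}$ is the conditional typical projector of $\rho_{x_1,u}^{Y_1}=\sum_{v_2,v_3}p_{V_2V_3|U}(\cdot\,|u)\,\rho_{x_1,v_2,v_3}^{Y_1}$, whereas the state $\rho_{c,\ulinem}^{Y_1}=\otimes_t\rho_{x_{1t}v_{2t}v_{3t}}^{Y_1}$ is conditioned on the finer pair $(v_2^n,v_3^n)$ rather than on $u^n=v_2^n\oplus v_3^n$; likewise $\pi_\rho$ is the typical projector of $\rho^{\otimes n}$ but is applied to $\rho_{x_1^nv_2^nv_3^n}^{Y_1}\neq\rho^{\otimes n}$. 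The unit-probability properties in \cite[Ch.~15]{2013Bk_Wil} only give $\tr(\Pi_{a^n}\rho_{a^n})\geq 1-\epsilon$ for the \emph{matched} pair (or its expectation over an i.i.d.\ codeword); the cross-conditioned bound $\tr(\Pi_{a^n}\rho_{b^n})\geq 1-\epsilon$ for jointly typical $(a^n,b^n)$ with $\rho_a=\sum_b p_{B|A}(b|a)\rho_b$ is a genuine generalization. That generalization is exactly what the paper proves from scratch as Lemma~\ref{Lem:CharHighProbSubAKAPinching}: it diagonalizes $\rho_v$ and $\rho$, defines induced classical channels $p_{Y|XV}(y|x,v)=\langle e_{y|v}|\rho_x|e_{y|v}\rangle$ and $\hat p_{Y|X}(y|x)=\langle g_y|\rho_x|g_y\rangle$, establishes the typical-set inclusions for these measured channels, and thereby converts each quantum trace into a classical typical-set probability bounded by concentration. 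Your ``continuity/robust-typicality estimate'' is a placeholder for precisely this argument.

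A second, smaller gap: handling $\tr(\pi_{m_1}\rho_{c,\ulinem}^{Y_1})$ by ``averaging over the codebook randomness in $u^n$'' does not work as stated. The NCC encoder selects $v_j^n(\alpha_j(m_j),m_j)$ among the typical codewords of a coset, so the induced law of $u^n$ is not $p_U^n$ and the average does not reproduce $\otimes_t\rho_{x_{1t}}^{Y_1}$; moreover averaging would only yield an in-expectation statement, while the lemma is used pointwise (on $\mathscr{E}^c$, for jointly typical realizations) inside the bound on $S_1'$. The clean route is the paper's: prove once the deterministic statement that $\tr(\Pi_\rho\Pi_{a^n}\Pi_\rho\rho_{b^n})\geq 1-e^{-n\lambda\delta^2}$ for all jointly typical $(a^n,b^n)$, and instantiate $(A,B)$ as $((X_1,U),(X_1,V_2,V_3))$, $(X_1,(X_1,V_2,V_3))$, $(U,(V_2,V_3))$, and trivial $A$, respectively, to obtain the four inequalities.
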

\begin{proof}
The proof is provided in Appendix \ref{appx:proofLemmaPinching}.
\end{proof}
Using the above lemma, we first bound the term corresponding to $S_{11}.$ Applying the gentle operator lemma \cite[Lem. 9.4.2]{2013Bk_Wil} on  $S_{11}$, we obtain
\begin{align}
    S_{11} \deq \left\|\pial\rho_{c,\ulinem}^{Y_1}\pial - \rho_{c,\ulinem}^{Y_1} \right\|_1 \leq 2\sqrt{1-\tr(\pial\rho_{c,\ulinem}^{Y_1})} \leq 2\sqrt{\epsilon_p(\delta)}.
\end{align}

 Considering the first term in the right hand side of \eqref{eq:boundS_1'}, let $T$ denote a generic term within its summation, defined as
\begin{align}
    T \deq  {\tr((I-\lambda_{m_1,a,l}^{Y_1}) \pial\rho_{c,\ulinem}^{Y_1}\pial)}\11_{\mathscr{E}^c}. \nonumber
\end{align}
This term can be bounded using the Hayashi-Nagaoka inequality \cite{2013Bk_Wil} as $T \leq 2(1-T_1) + 4T_2$, where 
\begin{align}
    T_1 \deq \tr(\gammaJoint \pial\rho_{c,\ulinem}^{Y_1}\pial), \qquad T_2 \deq\sum_{\substack{(m_1',a',l') \neq (m_1,a,l)} }\tr(\gamma_{m_1'}^{a',l'} \pial\rho_{c,\ulinem}^{Y_1}\pial).\nonumber
\end{align}
Note the objective now is to prove $T_1$ is close to one and $T_2$ is close to zero. Consider the following proposition in regards to $T_1$.
\begin{prop}\label{prop:Lemma for T_1}
There exist  $\epsilon_{T_1}(\delta), \delta_{T_1}(\delta),$ such that for  all sufficiently small $\delta$ and sufficiently large $n$, we have $\EE\left[T_1\right] \geq 1- \epsilon_{{T_1}}(\delta) $, where  $\epsilon_{{T_1}},\delta_{T_1} \searrow 0$ as $\delta \searrow 0$.
\end{prop}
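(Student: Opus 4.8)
The plan is to show that $\EE[T_1] = \EE[\tr(\gammaJoint\, \pial\rho_{c,\ulinem}^{Y_1}\pial)]$ is close to $1$ by repeated application of the gentle-operator/measurement lemma together with the pinching estimates of Lemma \ref{lem:LemmaPinching}. Recall $\gammaJoint = \pi_{\rho}\pi_{m_1}\pimal\pi_{m_1}\pi_{\rho}$. The idea is to strip off the projectors one layer at a time and, at each stage, pay only a vanishing penalty in trace distance. Concretely, I would first replace $\pial\rho_{c,\ulinem}^{Y_1}\pial$ by $\rho_{c,\ulinem}^{Y_1}$ inside the trace, incurring an error of at most $\|\pial\rho_{c,\ulinem}^{Y_1}\pial - \rho_{c,\ulinem}^{Y_1}\|_1 \le 2\sqrt{\epsilon_{p_1}(\delta)}$ via the gentle-operator lemma and Lemma \ref{lem:LemmaPinching} (this is exactly the $S_{11}$-type bound already used). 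It then suffices to lower bound $\EE[\tr(\gammaJoint\,\rho_{c,\ulinem}^{Y_1})] = \EE[\tr(\pimal\pi_{m_1}\pi_{\rho}\rho_{c,\ulinem}^{Y_1}\pi_{\rho}\pi_{m_1})]$.

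Next I would peel the projectors off in sequence, using the elementary inequality $\tr(\Pi\sigma) \ge \tr(\sigma) - \|\sigma - \sigma'\|_1 + \tr(\Pi(\sigma'-\sigma))$ more simply: since $0 \le \pi_{\rho},\pi_{m_1},\pimal \le I$, one has $\tr(\pimal \pi_{m_1}\pi_{\rho}\rho\,\pi_{\rho}\pi_{m_1}) \ge \tr(\rho) - \|\pi_{\rho}\rho\pi_{\rho} - \rho\|_1 - \|\pi_{m_1}(\pi_{\rho}\rho\pi_{\rho})\pi_{m_1} - \pi_{\rho}\rho\pi_{\rho}\|_1 - \|\pimal(\pi_{m_1}\pi_{\rho}\rho\pi_{\rho}\pi_{m_1})\pimal - (\cdots)\|_1$, where at each step the gentle-operator lemma bounds the trace-distance term by $2\sqrt{1-\tr(\Pi\,\omega)}$ for the relevant subnormalized state $\omega$, and Lemma \ref{lem:LemmaPinching} guarantees each such overlap is at least $1-\epsilon_{p_i}(\delta)$. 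Care is needed because after the first pinching the state is no longer $\rho_{c,\ulinem}^{Y_1}$; here I would use that the gentle-operator lemma applies to subnormalized states and that pinching by a projector that nearly preserves the original state also nearly preserves the pinched state, so the overlaps stay close to $1$ up to an additive $O(\sqrt{\epsilon_{p_i}(\delta)})$. Collecting terms, $\EE[T_1] \ge 1 - \epsilon_{T_1}(\delta)$ with $\epsilon_{T_1}(\delta)$ a finite sum of terms of the form $c_i\sqrt{\epsilon_{p_i}(\delta)}$, hence $\epsilon_{T_1}(\delta)\searrow 0$ as $\delta\searrow 0$. Note that $T_1$ does not actually depend on the random codebook in a way that obstructs the bound: conditioned on the code satisfying $\mathscr{E}^c$ (so that $\alpha_j(m_j)$ is well defined and $(x_1^n(m_1),u^n(a,l))$ is jointly typical), the pinching estimates hold deterministically, so the expectation is immediate.

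The main obstacle I anticipate is the bookkeeping around the non-commuting projectors $\pi_{\rho}$, $\pi_{m_1}$, $\pimal$: unlike the classical case these do not commute, so the "peeling" must be done in a fixed order and one must track that each intermediate operator remains a valid (subnormalized) state on which the gentle-operator lemma can be invoked, and that the relevant overlap is controlled by one of the four quantities in Lemma \ref{lem:LemmaPinching} (possibly after an additional $\sqrt{\cdot}$ loss from the previous pinching). A secondary subtlety is ensuring the typicality indicator $\11_{\mathscr{E}^c}$ is compatible with the definition of $\gammaJoint$, i.e.\ that on $\mathscr{E}^c$ the codeword $u^n(a,l)$ with $a = \alpha_1(m_1)\oplus\alpha_2(m_2)$, $l = m_1\oplus m_2$ is indeed $p_U$-typical and jointly typical with $x_1^n(m_1)$, which is exactly what makes the conditional typical projectors $\pimal$ act as claimed — this is why the $\11_{\mathscr{E}^c}$ factor was introduced in \eqref{eq:boundS_1'}. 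Everything else is a routine concatenation of gentle-operator estimates.
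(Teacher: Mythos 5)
Your proposal is correct and follows essentially the same route as the paper: both peel off the sandwiching projectors $\pi_{\rho}$, $\pi_{m_1}$, $\pial$ one at a time via $\tr(\lambda\rho)\geq\tr(\lambda\sigma)-\|\rho-\sigma\|_1$ and the gentle-operator lemma, then invoke Lemma~\ref{lem:LemmaPinching} to control each overlap, yielding $\epsilon_{T_1}$ as a finite sum of $O(\sqrt{\epsilon_{p_i}})$ terms. The bookkeeping subtleties you flag (non-commutativity, contractivity of $X\mapsto\Pi X\Pi$ under the trace norm, and the role of $\11_{\mathscr{E}^c}$) are real but handled implicitly in the paper's one-line chain of inequalities.
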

\begin{proof}
Using $\tr(\lambda\rho) \geq \tr(\lambda\sigma) - \left\|\rho-\sigma\right\|_1$ for $ 0 \leq \rho,\sigma,\lambda \leq I$, we have
\begin{align}
    T_1 &\geq \tr(\pimal\rho_{c,\ulinem}^{Y_1}) -\left\|\pi_\rho\rho_{c,\ulinem}^{Y_1}\pi_\rho - \rho_{c,\ulinem}^{Y_1}\right\| - \left\|\pi_l^a\rho_{c,\ulinem}^{Y_1}\pi_l^a - \rho_{c,\ulinem}^{Y_1}\right\| - \left\|\pi_{m_1}\rho_{c,\ulinem}^{Y_1}\pi_{m_1} - \rho_{c,\ulinem}^{Y_1}\right\| \nonumber\\
    & \geq \tr(\pimal\rho_{c,\ulinem}^{Y_1}) - 2\sqrt{1-\tr(\pi_\rho\rho_{c,\ulinem}^{Y_1})} - 2\sqrt{1- \tr(\pi_l^a\rho_{c,\ulinem}^{Y_1})} -2\sqrt{1-\tr(\pi_{m_1}\rho_{c,\ulinem}^{Y_1})} \nonumber \\
    &\geq 1 - \epsilon_{T_1}(\delta),
\end{align}
where the second inequality follows from the gentle opertor lemma and the last inequality uses the above Lemma \ref{lem:LemmaPinching} by defining $\epsilon_{T_1} \deq \epsilon_{p_1} + 2(\sqrt{\epsilon_{p_2}} + \sqrt{\epsilon_{p_3}} + \sqrt{\epsilon_{p_4}})$. This completes the proof.
\end{proof}

Now, we move on to bounding the term $T_2$. Firstly, note that the summation in $T_2$ can be split into seven different summations based on how many indices within the summation over the triple $(m_1',a',l')$ are equal to  $(m_1,a,l)$. However, only three of these seven provide binding constraints on the rate triple $(R_1,R_2,R_3)$. Building on this  we perform the split $T_2 = T_{21} + T_{22} + T_{23} + T_{3},$ where 
\begin{align}
    T_{21} \deq \sum_{m_1'\neq m_1}\tr(\gamma_{m_1'}^{a,l} \pial\rho_{c,\ulinem}^{Y_1}\pial), \quad T_{22} \deq \sum_{\substack{a'\neq a, l'\neq l}}\tr(\gamma_{m_1}^{a',l'} \pial\rho_{c,\ulinem}^{Y_1}\pial),\quad 
  T_{23} \deq \sum_{\substack{m_1' \neq m_1,\\a'\neq a, l'\neq l}}\tr(\gamma_{m_1'}^{a',l'} \pial\rho_{c,\ulinem}^{Y_1}\pial)\nonumber
\end{align}
 represents the rate constraining (binding) terms while $T_{3} \deq T_2 - \sum_{i=1}^{3}T_{2i}$ represents the inactive terms (with respect to constraining the rate). 
 We provide the following set of propositions bounding each of these terms $T_{2i}: i\in[3]$.
 \begin{prop}\label{prop:LemmaT_21}
There exists  $\epsilon_{T_{21}}(\delta), \delta_{T_{21}}(\delta),$ such that for  all sufficiently small $\delta$ and sufficiently large $n$, we have $\EE\left[T_{21}\right] \leq \epsilon_{{T_{21}}}(\delta) $ if $ R_1 + \frac{2k}{n}\log{q} \leq {2\log{q}} -H(V_1,V_2) +  I(Y_1;X_1|U)_{\sigma_1}  + \delta_{T_{21}}$, where  $\epsilon_{{T_{21}}},\delta_{T_{21}} \searrow 0$ as $\delta \searrow 0$.
\end{prop}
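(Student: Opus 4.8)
The plan is to bound $\EE[T_{21}]$ where $T_{21} = \sum_{m_1' \neq m_1} \tr(\gamma_{m_1'}^{a,l} \pial \rho_{c,\ulinem}^{Y_1} \pial)$, the term counting error events in which the interference indices $(a,l)$ are decoded correctly but the intended message $m_1$ is wrong. First I would recall that $\gamma_{m_1'}^{a,l} = \pi_\rho \pi_{m_1'} \pi_{m_1'}^{a,l} \pi_{m_1'} \pi_\rho$, and since $(a,l)$ are held fixed at their correct values, the codeword $u^n(a,l)$ involved is exactly $u^n(\alpha_1(m_1)\oplus\alpha_2(m_2), m_1\oplus m_2)$, which is jointly typical with the transmitted $x_1^n(m_1)$ but statistically independent of $x_1^n(m_1')$ for $m_1' \neq m_1$ (since encoder 1 uses an i.i.d.\ random codebook and $m_1' \neq m_1$). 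I would push the expectation inside the sum and, using that $\pial \rho_{c,\ulinem}^{Y_1} \pial \le \pi_{m_1}^{a,l}$-type domination together with the operator inequalities $\pial \rho_{c,\ulinem}^{Y_1} \pial \preceq I$, reduce to bounding $\sum_{m_1' \neq m_1} \EE[\tr(\gamma_{m_1'}^{a,l}\, \sigma)]$ for an appropriate averaged state $\sigma$ close to $\rho_{c,\ulinem}^{Y_1}$ (the closeness controlled by Lemma~\ref{lem:LemmaPinching} and the gentle operator lemma, absorbed into $\epsilon_{T_{21}}$).

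The core of the argument is the standard projector trace estimates \cite[Prop.~15.2.1, etc.]{2013Bk_Wil}: $\tr(\pi_{m_1'}^{a,l}) \le 2^{n(H(X_1,U)+\delta')}$, $\pi_\rho \rho_u^{Y_1}\pi_\rho \preceq 2^{-n(H(Y_1|U)-\delta')}\pi_\rho$-style inequalities, and $\pi_{m_1'}^{a,l} \preceq 2^{n(\cdots)} \pi_\rho \rho_{x_1^n(m_1'), u^n(a,l)}^{Y_1} \pi_\rho$. Taking the expectation over the random codebook $\mathcal{C}_1$ and over the NCC dithers, the key point is that for $m_1' \neq m_1$ the pair $(X_1^n(m_1'), U^n(a,l))$ has the product-of-marginals distribution $p_{X_1}^n \times p_U^n$ (conditioned on the relevant typicality indicators from $\mathscr{E}^c$, which only costs a polynomial factor accounted via Prop.~\ref{prop:PTP:Lemma for E}). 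Hence $\EE[\tr(\gamma_{m_1'}^{a,l} \sigma)] \lesssim 2^{-n(I(Y_1;X_1|U)_{\sigma_1} - \delta'')}$ per term. There are at most $2^{nR_1}$ such terms. However — and this is where the coset structure enters — the index $a = \alpha_1(m_1)\oplus\alpha_2(m_2)$ and $l = m_1\oplus m_2$ are themselves random, and the typical-set normalization $\Theta_j$ in the NCC distribution, together with the fact that $u^n(a,l)$ ranges over a coset of size $q^k$, contributes the $\frac{2k}{n}\log q$ and $2\log q - H(V_1,V_2)$ terms; I would track these by writing the expectation over the NCC ensemble as a sum over codewords in the coset weighted by $q^{-2kn}$ and using the pairwise-independence properties of the random linear code to get the effective count $2^{nR_1} \cdot q^{2k} \cdot q^{-n(2\log q - H(V_1,V_2))}$ multiplying the per-term bound.

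Collecting, $\EE[T_{21}] \lesssim 2^{n(R_1 + \frac{2k}{n}\log q - (2\log q - H(V_1,V_2)) - I(Y_1;X_1|U)_{\sigma_1} + \delta'')} + (\text{gentle-lemma slack})$, which vanishes precisely under the stated condition $R_1 + \frac{2k}{n}\log q \le 2\log q - H(V_1,V_2) + I(Y_1;X_1|U)_{\sigma_1} + \delta_{T_{21}}$, with $\epsilon_{T_{21}}, \delta_{T_{21}} \searrow 0$ as $\delta \searrow 0$ by choosing the typicality parameter $\delta'$ appropriately and invoking Lemma~\ref{lem:LemmaPinching}. I expect the main obstacle to be the bookkeeping of the non-commuting projectors: unlike the purely classical case, $\pi_\rho$, $\pi_{m_1'}$, and $\pi_{m_1'}^{a,l}$ do not commute, so the clean chain of operator inequalities $\gamma_{m_1'}^{a,l} \preceq 2^{n(\cdots)} (\text{something})$ requires careful use of the pinching/Gentle-operator machinery and the $\|\cdot\|_1$ perturbation bounds, and one must ensure the perturbation terms (which scale like $\sqrt{\epsilon_{p_i}(\delta)}$) are collected into $\epsilon_{T_{21}}$ without spoiling the exponential decay. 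A secondary subtlety is verifying that conditioning on $\mathscr{E}^c$ (and on a particular realization of $\alpha_1(m_1),\alpha_2(m_2)$) does not destroy the product structure of $(X_1^n(m_1'), U^n(a,l))$ for $m_1'\neq m_1$ — this follows because $m_1'$ indexes an independent codeword of $\mathcal{C}_1$ and the event $\mathscr{E}^c$ depends only on $m_1, m_2$'s cosets, but it should be stated explicitly.
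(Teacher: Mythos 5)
Your proposal follows essentially the same route as the paper's proof: expand $\rho_{c,\ulinem}^{Y_1}$ over the random codebook and NCC ensemble, use pairwise independence of the two coset codewords (probability $q^{-2n}$, summed over the $q^{2k}$ coset indices and the $\approx 2^{nH(V_2,V_3)}$ typical pairs $(v_2^n,v_3^n)$, which is exactly where the $\tfrac{2k}{n}\log q$ and $2\log q - H(V_2,V_3)$ terms come from), and then apply the standard typical-projector trace estimates to extract $2^{-nI(Y_1;X_1|U)_{\sigma_1}}$ per wrong $m_1'$, with the pinching/gentle-operator slack absorbed into $\epsilon_{T_{21}}$. The only blemishes are notational ($q^{-2kn}$ should be $q^{-2n}$ for the pairwise-independence weight, and $q^{-n(2\log q - H(V_1,V_2))}$ should read $2^{-n(2\log q - H(V_1,V_2))}$), but your final collected exponent agrees with the paper's.
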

\begin{proof}
The proof is provided in Appendix \ref{appx:proofofPropT_21}
\end{proof}
Now, we provide the proposition for $T_{22}$ as follows.
 \begin{prop}\label{prop:LemmaT_22}
There exists  $\epsilon_{T_{22}}(\delta), \delta_{T_{22}}(\delta),$ such that for  all sufficiently small $\delta$ and sufficiently large $n$, we have $\EE\left[T_{22}\right] \leq \epsilon_{{T_{22}}}(\delta) $ if $  \frac{3k+l}{n}\log{q} \leq {3\log{q}} -H(V_1,V_2) -H(U) +  I(Y_1;U|X_1)_{\sigma_1}  + \delta_{T_{22}},$ where  $\epsilon_{{T_{22}}},\delta_{T_{22}} \searrow 0$ as $\delta \searrow 0$.
\end{prop}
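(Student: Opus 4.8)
The plan is to treat $T_{22}$ as a standard ``packing''-type term: it collects the overlap between the true channel output and the operators $\gamma_{m_1}^{a',l'}$ attached to the \emph{correct own message} $m_1$ but \emph{incorrect interference labels} $(a',l')$ with $a'\ne a$, $l'\ne l$. First I would reduce a single summand by cyclicity of the trace: $\tr\!\big(\gamma_{m_1}^{a',l'}\pial\rho_{c,\ulinem}^{Y_1}\pial\big)=\tr\!\big(\pi_{m_1}^{a',l'}\,\Theta\big)$ where $\Theta\deq\pi_{m_1}\pi_{\rho}\pial\rho_{c,\ulinem}^{Y_1}\pial\pi_{\rho}\pi_{m_1}\ge 0$ satisfies $\tr\Theta\le\tr\rho_{c,\ulinem}^{Y_1}=1$ (all projectors are $\le I$). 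Since $\pi_{m_1}^{a',l'}=\pi_{x_1^n(m_1),u^n(a',l')}\11_{\{(x_1^n(m_1),u^n(a',l'))\in\TDelta(p_{X_1U})\}}$ vanishes unless $x_1^n(m_1)$ is $p_{X_1}$-typical, the whole term is zero off that event, so the usual conditional-typical-projector estimates for $\pi_{m_1}=\pi_{x_1^n(m_1)}$ are available throughout.

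Next I would take $\EE_{\CalP}$, condition on user $1$'s codebook, and expand over the coset indices $\alpha_2(m_2),\alpha_3(m_3)$ chosen inside the two nested coset codes. Using $\11_{\{v_j^n(\tilde a_j,m_j)\in\TDelta(p_{V_j})\}}/\Theta_j(m_j)\le\11_{\{v_j^n(\tilde a_j,m_j)\in\TDelta(p_{V_j})\}}$ turns the average over $\alpha_2,\alpha_3$ into a sum over \emph{all} $\tilde a_2,\tilde a_3\in\CalF_q^k$ weighted by the typicality indicators of $v_2^n(\tilde a_2,m_2),v_3^n(\tilde a_3,m_3)$; this is exactly the source of the extra $q^{2k}$ beyond the obvious $q^{k+l}$ from the $(a',l')$ sum, for a total of $q^{3k+l}$ summands. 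For each fixed tuple $(a',l',\tilde a_2,\tilde a_3)$ the sum restricts $a'-\tilde a_2-\tilde a_3\ne 0$ in $\CalF_q^k$ (because the original sum imposes $a'\ne a=\alpha_2(m_2)\oplus\alpha_3(m_3)$) and $l'-m_2-m_3\ne 0$; hence the three sequences $v_2^n(\tilde a_2,m_2)$, $v_3^n(\tilde a_3,m_3)$ and $u^n(a',l')=a'g_I\oplus l'g_{O/I}\oplus(Y_2^n\oplus Y_3^n)$ are linearly independent combinations of the uniformly random rows of $g_I,g_{O/I}$ and of the dithers, so they are jointly uniform on $(\CalF_q^n)^3$ and independent of $x_1^n(m_1)$ (the standard linear algebra of nested‑coset ensembles, as used for Prop.~\ref{prop:PTP:Lemma for E}, cf.\ \cite{MACwithStates_ArunPad_SandeepPra}). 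Consequently the inner expectation equals $q^{-3n}$ times $\sum_{\bar v_2\in\TDelta(p_{V_2})}\sum_{\bar v_3\in\TDelta(p_{V_3})}\sum_{\bar u:\,(x_1^n(m_1),\bar u)\in\TDelta(p_{X_1U})}\tr\!\big(\pi_{x_1^n(m_1),\bar u}\,\Theta_{\bar v_2,\bar v_3}\big)$, where $\Theta_{\bar v_2,\bar v_3}$ is $\Theta$ with the true interference codeword fixed to $\bar v_2\oplus\bar v_3$.

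Then I would finish with two conditional‑typicality estimates. Writing $S(Y_1|X_1U)_{\sigma_1}\!:=\!\sum_{x_1,u}p_{X_1}(x_1)p_U(u)S(\rho_{x_1,u}^{Y_1})$ and $S(Y_1|X_1)_{\sigma_1}\!:=\!\sum_{x_1}p_{X_1}(x_1)S(\rho_{x_1}^{Y_1})$, so that $I(Y_1;U|X_1)_{\sigma_1}=S(Y_1|X_1)_{\sigma_1}-S(Y_1|X_1U)_{\sigma_1}$: (i) from $\pi_{x_1^n,\bar u}\preceq 2^{n(S(Y_1|X_1U)_{\sigma_1}+\delta)}\rho_{x_1^n,\bar u}^{Y_1}$ and $\rho_{x_1^n(m_1)}^{Y_1}=\sum_{\bar u}p_U^{n}(\bar u)\rho_{x_1^n,\bar u}^{Y_1}\succeq 2^{-n(H(U)+\delta)}\sum_{\bar u\,\mathrm{typ}}\rho_{x_1^n,\bar u}^{Y_1}$ (here $p_{U|X_1}=p_U$ since $X_1\perp U$ under the product PMF) one gets $\sum_{\bar u\,\mathrm{typ}}\pi_{x_1^n,\bar u}\preceq 2^{n(S(Y_1|X_1U)_{\sigma_1}+H(U)+2\delta)}\rho_{x_1^n(m_1)}^{Y_1}$; (ii) $\tr\!\big(\rho_{x_1^n(m_1)}^{Y_1}\Theta_{\bar v_2,\bar v_3}\big)=\tr\!\big(\pi_{m_1}\rho_{x_1^n(m_1)}^{Y_1}\pi_{m_1}\cdot\pi_{\rho}\pial\rho_{c,\ulinem}^{Y_1}\pial\pi_{\rho}\big)\le\|\pi_{m_1}\rho_{x_1^n(m_1)}^{Y_1}\pi_{m_1}\|\le 2^{-n(S(Y_1|X_1)_{\sigma_1}-\delta)}$, using $\tr(\pial\pi_{\rho}^2\pial\rho_{c,\ulinem}^{Y_1})\le 1$. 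Combining these with $|\TDelta(p_{V_2})|\,|\TDelta(p_{V_3})|\le 2^{n(H(V_1,V_2)+2\delta)}$ (the two nested‑coset letters are independent) the per‑tuple expectation is at most $q^{-3n}2^{n(H(V_1,V_2)+H(U)-I(Y_1;U|X_1)_{\sigma_1}+O(\delta))}$, and summing the $q^{3k+l}$ tuples gives
\[
 \EE[T_{22}]\ \le\ 2^{\,(3k+l)\log q\,-\,3n\log q\,+\,n\big(H(V_1,V_2)+H(U)-I(Y_1;U|X_1)_{\sigma_1}+O(\delta)\big)},
\]
which tends to $0$ as $n\to\infty$ precisely when $\tfrac{3k+l}{n}\log q\le 3\log q-H(V_1,V_2)-H(U)+I(Y_1;U|X_1)_{\sigma_1}+\delta_{T_{22}}$ with $\delta_{T_{22}}=-O(\delta)\searrow 0$; setting $\epsilon_{T_{22}}(\delta)$ to this vanishing bound and letting $\delta\searrow 0$ proves the claim.

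The hard part is the second step: rigorously disposing of the data‑dependent normalizations $1/\Theta_j(m_j)$ and establishing the joint‑uniformity/independence of the decoder's ``wrong'' codeword $u^n(a',l')$ with respect to the true codewords $v_2^n(\alpha_2(m_2),m_2),v_3^n(\alpha_3(m_3),m_3)$ and $x_1^n(m_1)$ — i.e.\ the linear algebra of the shared nested‑coset ensemble. The remaining ingredients (the Hayashi--Nagaoka split already in place, the gentle‑operator lemma, the pinching estimates of Lemma~\ref{lem:LemmaPinching}, and the conditional‑typicality bounds above) are routine. It is also worth noting that the seemingly lossy $q^{3k+l}$ (rather than $q^{k+l}$) is benign: combined with the coset‑feasibility bound $\tfrac{k}{n}\log q\gtrsim\log q-\min\{H(V_2),H(V_3)\}$ of Prop.~\ref{prop:PTP:Lemma for E}, the $k$‑dependence cancels and this proposition contributes exactly the constraint $R_j\le\min\{H(V_2),H(V_3)\}-H(U)+I(Y_1;U|X_1)_{\sigma_1}$ of Theorem~\ref{thm:3to1CQIC}.
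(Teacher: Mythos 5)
Your proposal is correct and follows essentially the same route as the paper's proof in Appendix~\ref{appx:proofofPropT_22}: the same device for disposing of the $1/\Theta_j(m_j)$ normalizations (bounding $\11_{\CalJ}$ by $\11_{\CalK}$ and summing over all $\tilde a_2,\tilde a_3$), the same joint-uniformity computation giving $q^{-3n}$ over the $q^{3k+l}$ label tuples, and the same pair of conditional-typicality operator inequalities producing the exponent $H(V_1,V_2)+H(U)-I(Y_1;U|X_1)_{\sigma_1}$. The only differences are cosmetic, namely the order in which the operator bounds are applied (your operator-norm bound on $\pi_{m_1}\rho^{Y_1}_{x_1^n}\pi_{m_1}$ versus the paper's $\sum_{\hat u^n}\gamma \leq c_1 I$ followed by $\tr(\pi_{u^n}\rho\,\pi_{u^n})\leq 1$).
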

\begin{proof}
The proof is provided in Appendix \ref{appx:proofofPropT_22}
\end{proof}
 \begin{prop}\label{prop:LemmaT_23}
There exists  $\epsilon_{T_{23}}(\delta), \delta_{T_{23}}(\delta),$ such that for  all sufficiently small $\delta$ and sufficiently large $n$, we have $\EE\left[T_{23}\right] \leq \epsilon_{{T_{23}}}(\delta) $ if $ R_1 + \frac{3k+l}{n}\log{q} \leq {3\log{q}} -H(V_1,V_2) - H(U) +  I(Y_1;X_1,U)_{\sigma_1}  + \delta_{T_{23}},$ where  $\epsilon_{{T_{23}}},\delta_{T_{23}} \searrow 0$ as $\delta \searrow 0$.
\end{prop}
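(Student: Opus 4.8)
The plan is to mirror the proofs of Propositions~\ref{prop:LemmaT_21} and~\ref{prop:LemmaT_22}; $T_{23}$ is their ``all three indices incorrect'' analogue, in which neither the codeword of encoder~$1$ nor the coset representation of the interference matches the transmitted one, so the binding single-letter quantity is the full mutual information $I(Y_1;X_1,U)_{\sigma_1}$. I would start from
$\EE[T_{23}]=\sum_{m_1'\neq m_1}\sum_{a'\neq a}\sum_{l'\neq l}\EE\big[\mathrm{tr}(\gamma_{m_1'}^{a',l'}\,\pimal\rho_{c,\ulinem}^{Y_1}\pimal)\,\11_{\mathscr{E}^c}\big]$
with $\gamma_{m_1'}^{a',l'}=\pi_\rho\,\pi_{m_1'}\,\pi_{m_1'}^{a',l'}\,\pi_{m_1'}\,\pi_\rho$, and condition on the transmitted codeword $x_1^n(m_1)$, both nested coset codebooks, and the channel state. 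The essential algebraic step, as in \cite[App.~B]{MACwithStates_ArunPad_SandeepPra}, is that for $a'\neq a$ and $l'\neq l$ the word $u^n(a',l')=a'g_I\oplus l'g_{O/I}\oplus b^n$ is, conditionally, uniform on $\CalF_q^n$ and independent of the transmitted configuration (the dithers $Y_2^n,Y_3^n$ and the residual freedom in $g_I,g_{O/I}$ absorb the conditioning), while $x_1^n(m_1')$ is an independent $p_{X_1}^n$ draw; by symmetry of the ensemble the conditional expectation is the same for every such triple and factorises into the averaged operator $\overline{\Pi}\deq\EE[\pi_{m_1'}\,\pi_{m_1'}^{a',l'}\,\pi_{m_1'}]$ (over $x_1^n(m_1')\sim p_{X_1}^n$, $u^n(a',l')\sim\mathrm{Unif}(\CalF_q^n)$) and the transmitted piece $\pimal\rho_{c,\ulinem}^{Y_1}\pimal$.

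To estimate $\overline{\Pi}$ I would apply the quantum packing machinery of \cite{fawzi2012classical,2013Bk_Wil}: the low-eigenvalue (operator-norm) and rank (dimension) bounds for the conditional typical projectors $\pi_{m_1'}^{a',l'}$, $\pi_{m_1'}$ with respect to the averaged states $\rho^{Y_1}_{x_1,u}$, $\rho^{Y_1}_{x_1}$, plus the dimension bound $\mathrm{tr}(\pi_\rho)\le 2^{n(S(\rho^{Y_1})+\delta)}$; the key extra feature relative to Propositions~\ref{prop:LemmaT_21}--\ref{prop:LemmaT_22} is that, because $u^n(a',l')$ is uniform rather than $p_U$-distributed, averaging $\rho^{Y_1}_{x_1^n(m_1'),u^n(a',l')}$ against the typicality indicator carried by $\pi_{m_1'}^{a',l'}$ produces an additional ``covering'' factor $2^{-n(\log q-H(U)-\delta)}$. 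For the transmitted piece I would invoke Lemma~\ref{lem:LemmaPinching} together with the gentle-measurement lemma to control the action of $\pimal$ on $\rho_{c,\ulinem}^{Y_1}$ up to a vanishing $\ell_1$-error; this is the step where the definition of the $3\mathrm{to}1$-CQIC and conditional typicality of $(V_2^n,V_3^n,X_2^n,X_3^n)$ -- guaranteed on $\mathscr{E}^c$ by Proposition~\ref{prop:PTP:Lemma for E} -- are used, since only then is $\rho_{c,\ulinem}^{Y_1}$ compatible on the range of $\pimal$ with the auxiliary state $\rho^{Y_1}_{x_1,u}$ that defines the decoder. Collecting the per-triple estimate, multiplying by the number of triples $2^{nR_1}q^{k}q^{l}$ and by a union bound over the two selected coset representatives $\alpha_2(m_2),\alpha_3(m_3)\in\CalF_q^k$ that pin down the ``correct'' pair $(a,l)$ -- contributing $q^{2k}$, hence the $\frac{3k+l}{n}\log q$ on the left -- together with the covering-type factors of the two nested codes, and regrouping the single-letter terms via the chain rule $I(Y_1;X_1,U)=I(Y_1;U)+I(Y_1;X_1|U)$, one obtains $\EE[T_{23}]\le\epsilon_{T_{23}}(\delta)$ provided $R_1+\frac{3k+l}{n}\log q\le 3\log q-H(V_1,V_2)-H(U)+I(Y_1;X_1,U)_{\sigma_1}+\delta_{T_{23}}$, with all slack absorbed into $\epsilon_{T_{23}},\delta_{T_{23}}\searrow 0$ as $\delta\searrow 0$.

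The hard part is keeping the algebraic and quantum ingredients consistent and the error accounting uniform. On the algebraic side one must verify that the pairwise independence and the dithers of the \emph{two} nested coset codes -- which share $g_I$ and $g_{O/I}$ -- genuinely make $u^n(a',l')$ behave as an independent uniform word relative to the \emph{whole} transmitted state, not merely relative to $u^n(a,l)$; this is more delicate than the analogous statement for the freshly drawn $x_1^n(m_1')$, and it is also the origin of the extra $q^{2k}$ and the $2\log q-H(V_2)-H(V_3)$ appearing in the rate condition. On the quantum side the operators $\pi_\rho$, $\pi_{m_1'}$, $\pi_{m_1'}^{a',l'}$, $\pimal$ and $\rho_{c,\ulinem}^{Y_1}$ do not commute, so every substitution above must be paired with a pinching or gentle-measurement estimate as in Lemma~\ref{lem:LemmaPinching} and \cite{fawzi2012classical}, and one has to check that the several accumulating $\ell_1$-errors stay uniformly $o(1)$ while the exponential counting factors and the single-letter exponent are simultaneously balanced -- it is this bookkeeping, rather than any isolated inequality, that does the real work.
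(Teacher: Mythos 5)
Your proposal follows essentially the same route as the paper's proof: the same decomposition over incorrect triples $(m_1',a',l')$, the same key algebraic fact that $u^n(a',l')$ is jointly uniform with (and effectively independent of) the two transmitted coset words when $a'\neq a_2\oplus a_3$ and $l'\neq m_2\oplus m_3$ (yielding the $q^{-3n}$ probability against the $2^{nR_1}q^{3k}q^{l}$ count, with the $q^{2k}$ from the representatives $a_2,a_3$), and the same typical-projector trace and operator-norm bounds producing the $H(V_2,V_3)$, $H(U)$ and $I(Y_1;X_1,U)_{\sigma_1}$ terms. The only cosmetic difference is that you package the quantum estimates as an averaged "packing" operator and invoke pinching for the transmitted piece, whereas the paper runs a direct chain of operator inequalities ($\pi_{u^n}\rho_{u^n}\pi_{u^n}\le\rho_{u^n}$, $\pi_\rho\rho^{\otimes n}\pi_\rho\le 2^{-n(S(Y_1)-\delta)}\pi_\rho$, $\tr\pi_{\hat x_1^n,\hat u^n}\le 2^{n(S(Y_1|X_1,U)+\delta)}$), but the accounting and the resulting rate condition are identical.
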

\begin{proof}
The proof is provided in Appendix \ref{appx:proofofPropT_23}
\end{proof}
For the terms in the expression $T_{3}$,  we split $T_{3}$ as $T_3\deq T_{31} + T_{32}+T_{33}+T_{34}$, where
\begin{align}
    T_{31}\deq \sum_{a'\neq a}\tr(\gamma_{m_1}^{a',l} \pial\rho_{c,\ulinem}^{Y_1}\pial), &\quad T_{32} \deq \sum_{l'\neq l}\tr(\gamma_{m_1}^{a,l'} \pial\rho_{c,\ulinem}^{Y_1}\pial),  \nonumber \\ 
    T_{33} \deq \sum_{\substack{m_1'\neq m_1,\\a'\neq a}}\tr(\gamma_{m_1'}^{a',l} \pial\rho_{c,\ulinem}^{Y_1}\pial),&\quad  T_{34} \deq \sum_{\substack{m_1'\neq m_1,\\l'\neq l}}\tr(\gamma_{m_1'}^{a,l'} \pial\rho_{c,\ulinem}^{Y_1}\pial).
\end{align} 
As mentioned earlier, the analysis of these term follows from the analysis performed for the terms $T_{2i}, i \in [3],$ and further these terms  do not contribute to any new additional rate constraints. However, for the sake of completeness, we briefly indicate how each of these term scan be bounded using the ones corresponding to $T_2$.
To begin with, consider the terms $T_{31}$ and $T_{32}$. One can perform identical analysis as for the term $T_{22}$ (see Appendix \ref{appx:proofofPropT_22}) and obtain the following bounds.
\begin{align}
    \EE[T_{31}] &\leq \text{exp}_2\left(\frac{3k}{n}\log{q}- \left( {3\log{q}} -H(V_1,V_2) -H(U) +  I(Y_1;U|X_1)_{\sigma_1}  + \delta_{T_{22}}\right) \right), \label{eq:boundsT_31} \\
    \EE[T_{32}] &\leq \text{exp}_2\left(\frac{2k+l}{n}\log{q}- \left( {3\log{q}} -H(V_1,V_2) -H(U) +  I(Y_1;U|X_1)_{\sigma_1}  + \delta_{T_{22}}\right) \right), \label{eq:boundsT_32}
\end{align}
where $\text{exp}_2(x) \deq 2^{x}.$
Note that the exponents in the right hand side terms \eqref{eq:boundsT_31} and \eqref{eq:boundsT_32} are always negative given the bound in Proposition \ref{prop:LemmaT_22} is true. Hence $T_{31}$ and $T_{32}$ can be made arbitrarily small for sufficiently large $n$ without any additional constraints.

Similarly, consider the terms $T_{33}$ and $T_{34}$. Using an identical analysis as for the term $T_{23}$ (see Appendix \ref{appx:proofofPropT_23}) we obtain
\begin{align}
    \EE[T_{33}] & \leq \text{exp}_2\left(R_1 + \frac{3k}{n}\log{q} - \left( {3\log{q}} -H(V_1,V_2) - H(U) +  I(Y_1;X_1,U)_{\sigma_1}  + \delta_{T_{23}}\right) \right), \label{eq:boundsT_33}\\
    \EE[T_{32}] &\leq \text{exp}_2\left(R_1 + \frac{2k+l}{n}\log{q} - \left( {3\log{q}} -H(V_1,V_2) - H(U) +  I(Y_1;X_1,U)_{\sigma_1}  + \delta_{T_{23}}\right) \right), \label{eq:boundsT_34}
\end{align}
Again observe that the exponents in the right hand side of \eqref{eq:boundsT_33} and \eqref{eq:boundsT_34} are always negative given the bound in Proposition \ref{prop:LemmaT_23} is true. Hence $T_{33}$ and $T_{34}$ can be made arbitrarily small for sufficiently large $n$ without any additional constraints.
Having completed the proof for the terms in $T$,  we now provide the result stating: NCC codes achieve capacity of a CQ-PTP channel (as discussed in the proof of Proposition \ref{prop:Lemma for S_23}).
\end{proof}

\section{Coset Codes for communicating over CQ-PTP}
\label{Sec:NCCAchieveCQ-PTPCapacity}
As discussed in Sec.~\ref{Sec:AchReg3to1CQIC}, here we shall build and analyze a NCC for a point-to-point CQ channel \cite{2013Bk_Wil} and employ it as a module for the $3to1$ CQ-IC result.
Towards that, we begin by formalizing the definition of a CQ-PTP code.
\begin{definition}
A CQ-PTP code $c_{m}=(n,\mathcal{I},e,\lambda_{\mathcal{I}})$ for a CQ-PTP $(\rho_{x} \in \mathcal{D}(\mathcal{H}_{Y}):x \in \mathcal{X})$ consists of (i) an index set $\mathcal{I}$, (ii) and encoder map $e:\mathcal{I}\rightarrow \mathcal{X}^{n}$ and a decoding POVM $\lambda_{\mathcal{I}} = \{ \lambda_{m} \in \mathcal{P}(\mathcal{H}_{Y}): m \in \mathcal{I} \}$. For $m \in \mathcal{I}$, we let $\rho_{c,m}^{\otimes n} = \otimes_{i=1}^{n}\rho_{x_{i}}$ where $e(m) = x_{1}\cdots x_{n}$.
\end{definition}

\begin{definition}
A CQ-PTP code $(n,\mathcal{I}=\mathcal{F}_{q}^{l},e,\lambda_{\mathcal{I}})$ is an NCC CQ-PTP if there exists an $(n,k,g_I,g_{O/I},b^n)$ NCC such that $e(m) \in \{u^{n}(a,m) : a \in \CalF_q^k\}$ for all $m \in \CalF_q^l$.
\end{definition}

\begin{theorem}
\label{Thm:NCCAchievesCQPTP}
 Given a CQ-PTP $(\rho_{v} \in \mathcal{D}(\mathcal{H}_{Y}): v \in \CalF_q)$ and a PMF $p_{V}$ on $\mathcal{V}$, $\epsilon > 0$ there exists a CQ-PTP code $c=(n,\mathcal{I}=\mathcal{F}_{q}^{l},e,\lambda_{\mathcal{I}})$ such that (i) $q^{-l}\sum_{\hatm \neq [\mathcal{I}]\setminus\{m\}}\tr(\lambda_{\hatm} \rho^{\otimes n}_{c,m}) \leq \epsilon$, (ii) $c=(n,\mathcal{I}=\mathcal{F}_{q}^{l},e,\lambda_{\mathcal{I}})$ is a NCC CQ-PTP, (iii) $\frac{k\log_{2}q}{n} >  \log_{2}q - H(V)$ and $\frac{(k+l)\log_{2}q}{n} < \log_2{q} - H(V) +\chi(\{p_v,\rho_v\}) $ for all $n$ sufficiently large.
\end{theorem}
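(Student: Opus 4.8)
The plan is to establish the claim by the standard random-coding argument over the nested-coset ensemble described above, with two separate error events to control. First I would fix the PMF $p_V$ on $\CalF_q$ and the generator matrices $G_I \in \CalF_q^{k\times n}$, $G_{O/I}\in\CalF_q^{l\times n}$, the dither $b^n$, and the coset-representative selections $\alpha(m)$ drawn according to the distribution displayed just before the Error Analysis subsection; I would pick $k/n$ and $(k+l)/n$ to satisfy the two inequalities in claim (iii) with strict slack, say $k\log_2 q/n = \log_2 q - H(V) + \mu$ and $(k+l)\log_2 q/n = \log_2 q - H(V) + \chi(\{p_v,\rho_v\}) - \mu$ for small $\mu>0$. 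I would use the POVM $\lambda_{\CalI_j}^{Y_j}=\{\lambda_{a_j,m_j}^{Y_j}\}$ already constructed in the decoding description (the square-root / pretty-good measurement built from $\zeta_{a,m}=\pi_\rho^j\pi_{a,m}^j\pi_\rho^j$), taking the decoded message to be the $m$-component of the recovered pair $(a,m)$.

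The second step is the error decomposition. The covering (encoding) error — the event $\theta(m)=0$ that no codeword in the coset $\{u^n(a,m):a\in\CalF_q^k\}$ is $p_V$-typical — is handled exactly as in Proposition~\ref{prop:PTP:Lemma for E} (the estimate from \cite[App.~B]{MACwithStates_ArunPad_SandeepPra}), which goes to zero precisely because $k\log_2 q/n > \log_2 q - H(V)$; this gives claim~(iii)'s first inequality and contributes an $o(1)$ term. Conditioned on the covering success, I would apply Hayashi--Nagaoka to split $\tr((I-\lambda_{a_m,m}^{Y_j})\rho_{c,m}^{\otimes n})$ into a ``correct-codeword'' term $2(1-\tr(\zeta_{a_m,m}\rho_{c,m}^{\otimes n}))$ and a ``competing-codeword'' term $4\sum_{(a',m')\neq(a_m,m)}\tr(\zeta_{a',m'}\rho_{c,m}^{\otimes n})$. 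The first term is $o(1)$ by typical-projector properties (gentle-measurement lemma plus the usual bounds $\tr(\pi_\rho^j\rho_{c,m}^{\otimes n})\ge 1-\epsilon$, $\tr(\pi_{a,m}^j\rho_{c,m}^{\otimes n})\ge 1-\epsilon$), analogous to Lemma~\ref{lem:LemmaPinching} and Proposition~\ref{prop:Lemma for T_1}. For the competing term I would take expectation over the ensemble and split the sum by whether $m'=m$ (only $a'\neq a$) or $m'\neq m$. The key combinatorial point, inherited from the nested-coset structure: for a fixed message $m$ the typical coset-representatives $a$ yield a codeword that, after pairwise averaging, is distributed like an independent $p_V^n$-typical sequence; the number of such typical codewords is $\approx q^k \cdot 2^{-n(\log_2 q - H(V))}$, and the trace $\EE\,\tr(\pi_\rho^j\pi_{a',m'}^j\pi_\rho^j\,\rho_{c,m}^{\otimes n})\lesssim 2^{-n\chi}$ by the standard packing estimate \cite[Thm.~2 of HSW type]{2013Bk_Wil}. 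Counting: the $m'\neq m$ contribution is bounded by $2^{n(l\log_2 q/n + k\log_2 q/n)}\cdot 2^{-n(\log_2 q - H(V))}\cdot 2^{-n\chi} = 2^{-n\mu'}$ exactly when $(k+l)\log_2 q/n < \log_2 q - H(V) + \chi$, which is the second inequality of claim~(iii); the $m'=m,\,a'\neq a$ contribution is dominated by the same bound (fewer codewords). Collecting, $q^{-l}\sum_{\hat m\neq m}\tr(\lambda_{\hat m}\rho_{c,m}^{\otimes n})\le\epsilon$ for $n$ large, giving claim~(i), while claim~(ii) holds by construction since $e(m)\in\{u^n(a,m):a\in\CalF_q^k\}$.

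The main obstacle is the packing/counting step for the competing codewords: unlike the i.i.d.\ HSW ensemble, here the codewords $u^n(a,m)=aG_I\oplus mG_{O/I}\oplus b^n$ are only \emph{pairwise} independent, and one must additionally condition on the typicality-based selection of $\alpha(m)$, which biases the codeword law away from $p_V^n$. I expect to handle this exactly as in the referenced nested-coset MAC-with-states analysis: show that the selection rule distorts the relevant single-codeword and pair-codeword distributions by only an $o(1)$ amount (because the expected number of typical representatives $\Theta(m)$ concentrates), so that the pairwise-independent structure suffices for the union bound after averaging. A secondary technical nuisance is that the pretty-good-measurement normalization $(\sum\zeta)^{-1/2}$ requires the Hayashi--Nagaoka operator inequality rather than a naive bound, but that is routine. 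Everything else — choosing $\mu$, invoking the gentle-measurement and typical-projector lemmas, letting $\delta\searrow0$ — is bookkeeping of the kind already carried out for $S_1$ in the proof of Theorem~\ref{thm:3to1CQIC}.
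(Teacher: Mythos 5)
Your proposal matches the paper's proof in all essentials: the same random nested-coset ensemble and selection rule, the covering-error bound forcing $\frac{k\log_2 q}{n} > \log_2 q - H(V)$, the Hayashi--Nagaoka decomposition into a correct-codeword term (handled by pinching/gentle measurement) and a competing-codeword sum split by whether $\hat{m}=m$ or $\hat{m}\neq m$, each bounded via pairwise independence and the standard packing estimate to yield exactly the constraints in (iii). The only stylistic difference is in handling the bias introduced by the typicality-based choice of $\alpha(m)$: where you propose a concentration argument for $\Theta(m)$, the paper sidesteps the issue by upper-bounding the probability of the selection event $\mathcal{J}$ by that of the larger event $\mathcal{K}$ that drops the selection conditions, after which pairwise independence applies directly.
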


\begin{proof}
The proof has two parts: (i) error probability analysis for a generic fixed code and (ii) an upper bound on the latter via code randomization.
\med\textit{Upper bound on Error Prob. for a generic fixed code} : Consider a generic NCC $(n,k,l,g_{I},g_{O/I},b^{n})$ with its range space $v^{n}(a,m) = ag_{I}\oplus_{q}mg_{O/I}\oplus_{q}b^{n}: (a,m) \in \mathcal{V}^{k} \times \mathcal{V}^{l}$. We shall use this and define a CQ-PTP code $(n,\mathcal{I}=\mathcal{F}_{q}^{l},e,\lambda_{\mathcal{I}})$ that is an NCC CQ-PTP. Towards that end, let $\theta(m) \define \sum_{a \in \mathcal{V}^{k}}\mathds{1}_{\left\{ v^{n}(a,m) \in T_{\delta}^{n}(p_{V}) \right\}}$ and 
\begin{eqnarray}
\label{Eqn:NoTypicalElements}
 s(m) \define \begin{cases} \{a \in \mathcal{V}^{K}: v^{n}(a,m) \in T_{\delta}^{n}(p_{V})\}&\mbox{if }\theta(m) \geq 1 \\\{0^{k}\}&\mbox{if }\theta(m) = 0,\end{cases}
 \nonumber
\end{eqnarray}
for each $m \in \mathcal{V}^{l}$. For $m \in \mathcal{V}^{l}$, a predetermined element $a_{m} \in s(m)$ is chosen. On receiving message $m \in \mathcal{V}^{l}$, the encoder prepares the  state $\rho_{m}^{\otimes n} \define \rho^{\otimes n}_{v^{n}(a_{m},m)} \define \otimes_{i=1}^{n}\rho_{v_{i}(a_{m},m)}$ and is communicated. The encoding map $e$ is therefore determined via the collection $(a_{m} \in s(m): m \in \mathcal{V}^{l})$.

Towards specifying the decoding POVM, 
for any $v^{n}\in \CalV^{n}$, let $\pi_{v^{n}}$ be the conditional typical projector as in \cite[Defn. 15.2.4]{2013Bk_Wil} with respect $\rho_v$ and
let $\pi_{\rho}$ be the (unconditional) typical projector of the state $\rho\define \sum_{v \in \mathcal{V}} p_{V}(v)\rho_{v}$ as in \cite[Defn. 15.1.3]{2013Bk_Wil}.
\footnote{We have done away with superscript $n$ - dimension of the underlying space - to reduce cluttter.} 
For $(a,m) \in \CalV^{k}\times \CalV^{l}$, we let $\pi_{a,m} \define \pi_{v^{n}(a,m)}\mathds{1}_{\{ v^{n}(a,m) \in T_{\delta}^{n}(p_{V}) \}}$. We let $\lambda_{\mathcal{I}} \define \{ \sum_{a \in \CalV^{k}}\lambda_{a,m} : m \in \mathcal{I}=\CalV^{l},\lambda_{-1} \}$, where  
\begin{align}
 \lambda_{a,m} \!\define\! \Big(\! \sum_{\hata \in \CalV^{k}}\! \sum_{\hatm \in \CalV^{l}} \!\!\gamma_{\hata,\hatm}\Big)^{-{1}/{2}}\!\!\! \gamma_{a,m}\Big(\! \sum_{\tildea \in \CalV^{k}} \sum_{\tildem \in \CalV^{l}}\!\! \gamma_{\tildea,\tildem}\Big)^{-{1}/{2}}\nonumber
\end{align}
$\lambda_{-1}\define I-\sum_{m \in \CalV^{l}}\sum_{a \in \CalV^{k}}\lambda_{a,m}$ and $\gamma_{a,m} \define \pi_{\rho}\pi_{a,m}\pi_{\rho}$. Since $0 \leq \gamma_{a,m} \leq I$, we have $0 \leq \lambda_{a,m} \leq I$. 
The latter lower bound implies $\lambda_{\mathcal{I}} \subseteq \mathcal{P}(\mathcal{H})$. The same lower bound coupled with the definition of the generalized inverse implies $0 \leq \sum_{a \in \CalV^{k}}\sum_{m \in \CalV^{l}}\lambda_{a,m} \leq I$. We thus have $0 \leq \lambda_{-1} \leq I$. 
It can be verified that $\lambda_{\mathcal{I}}$ is a POVM.
In essence, the elements of this POVM is identical to the standard POVMs \cite{2012Bk_Hol,2013Bk_Wil}, except the POVM elements corresponding to a coset have been added together. Indeed, since each coset corresponds to one message, there is no need to disambiguate within the coset.
We have thus associated an NCC $(n,k,l,g_{I},g_{O/I},b^{n})$ and a collection $(a_{m} \in s(m): m \in \mathcal{V}^{l})$ with a CQ-PTP code. 
The error probability of this code is
\begin{eqnarray}
 \label{Eqn:CQ-PTPErrorProbability}
 q^{-l}\sum_{m \in \mathcal{I}}\mbox{tr}((I-\sum_{a \in \CalV^{k}}\lambda_{a,m})\rho_{m}^{\otimes n}) \leq  q^{-l}\sum_{m \in \mathcal{I}}\mbox{tr}((I-\lambda_{a_{m},m})\rho_{m}^{\otimes n})
\nonumber
\end{eqnarray}
Denoting event $\mathscr{E}=\{ \theta(m)< 1 \}$, its complement $\mathscr{E}^{c}$ and the associated indicator functions $\mathds{1}_{\mathscr{E}}, \mathds{1}_{\mathscr{E}^c}$ respectively, a generic term in the RHS of the above sum satisfies
 \begin{align}
{\mbox{tr}((I-\lambda_{a_{m},m})\rho_{m}^{\otimes n})\mathds{1}_{\mathscr{E}^{c}}
 + \mbox{tr}((I-\lambda_{a_{m},m})\rho_{m}^{\otimes n})\mathds{1}_{\mathscr{E}}}  \leq& \mathds{1}_{\mathscr{E}^{c}}+\sum_{i=1}^{3}T_{2i},
 \end{align}
 where
 \begin{align}
 T_{21} = 2\mbox{tr}((I-\gamma_{a_{m},m})\rho_{m}^{\otimes n})\mathds{1}_{\mathscr{E}} , \quad
{T_{22}= 4\sum_{\hata \neq a_{m} }\mbox{tr}(\gamma_{\hata,m}\rho_{m}^{\otimes n})\mathds{1}_{\mathscr{E}},\quad  T_{23} = 4\sum_{\hatm \neq m}\sum_{\tildea  }\mbox{tr}(\gamma_{\tildea,\hatm}\rho_{m}^{\otimes n})\mathds{1}_{\mathscr{E}}},
 \nonumber
\end{align}
and the inequality follows by Hayashi-Nagaoka inequality \cite{hayashi2003general}. 
 for  $0 \leq S \leq I$, and $T \geq 0$,
with $S$ and $T$ identified as $ \gamma_{a_{m},m}$ and $ \sum_{\hata \neq a_{m}}  \gamma_{\hata,m} + \sum_{\hata \in \CalV^{k}} \sum_{\hatm \neq m} \gamma_{\hata,\hatm},$ respectively. 
Note that $S$ and $T$ satisfy the required hypothesis which can be verified from earlier stated facts.

\textit{Distribution of the Random Code} : The objects $g_{I}\in \mathcal{V}^{k \times n},g_{O/I} \in \mathcal{V}^{l \times n},b^{n} \in \mathcal{V}^{n}$ and the collection $(a_{m} \in s(m): m \in \mathcal{V}^{l})$ specify a NCC CQ-PTP code unambiguously. A distribution for a random code is therefore specified through a distribution of these objects. We let upper case letters denote the associated random objects, and obtain
\begin{eqnarray}
 \label{Eqn:}
 \CalP\!\left( \begin{array}{c} G_{I}=g_{I},G_{O/I}=g_{O/I}\\B^{n}=b^{n},A_{m}=a_{m}: m \in S(m) \end{array}  \right) = q^{-(k+l+1)n}\prod_{m \in \CalV^{l}}\frac{1}{\Theta(m)},\nonumber
\end{eqnarray}
and analyze the expectation of $\mathscr{E}$ and the terms $T_{2i}; i\in [1,3]$ in regards to the above random code. 
We begin by $\mathbb{E}_{\mathcal{P}}[\mathscr{E}] = \mathcal{P}(\sum_{a \in \CalV^{k}} \mathds{1}_{\{V^{n}(a,m) \in T_{\delta}^{n}(p_{V}) \}} < 1). $ For this, we provide the following proposition.  
\begin{prop}\label{prop:PTP:Lemma for T1}
There exist  $\epsilon_{T_1}(\delta), \delta_{T_1}(\delta),$ such that for  all  $\delta$ and sufficiently large $n$, we have $\EE_{\CalP}\left[\mathscr{E}\right] \leq \epsilon_{{T_1}}(\delta) $, if $ \frac{k}{n} \geq \log{q} - H(V) + \delta_S$, where  $\epsilon_{{S}},\delta_{S} \searrow 0$ as $\delta \searrow 0$.
\end{prop}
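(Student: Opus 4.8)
The plan is to prove this by a second-moment argument on the number of $p_V$-typical codewords sitting in the coset indexed by a fixed message $m\in\CalF_q^l$. Set $N\define\theta(m)=\sum_{a\in\CalF_q^k}\mathds{1}_{\{V^n(a,m)\in T_\delta^n(p_V)\}}$. By the symmetry of the code ensemble (the law of the set $\{V^n(a,m):a\in\CalF_q^k\}$ is the same for every $m$, since the dither $B^n$ is uniform on $\CalF_q^n$ and independent of $(G_I,G_{O/I})$), $\EE_\CalP[\mathds{1}_\mathscr{E}]=\CalP(N=0)$ does not depend on $m$, so it suffices to bound it for one $m$. As $N$ is integer-valued, $\mathscr{E}=\{N<1\}=\{N=0\}\subseteq\{|N-\EE N|\ge\EE N\}$, so Chebyshev's inequality gives $\CalP(\mathscr{E})\le\Var(N)/(\EE N)^2$; thus I need a lower bound on $\EE N$ and an upper bound on $\Var(N)$.

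The crux — the step I expect to require the most care — is a pairwise-uniformity property of the NCC ensemble. First, since $B^n$ is uniform on $\CalF_q^n$ and independent of $(G_I,G_{O/I})$, each codeword $V^n(a,m)=aG_I\oplus_q mG_{O/I}\oplus_q B^n$ is marginally uniform on $\CalF_q^n$, so $\CalP(V^n(a,m)\in T_\delta^n(p_V))=p$ where $p\define|T_\delta^n(p_V)|/q^n$. Second, for $a\neq a'$ we have $V^n(a',m)=V^n(a,m)\oplus_q(a'-a)G_I$; conditioned on $(G_I,G_{O/I})$ the law of $V^n(a,m)$ is a deterministic translate of $B^n$ and hence does not depend on $G_I$, while $(a'-a)G_I$ is a function of $G_I$ alone which — because $a'-a\neq0$ and the entries of $G_I$ are i.i.d.\ uniform — is itself uniform on $\CalF_q^n$. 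Hence $(V^n(a,m),V^n(a',m))$ is uniform on $\CalF_q^n\times\CalF_q^n$, giving $\CalP(V^n(a,m)\in T_\delta^n,\,V^n(a',m)\in T_\delta^n)=p^2$. Consequently $\EE N=q^kp$ and $\EE[N^2]=q^kp+q^k(q^k-1)p^2$, so $\Var(N)=q^kp(1-p)\le\EE N$, and therefore $\CalP(\mathscr{E})\le1/\EE N=1/(q^kp)$.

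It remains to show $\EE N=q^kp$ grows exponentially in $n$. By the standard typical-set cardinality bound $|T_\delta^n(p_V)|\ge(1-o(1))\,2^{n(H(V)-\eta(\delta))}$ with $\eta(\delta)\searrow0$ as $\delta\searrow0$, we get $q^kp\ge(1-o(1))\,2^{\,k\log q-n\log q+nH(V)-n\eta(\delta)}$, reading logarithms in bits consistently with Theorem~\ref{Thm:NCCAchievesCQPTP}. Taking $\delta_S(\delta)\define 2\eta(\delta)$ (which $\searrow0$ as $\delta\searrow0$), the hypothesis $\frac{k}{n}\ge\log q-H(V)+\delta_S$ makes the exponent at least $n\big(\delta_S-\eta(\delta)\big)=n\eta(\delta)\ge0$, so $\EE N\to\infty$ and $\CalP(\mathscr{E})\le1/\EE N\to0$. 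Hence for each fixed $\delta>0$ and all $n$ large enough, $\EE_\CalP[\mathscr{E}]$ is below any prescribed $\epsilon_{T_1}(\delta)$ with $\epsilon_{T_1}(\delta)\searrow0$ as $\delta\searrow0$, which is the assertion. Apart from the pairwise-uniformity computation above, every step is routine; this is precisely the covering/encoding-error estimate for nested coset codes and parallels \cite[App.~B]{MACwithStates_ArunPad_SandeepPra} cited for Proposition~\ref{prop:PTP:Lemma for E}.
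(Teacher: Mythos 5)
Your proof is correct, and it takes essentially the same route as the paper: the paper's "proof" of this proposition is a pointer to \cite[App.~B]{MACwithStates_ArunPad_SandeepPra}, which is precisely the second-moment/Chebyshev covering argument you carry out, with the pairwise uniformity of $(V^n(a,m),V^n(a',m))$ (via the uniform dither and the uniformity of $(a'-a)G_I$ for $a'\neq a$) as the key computation. You have simply supplied the details the paper delegates to that reference; I see no gaps.
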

\begin{proof}
The proof follows from Appendix B of  \cite{MACwithStates_ArunPad_SandeepPra}. 
\end{proof}
We now consider $T_{21}$. Since this term can be bounded by a using straight-forward extension of the pinching technique described in \cite[Def. 15.2.4]{2013Bk_Wil}, we provide its complete details in \cite{arxiv_currentPaper}. 
\noindent We now analyze $\mathbb{E}_{\mathcal{P}}[T_{22}]$.
Denoting \begin{eqnarray}\label{def:JandK}
\mathcal{J} \define \left\{  \begin{array}{c} \Theta(m) \geq 1 ,\!V^{n}(\hat{a},\hatm=\hatx^{n}\\A_{m}=d,V^{n}(d,m)=x^{n} \end{array} \right\}\subseteq\! \mathcal{K} \define\! \left\{  \begin{array}{c} V^{n}(\hat{a},\hatm)=\hatx^{n}\\V^{n}(d,m)=x^{n} \end{array}\right\}
\end{eqnarray} we perform the following steps.
\begin{align}
 \mathbb{E}_{\mathcal{P}}[T_{22}] &= \sum_{\hata \in \CalV^{k}  }\mathbb{E}_{\mathcal{P}}[\mbox{tr}(\Gamma_{\hata,m}\rho_{m}^{\otimes n})\mathds{1}_{\{ \theta(m) \geq 1 \}}\mathds{1}_{\{\hata \neq A_{m}\}}] \nonumber \\
 &= \sum_{d \in \CalV^{k}}\sum_{\hata \in \CalV^{k} }\sum_{x^{n} \in T_{\delta}^{n}(p_{V})}\sum_{\hatx^{n} \in \CalV^{n}}\mathbb{E}\left[\mbox{tr}(\Gamma_{\hata,m}\rho_{m}^{\otimes n})\11_{\hata \neq d}\11_{\CalJ}\right]  \nonumber\\
&= \sum_{d \in \CalV^{k}}\sum_{\hata \neq d }\sum_{x^{n} \in T_{\delta}^{n}(p_{V})}\sum_{\hatx^{n} \in \CalV^{n}}\mathbb{E}\left[\mbox{tr}(\Gamma_{\hata,m}\rho_{m}^{\otimes n})\11_{\CalJ}\right] \nonumber
\end{align}
where the restriction of the summation $x^{n}$ to $T_{\delta}^{n}(p_{V})$ is valid since $S(m) \geq \tau_{c} > 1$ forces the choice $A_{m} \in S(m)$ such that $V^{n}(A_{m},m) \in T_{\delta}^{n}(p_{V})$. Going further, we have 
\begin{align}\label{eq:PTP_T22}
 \mathbb{E}_{\mathcal{P}}[T_{22}] &= \sum_{\substack{d,\hata \in \CalV^{k}\\ \hata \neq d}}\sum_{x^{n} \in T_{\delta}^{n}(p_{V})}\sum_{\hatx^{n} \in T_{\delta}^{n}(p_{V})}\mathbb{E}\left[\mbox{tr}(\pi_{\rho}\pi_{\hatx^{n}}\pi_{\rho}\rho_{x^{n}}^{\otimes n})\11_{\CalJ}\right]\nonumber\\
 &= \sum_{\substack{d,\hata \in \CalV^{k}\\ \hata \neq d}}\sum_{x^{n} \in T_{\delta}^{n}(p_{V})}\sum_{\hatx^{n} \in T_{\delta}^{n}(p_{V})}\mbox{tr}(\pi_{\rho}\pi_{\hatx^{n}}\pi_{\rho}\rho_{x^{n}}^{\otimes n})\mathcal{P}(\CalJ)\nonumber\\
 &\stackrel{(a)}{\leq}\sum_{\substack{d,\hata \in \CalV^{k}\\ \hata \neq d}}\sum_{\hat{x}^n  \in T_{\delta}^{n}(p_{V})}\mbox{tr}(\pi_{\hatx^{n}}\pi_{\rho} )\mathcal{P}(\CalJ)2^{-n\left[S(\rho)-H(p_{V})+\epsilon_V\right]} \nonumber\\
&\stackrel{(b)}{\leq} \sum_{\substack{d,\hata \in \CalV^{k}\\ \hata \neq d}}\sum_{{\hatx^{n} \in T_{\delta}^{n}(p_{V})}}\mbox{tr}(\pi_{\hatx^{n}}\pi_{\rho})\mathcal{P}(\mathcal{K})2^{-n\left[S(\rho)-H(p_{V})+\epsilon_V\right]} \nonumber\\
 &\stackrel{(c)}{=} \sum_{\substack{d,\hata \in \CalV^{k}\\ \hata \neq d}}\sum_{{\hatx^{n} \in T_{\delta}^{n}(p_{V})}}\mbox{tr}(\pi_{\hatx^{n}}\pi_{\rho})\frac{1}{q^{2n}}2^{-n\left[S(\rho)-H(p_{V})+\epsilon_V\right]}\nonumber\\
 &\stackrel{(d)}{\leq} 2^{-n\left[ \chi(\{p_{V};\rho_{v}\})+ \epsilon_V  -2H(p_{V}) -\frac{2k}{n}\log{q} + 2\log{q} \right]} 
\end{align}
where the restriction of the summation $\hatx^{n}$ to $T_{\delta}^{n}(p_{V})$ follows from the fact that $\pi_{\hat{x}^{n}}$ is the zero projector if $\hatx^{n} \notin T_{\delta}^{n}(p_{V})$, (a) follows from the operator inequality $ \sum_{x^n \in T_{\delta}(p_{V})}\pi_{\rho}\rho_{x^{n}}\pi_{\rho} \leq 2^{n(H(p_{V})+\epsilon_V(\delta)) }\pi_{\rho}\rho^{\otimes n}\pi_{\rho} \leq 2^{n(H(p_{V})+\epsilon_V(\delta)-S(\rho)) }\pi_{\rho}$ found in \cite[Eqn. 20.34, 15.20]{2017BkOnline_Wil}, (b) follows from Definition \ref{def:JandK}, (c) follows from pairwise independence of the distinct codewords, and (d) follows from $\pi_{\rho} \leq I $ and \cite[Eqn. 15.77]{2017BkOnline_Wil} and $\epsilon_V(\delta) \searrow 0$ as $\delta \searrow 0$.
 We now derive an upper bound on $\mathbb{E}_{\mathcal{P}}[T_{23}]$. 
We have
\begin{align}
\mathbb{E}&_{\mathcal{P}}[T_{23}] = \sum_{d,\hata \in \CalV^{k}}\sum_{\hatm \neq m }\sum_{\substack{x^{n},\hatx^{n} \in \\T_{\delta}^{n}(p_{V})}}\!\!\!\mathbb{E}\!\left[\mbox{tr}(\pi_{\rho}\Pi_{\hata,\hatm}\pi_{\rho}\rho_{A_{m},m}^{\otimes n})\mathds{1}_{\mathcal{J}}\right] \nonumber\\
&{=} \sum_{d,\hata \in \CalV^{k}}\sum_{\hatm \neq m }\sum_{\substack{x^{n},\hatx^{n} \in T_{\delta}^{n}(p_{V})}}\mbox{tr}(\pi_{\hatx^{n}}\pi_{\rho}\rho_{x^{n}}^{\otimes n}\pi_{\rho})\mathcal{P}(\mathcal{J})\nonumber\\
 &{\leq} \sum_{d,\hata \in \CalV^{k}}\sum_{\hatm \neq m }\sum_{\substack{\hatx^{n} \in T_{\delta}^{n}(p_{V})}}\!\!\!\!\mbox{tr}(\pi_{\hatx^{n}}\pi_{\rho})\mathcal{P}(\mathcal{J})2^{-n\left[S(\rho)-H(p_{V})+\epsilon_V\right]}\nonumber\\
 &{\leq}  \sum_{d,\hata \in \CalV^{k}}\sum_{\hatm \neq m }\sum_{\substack{\hatx^{n} \in T_{\delta}^{n}(p_{V})}}\!\!\!\!\mbox{tr}(\pi_{\hatx^{n}}\pi_{\rho})\mathcal{P}(\mathcal{K})2^{-n\left[S(\rho)-H(p_{V})+\epsilon_V\right]}\nonumber\\
&{=} \sum_{d,\hata \in \CalV^{k}}\sum_{\hatm \neq m }\sum_{\substack{\hatx^{n} \in T_{\delta}^{n}(p_{V})}}\!\!\!\!\mbox{tr}(\pi_{\hatx^{n}}\pi_{\rho})\frac{1}{q^{2n}}2^{-n\left[S(\rho)-H(p_{V})+\epsilon_V\right]}\nonumber\\
 &{\leq} \;\; 2^{-n\left[ \chi(\{p_{V};\rho_{v}\})+ 2\log_{2}q-2H(p_{V}) -\frac{2k+l}{n}\log_{2}q +\epsilon_V \right]} \nonumber
\end{align}
where (a) follows from the operator inequality $$ \sum_{x^n \in T_{\delta}(p_{V})}\pi_{\rho}\rho_{x^{n}}\pi_{\rho} \leq 2^{n(H(p_{V})+\epsilon_V(\delta)) }\pi_{\rho}\rho^{\otimes n}\pi_{\rho} \leq 2^{n(H(p_{V})+\epsilon_V(\delta)-S(\rho)) }\pi_{\rho}$$ found in \cite[Eqn. 20.34, 15.20]{2017BkOnline_Wil}, (b) follows from Definition \ref{def:JandK}, (c) follows from pairwise independence of the distinct codewords and (d) follows from $\pi_{\rho} \leq I $ and \cite[Eqn. 15.77]{2017BkOnline_Wil} and $ \epsilon_V(\delta) \searrow 0$ as $\delta \searrow 0$.
where the inequalities above uses similar reasoning as in \eqref{eq:PTP_T22}.
We have therefore obtained three bounds 
$\frac{k}{n} > 1-\frac{H(p_{V})}{\log_{2} q} $, $\frac{2k}{n} < 2+\frac{\chi(\{p_{V};\rho_{v}\})-2H(p_{V})}{\log_{2} q} $, $\frac{2k+l}{n} < 2+\frac{\chi(\{p_{V};\rho_{v}\})-2H(p_{V})}{\log_{2} q} $. A rate of $\chi(\{p_{V};\rho_{v}\})-\epsilon$ is achievable by choosing $\frac{k}{n} = 1-\frac{H(p_{V})}{\log_{2} q}+\frac{\epsilon}{2}$, $\frac{l}{n} = \frac{\chi(\{p_{V};\rho_{v}\})-\epsilon\log_2\sqrt{q}}{\log_{2} q}$ thus completing the proof.
\end{proof}

\section{Rate-region using NCC and message splitting for $3$to$1-$ CQIC}
 
 \begin{theorem} \label{thm:messageSplitThm}
Given a $3to1$-CQIC $(\rho_{\ulinex} \in \mathcal{D}(\mathcal{H}_{\ulineY}): \ulinex \in \ulineCalX)$ and a PMF $p_{U_2U_3V_2V_3X_2X_3} = p_{U_2V_2X_2}p_{U_3V_3X_3}$ on $\mathcal{U}_1\cross\mathcal{V}_1\cross\mathcal{X}_1\cross\mathcal{U}_2\cross\mathcal{V}_2\cross\mathcal{X}_2$ where $\mathcal{V}_1 = \mathcal{V}_2 = \mathcal{F}_q$, a rate triple is achievable if it satisfies the following: $R_j  \leq I(U_jX_j;Y_j)_{\sigma_j},$
\begin{align}
     R_1 \!&\leq\! \min_{j=2,3}\{0, H(U_j) - H(W|Y_1)_{\sigma_1} \} 
    + I(X_1;WY_1)_{\sigma_1}\nonumber \\
    R_1 \!+\! R_j \!&\leq\! I(X_j;Y_j|U_j)_{\sigma_j} \!\!\!+\! I(X_1;W,Y_1)_{\sigma_1}\!\!\!  + \!H(U_j)\! -\! H(W|Y_1)_{\sigma_1} \nonumber
\end{align}
for $j =2,3$, where
\begin{align*}
    \sigma_1^{\ulineY}&\deq  \sum_{x_1\in\mathcal{X}_1,w \in \CalF_q}\!\!\!p_{X_1}(x_1)p_{W}(w)\rho_{x_1,w}^{\ulineY}\otimes \ketbra{x_1}\otimes \ketbra{w},\nonumber \\
    \rho_{x_1,w}^{\ulineY} &\deq \!\!\sum_{\substack{u_2,v_2,x_2\\u_3,v_3,x_3}}\!\!p_{V_2,V_3U_2U_3X_2X_3|W}(v_2,v_3,u_2,u_3,x_2,x_3|w)\rho_{\ulinex}^{\ulineY} \nonumber \\
    \sigma_2 &= \!\!\!\!\!\sum_{v_1,v_2,v_3}\!\!\!\!\! p_{U_2U_3V_2V_3\ulineX}(u_2,u_3,v_2,v_3,\ulinex)\rho_{\ulinex}^{\ulineY}\otimes_{j=2}^{3} \ketbra{u_j,x_j}
\end{align*}for $W \deq U_2 \oplus U_3,$ and $\{\ket{u_j}\}$ and $\{\ket{x_j}\}$ as some orthonormal basis on $\CalH_Y$ for $j={2,3}.$
\end{theorem}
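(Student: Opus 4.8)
\textbf{Approach.} The plan is to bootstrap the single-interfered-receiver decoder of Theorem~\ref{thm:3to1CQIC} --- which already accomplishes the key step of letting receiver~$1$ decode its own codeword together with the \emph{bivariate} sum $W \deq U_2 \oplus U_3$ of the two interfering codewords --- by layering it with classical superposition/message-splitting at transmitters~$2$ and~$3$, and to close the per-receiver error analyses using the multi-terminal simultaneous decoder of \cite{sen2018one}. \emph{Codebook:} fix $p_{U_2V_2X_2}p_{U_3V_3X_3}$ and split the rate of user $j \in \{2,3\}$ as $R_j = \rho_j + \tau_j$ with $\rho_j,\tau_j \ge 0$. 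Transmitter~$1$ uses an i.i.d.\ $p_{X_1}^n$ code of rate $R_1$ exactly as in the proof of Theorem~\ref{thm:3to1CQIC}. For $j = 2,3$ the $\rho_j$-part is carried by a nested coset code: two NCCs $(n,k,l,g_I,g_{O/I},b_j^n)$ sharing $g_I,g_{O/I}$, with range $u_j^n(a_j,m_j^{(u)})$ and covering-based encoder $\alpha_j(\cdot)$ selecting a $p_{U_j}$-typical coset representative --- precisely the construction used for the $V_j$-layer in Theorem~\ref{thm:3to1CQIC}. Superposed on the chosen $u_j^n$, the $\tau_j$-part indexes a conditionally i.i.d.\ codeword $x_j^n \sim p_{X_j|U_j}^n(\cdot \mid u_j^n)$, which is transmitted. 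As in Theorem~\ref{thm:3to1CQIC}, algebraic closure gives that $W^n \deq u_2^n \oplus u_3^n$ is a codeword of the sum NCC with generators $g_I,g_{O/I}$, dither $b_2^n \oplus b_3^n$ and message $m_2^{(u)} \oplus m_3^{(u)}$, so receiver~$1$ need only resolve that sum.

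\textbf{Decoders.} Receiver~$1$ reuses the POVM $\lambda_{\CalI_1}^{Y_1}$ of Theorem~\ref{thm:3to1CQIC} verbatim under the identification $V_j \leftrightarrow U_j$, $U \leftrightarrow W$, producing $\hat m_1$ and the sum labels $(\hat a_2 \oplus \hat a_3,\ \hat m_2^{(u)} \oplus \hat m_3^{(u)})$. Receiver $j \in \{2,3\}$ must decode the pair $(m_j^{(u)},m_j^{(x)})$; for this we build a two-layer simultaneous-decoding POVM from the conditional typical projectors of $\rho_{u_j}^{Y_j}$ and $\rho_{u_j,x_j}^{Y_j}$, with the coset-summing of Theorem~\ref{Thm:NCCAchievesCQPTP} in the $U_j$-layer and ordinary superposition in the $X_j$-layer. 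The non-commutative bookkeeping for this composite POVM --- validity, pinching as in Lemma~\ref{lem:LemmaPinching}, and the gentle-operator and Hayashi--Nagaoka steps --- is exactly what the multi-terminal simultaneous decoder of \cite{sen2018one} supplies.

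\textbf{Error analysis and elimination.} As in the proof of Theorem~\ref{thm:3to1CQIC}, the operator bound of \cite{abeyesinghe2009mother} gives $\overline{\xi}(\ulinee,\ulinelambda) \le S_1 + S_2 + S_3$ with $S_j$ the error contribution at receiver $j$, and by Definition~\ref{Defn:3To1CQICs} the terms $S_2,S_3$ reduce to point-to-point quantities. Averaging over the ensemble: $\EE[S_1] \to 0$ provided the covering bound $\tfrac{k}{n}\log q \ge \log q - \min\{H(U_2),H(U_3)\} + \delta_E$ holds (Proposition~\ref{prop:PTP:Lemma for E}) and, with $\rho_j = \tfrac{l}{n}\log q$ playing the role of ``$R_j$'' in Theorem~\ref{thm:3to1CQIC}, $R_1 \le I(Y_1;X_1|W)_{\sigma_1}$, $\rho_j \le \min\{H(U_2),H(U_3)\} - H(W) + I(Y_1;W|X_1)_{\sigma_1}$ and $R_1 + \rho_j \le \min\{H(U_2),H(U_3)\} - H(W) + I(Y_1;WX_1)_{\sigma_1}$ for $j = 2,3$; while $\EE[S_j] \to 0$ provided $\tau_j \le I(X_j;Y_j|U_j)_{\sigma_j}$ together with the two-layer joint-decoding constraint (coset rate plus superposition rate against $I(U_jX_j;Y_j)_{\sigma_j}$ with the $\tfrac{k}{n}\log q$ covering overhead), which after eliminating $k/n$ yields $R_j \le I(U_jX_j;Y_j)_{\sigma_j}$. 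Substituting $R_j = \rho_j + \tau_j$ and applying Fourier--Motzkin elimination to the nonnegative auxiliaries $\rho_2,\rho_3,\tau_2,\tau_3$ yields the stated region; here one uses $X_1 \perp W$ so that $I(X_1;WY_1)_{\sigma_1} = I(X_1;Y_1|W)_{\sigma_1}$, the identity $H(U_j) - H(W) + I(Y_1;WX_1)_{\sigma_1} = H(U_j) - H(W|Y_1)_{\sigma_1} + I(X_1;WY_1)_{\sigma_1}$, and the observation that intersecting the two receiver-$1$ constraints on $\rho_j$ with the redundancy-free corner $\rho_j = 0$ produces the term $\min_{j=2,3}\{0, H(U_j) - H(W|Y_1)_{\sigma_1}\}$. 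A cost-typicality argument enforces $\EE[\kappa_j(X_j)] \le \tau_j$, and a standard expurgation converts the ensemble-average bound into a deterministic code.

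\textbf{Main obstacle.} The delicate part is the \emph{composite} decoder at receivers~$2$ and~$3$: one must reconcile the coset-code covering requirement --- which forces the single shared parameter $\tfrac{k}{n}\log q$ to exceed $\log q - \min\{H(U_2),H(U_3)\}$ --- with the superposed $X_j$-layer so that the rate bookkeeping still collapses to $R_j \le I(U_jX_j;Y_j)_{\sigma_j}$ and to the claimed $R_1 + R_j$ bound, all while keeping the pinching, gentle-operator and Hayashi--Nagaoka estimates controlled with vanishing error; this genuinely extends, rather than merely concatenates, Theorems~\ref{thm:3to1CQIC} and~\ref{Thm:NCCAchievesCQPTP}. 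The subsequent Fourier--Motzkin elimination is conceptually routine but must be carried out carefully, since that one covering parameter couples the constraint systems of all three receivers.
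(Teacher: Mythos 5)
Your outline is coherent and would plausibly yield the stated region, but it takes a genuinely different architectural route from the paper's (which, it should be said, is itself only an outline that defers the complete argument to an enlarged version). The paper's scheme splits each interfering transmitter's signal into a structured part $U_j$ (common NCC) \emph{and} an unstructured part $V_j$, and has receiver~$1$ decode \emph{four} components --- the bivariate sum $W=U_2\oplus U_3$ plus the univariate unstructured pieces $V_2,V_3$ and its own $X_1$ --- via a two-stage successive-simultaneous decoder; correspondingly, receiver $j\in\{2,3\}$ decodes the pair $(U_j,V_j)$. The paper explicitly motivates this by noting that decoding only the bivariate function is insufficient in general. You instead have receiver~$1$ reuse the Theorem~\ref{thm:3to1CQIC} POVM verbatim to decode only $(X_1,W)$, treating the residual interference as noise, and have receiver $j$ decode $(U_j,X_j)$ with $X_j$ superposed directly on $U_j$. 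Interestingly, your simpler architecture lines up better with the rate expressions actually stated in the theorem: none of the bounds contain mutual-information terms between $Y_1$ and $V_2$ or $V_3$, and the receiver-$j$ bounds are written in terms of $(U_j,X_j)$ rather than $(U_j,V_j)$, so the $V_j$ in the hypothesis are effectively vestigial under your reading. Your Fourier--Motzkin bookkeeping (using $X_1\perp W$ to rewrite $I(Y_1;X_1|W)$ as $I(X_1;WY_1)$ and extracting the $\min_j\{0,H(U_j)-H(W|Y_1)\}$ corner from the $\rho_j\ge 0$ constraint) correctly reproduces the stated inequalities. What your route gives up is exactly what the paper's extra layer is designed to buy --- the ability of receiver~$1$ to peel off univariate interference components, which is needed for the region to subsume the unstructured-coding region in general. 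One further caution, which you partly flag yourself: the shared covering parameter $\tfrac{k}{n}\log q\ge\log q-\min\{H(U_2),H(U_3)\}$ couples into receiver $j$'s decoding of its own NCC layer and, when $H(U_2)\neq H(U_3)$, generically leaves a residual penalty $H(U_j)-\min_j H(U_j)$ that the clean bound $R_j\le I(U_jX_j;Y_j)_{\sigma_j}$ does not display; this issue is inherited from Theorem~\ref{thm:3to1CQIC} and would need to be resolved in any complete proof, yours or the paper's.
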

\begin{proof}
In view of the detailed proof provided for Thms.~\ref{thm:3to1CQIC}, \ref{Thm:NCCAchievesCQPTP}, we only provide an outline. A complete proof of this theorem is reserved for an enlarged version of this preprint.

In the coding scheme of Thm.~\ref{thm:3to1CQIC}, Rx 1 decodes a bivariate function of Tx 2 and Tx 3's inputs. In general, decoding just a bivariate function of Tx 2 and Tx 3's inputs is insufficient. It is necessary for the coding scheme to permit Rx 1 decode univariate functions of the Tx 2 and Tx 3's inputs as well. Therefore an enhanced coding scheme, will split Tx 2 and Tx 3's transmissions into two parts respectively. For $j =2,3$, let $U_{j},V_{j}$ denote the splitting of Tx $j$'s input. Here, $U_{2},U_{3} \in \mathcal{F}_{q}$ take values in a common finite field. $U_{2}$ and $U_{3}$ are communicated via a common nested coset code. $V_{2},V_{3},X_{1}$ are built via conventional unstructured codes. Since this is a $3$to$1-$IC, Tx $1$ does not split its input $X_{1}$. 

Observe that, for $j \in 2,3$, Rx $j$ has to decode a $U_{j}, V_{j}$, one component of which is encoded via a nested coset code, and the other component which is encoded via a conventional unstructured code. The analysis of its decoding is similar to the analysis of Tx $1$'s decoding in proof of Theorem \ref{thm:3to1CQIC}. Indeed, in proof of Theorem \ref{thm:3to1CQIC}, Tx $1$ decoded from its unstructured code and the bivariate component of the interference that was encoded via a nested coset code. This provides the outline for the analysis of Rx 2 and 3. Rx $1$ has to decode 4 components - 1 structured ($U_{2}\oplus U_{3}$) and $3$ unstructured $V_{2}, V_{3}, X_{1}$. We adopt successive-simulataneous decoding wherein two code words are decoded at each stage of a 2 stage process. This outline does not substitute a complete and detailed proof which is will be provided in an enlarged version of this preprint.
\end{proof}
By choosing $W=\phi$, we can recover the $\mathscr{U}\mathcal{S}\mathcal{B}-$rate region from the above inner bound.

\section{Proof of Lemmas}
\subsection{Proof of Lemma \ref{lem:LemmaPinching}}\label{appx:proofLemmaPinching}
We provide the following generalized version proof of the pinching result, which can be used to prove the inequalities stated in the lemma.
\begin{lemma}
 \label{Lem:CharHighProbSubAKAPinching}
 Suppose (i) $\mathcal{A},\mathcal{B}$ are finite sets, (ii) $p_{AB}$ is a PMF on $\mathcal{A} \times \mathcal{B}$, (iii) $(\rho_{b} \in \mathcal{D}(\mathcal{H}): b \in \mathcal{B})$ is a collection of density operators, $\rho_{a} \define \sum_{b \in \mathcal{B}}p_{B|A}(b|a)\rho_{b}$ for $a \in \mathcal{A}$ and $\rho = \displaystyle\sum_{a \in \mathcal{A}}p_{A}(a)\rho_{a} = \displaystyle\sum_{b \in \mathcal{B}} p_{B}(b)\rho_{b}$. There exists a strictly positive $\mu>0$, whose value depends only on $p_{AB}$, such that for every $\delta > 0$, there exists a $N(\delta) \in \naturals$ such that for all $n  \geq N(\delta)$, we have
 \begin{eqnarray}
  \label{Eqn:CharHighProbSubAKAPinching-1}
  \tr(\Pi_{\rho}^{\delta}\Pi_{a^{n}}^{\delta}\Pi_{\rho}^{\delta}\rho_{b^{n}}) \geq 1-\exp\{ -n\lambda\delta^{2} \}
  \nonumber
 \end{eqnarray}
whenever $(a^{n},b^{n}) \in T_{\frac{\delta}{4}}^{n}(p_{AB})$ where $\Pi_{a^{n}}^{\delta}$ is the conditional typical projector of $\rho_{a^{n}}= \otimes_{t=1}^{n}\rho_{a_{t}}$ \cite[Defn.~15.2.4]{2013Bk_Wil} and $\Pi_{\rho}^{\delta}$ is the unconditional typical projector \cite[Defn.~15.1.3]{2013Bk_Wil} of $\rho^{\otimes n}$ .
\end{lemma}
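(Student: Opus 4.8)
The plan is to peel off the two outer copies of $\Pi_\rho^\delta$ with the gentle operator lemma, reducing the claim to two one--projector overlap estimates, and then to bound each overlap by reading the trace as the probability that the outcome of a von Neumann measurement in the relevant eigenbasis lands in a typical set.

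First I would reduce. By cyclicity of the trace, $\tr(\Pi_\rho^\delta\Pi_{a^n}^\delta\Pi_\rho^\delta\rho_{b^n})=\tr(\Pi_{a^n}^\delta\,\Pi_\rho^\delta\rho_{b^n}\Pi_\rho^\delta)$; applying $\tr(\Lambda\rho)\ge\tr(\Lambda\sigma)-\|\rho-\sigma\|_1$ (valid for $0\le\Lambda\le I$) with $\Lambda=\Pi_{a^n}^\delta$, $\sigma=\rho_{b^n}$, $\rho=\Pi_\rho^\delta\rho_{b^n}\Pi_\rho^\delta$, and controlling $\|\Pi_\rho^\delta\rho_{b^n}\Pi_\rho^\delta-\rho_{b^n}\|_1\le 2\sqrt{1-\tr(\Pi_\rho^\delta\rho_{b^n})}$ by the gentle operator lemma \cite[Lem.~9.4.2]{2013Bk_Wil}, I obtain
\[
\tr(\Pi_\rho^\delta\Pi_{a^n}^\delta\Pi_\rho^\delta\rho_{b^n})\ \ge\ \tr(\Pi_{a^n}^\delta\rho_{b^n})\ -\ 2\sqrt{1-\tr(\Pi_\rho^\delta\rho_{b^n})}.
\]
Hence it suffices to show that $1-\tr(\Pi_\rho^\delta\rho_{b^n})$ and $1-\tr(\Pi_{a^n}^\delta\rho_{b^n})$ are each at most $\exp(-n\mu'\delta^2)$ for some $\mu'>0$ and all $n$ past some $N(\delta)$; shrinking the exponent slightly to absorb the factor $2$ and the square root then yields the stated bound.

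Next I would bound the two overlaps, which is the technical core. Diagonalising $\rho=\sum_y\lambda_y\,|y\rangle\langle y|$, the unconditional typical projector $\Pi_\rho^\delta$ is diagonal in the product basis $\{\otimes_t|y_t\rangle\}$, so $\tr(\Pi_\rho^\delta\rho_{b^n})=\Pr[Y^n\in T^n_\delta]$, where $Y_1,\dots,Y_n$ are independent with $Y_t$ distributed as $q_{b_t}(y):=\langle y|\rho_{b_t}|y\rangle$ and $T^n_\delta$ is the typical set of the eigenvalue distribution $(\lambda_y)_y$. Since $(a^n,b^n)\in T^n_{\delta/4}(p_{AB})$ forces the type of $b^n$ to be $\tfrac{\delta}{4}$-close to $p_B$, the time-averaged law $\tfrac1n\sum_t q_{b_t}$ is $O(\delta)$-close to $\sum_b p_B(b)q_b=(\lambda_y)_y$; a Hoeffding/Bernstein estimate for the bounded independent summands $\log\tfrac1{\lambda_{Y_t}}$ (equivalently for the empirical type of $Y^n$) then gives $1-\tr(\Pi_\rho^\delta\rho_{b^n})\le\exp(-n\mu_1\delta^2)$. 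The conditional overlap is handled the same way: writing the spectral decompositions $\rho_a=\sum_z\nu_a(z)\,|z_a\rangle\langle z_a|$, the conditionally typical projector $\Pi_{a^n}^\delta$ is diagonal in $\{\otimes_t|z_{t,a_t}\rangle\}$, so $\tr(\Pi_{a^n}^\delta\rho_{b^n})=\Pr[(a^n,Z^n)\in T^n_\delta]$ with the $Z_t$ independent, $Z_t$ distributed as $r_{a_t,b_t}(z):=\langle z_{a_t}|\rho_{b_t}|z_{a_t}\rangle$; joint typicality of $(a^n,b^n)$ makes, for every fixed $a$, the empirical law of $\{b_t:a_t=a\}$ be $O(\delta)$-close to $p_{B|A}(\cdot|a)$, hence the time-average of $\{r_{a,b_t}:a_t=a\}$ be $O(\delta)$-close to $\sum_b p_{B|A}(b|a)r_{a,b}=\nu_a$, and a concentration bound for each $a$ followed by a union bound over the finite set $\mathcal{A}$ gives $1-\tr(\Pi_{a^n}^\delta\rho_{b^n})\le\exp(-n\mu_2\delta^2)$. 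Taking $\mu\le\tfrac12\min\{\mu_1,\mu_2\}$ and noting that, for fixed $\delta$, $\exp(-n\mu_1\delta^2)+2\exp(-\tfrac n2\mu_2\delta^2)\le\exp(-n\mu\delta^2)$ once $n\ge N(\delta)$ completes the proof.

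The routine part is the reduction in the second paragraph. The main obstacle is the pair of overlap estimates: although $\rho_{b^n}$ commutes with neither $\Pi_\rho^\delta$ nor $\Pi_{a^n}^\delta$, each of these projectors is diagonal in an explicitly known product orthonormal basis, which is precisely what lets the traces be rewritten as probabilities under product (non--i.i.d.) laws on the eigen-labels; what remains is the careful classical typicality bookkeeping turning the joint $\tfrac{\delta}{4}$-typicality of $(a^n,b^n)$ into the correct per-symbol means of those laws --- the $\delta/4$ margin leaving the room needed for the fluctuation of the type of $b^n$ and of the induced measurement outcome about their means --- together with a concentration inequality that delivers the $\exp(-n\mu\delta^2)$ rate, with $\mu$ depending only on $p_{AB}$ and the (fixed) ensemble.
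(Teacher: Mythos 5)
Your proposal is correct and follows essentially the same route as the paper's proof: reduce via the gentle operator lemma to the two overlap traces $\tr(\Pi_{a^n}^{\delta}\rho_{b^n})$ and $\tr(\Pi_{\rho}^{\delta}\rho_{b^n})$, then convert each into a classical probability by reading the trace in the (product) eigenbasis of the projector --- which is exactly the paper's construction of the induced stochastic matrices $p_{Y|XV}(y|x,v)=\langle e_{y|v}|\rho_x|e_{y|v}\rangle$ and $\hat{p}_{Y|X}(y|x)=\langle g_y|\rho_x|g_y\rangle$ --- and finish with classical typicality concentration, using the $\delta/4$ joint typicality of $(a^n,b^n)$ to align the per-symbol means with the eigenvalue distributions. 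The only cosmetic difference is that you invoke Hoeffding/Bernstein with a union bound over $\mathcal{A}$ where the paper cites the standard exponential typicality bound directly.
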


\noindent \textbf{Proof:}  We rename $\mathcal{A}=\mathcal{V}$, $\mathcal{B}=\mathcal{X}$, $p_{AB}=p_{VX}$, $a$ as $v$ and $b$ as $x$.
 This gives,
\begin{align}
\tr(\Pi_{\rho}^{\delta}
\Pi_{v^n}^{\delta} \Pi_{\rho}^{\delta} \rho_{x^n}) &=
\tr(\Pi_{\rho}^{\delta}
\Pi_{v^n}^{\delta} \rho_{x^n} \Pi_{\rho}^{\delta})\nonumber \\
&\geq \tr(\Pi_{v^n}^{\delta} \rho_{x^n}) -\frac{1}{2} \left\|
\rho_{x^n} -\Pi_{\rho}^{\delta} \rho_{x^n} \Pi_{\rho}^{\delta} \right\|.
\end{align}

In the following we derive a lower bound on $\tr(\Pi_{v^n}^{\delta}\rho_{x^n})$ and
derive an upper bound on $\left\| \rho_{x^n} -\Pi_{\rho}^{\delta} \rho_{x^n} \Pi_{\rho}^{\delta}
\right\|$. Toward the deriving the former, we recall that we have
$(v^n,x^n) \in T_{\delta/2}{^n}(p_{VX})$. Let us define:
\begin{equation}
  p_{Y|XV}(y|x,v) :=\braket{e_{y|v}|\rho_x|e_{y|v}},
  \label{eq:eq(2)}
  \end{equation}
for all $(x,v,y) \in \mathcal{X} \times \mathcal{V} \times \mathcal{Y}$.

Clearly, we have $p_{Y|XV}(y|x,v) \geq 0$, and
$\sum_{y \in \mathcal{Y}} p_{Y|XV}(y|x,v)=\sum_{y \in \mathcal{Y}} \braket{e_{y|v} | \rho_x|
  e_{y|v}}=\tr(\rho_x)=1$. Hence we see that $p_{Y|XV}$ is a stochastic matrix.

Next we note that
\begin{align}
  \sum_{x \in \mathcal{X}} p_{Y|XV}(y|x,v)p_{XV}(x,v) &=
  \sum_{x \in \mathcal{X}} p_{XV}(x,v) \braket{e_{y|v} |\rho_x | e_{y|v}} \nonumber \\
  &= p_{V}(v) \braket{e_{y|v} |\sum_{x \in \mathcal{X}} p_{X|V}(x|v) \rho_x |e_{y|v}} \nonumber \\
  &= p_V(v)\braket{e_{y|v}|\rho_v |e_{y|v}} = p_V(v) q_{Y|V}(y|v),
  \label{equation_1}
  \end{align}
   where we have used the spectral decomposition of $\rho_v$. 

   Observe that  if $(x^n,v^n) \in T_{\delta/4}^n(p_{XV})$, and
   $y^n\in T_{\delta}^n(p_{XV}p_{Y|XV}|x^n,v^n)$, then we have
   $(x^n,v^n,y^n) \in T_{\delta}^n(p_{XV}p_{Y|XV})$. This implies that we have
   $(v^n,y^n) \in T_{\delta}^n(p_{VY})$, where $p_{VY}$ is the marginal
   of $p_{XV}p_{Y|XV}$. Using this and (\ref{equation_1}), we see that
   $(v^n,y^n) \in T_{\delta}^n(p_V q_{Y|V})$. In summary, we see that if
   $(x^n,v^n) \in T_{\delta/4}(p_{XV})$, then we have
   \[
   T_{\delta}^n(p_{XV}p_{Y|XV}|x^n,v^n) \subseteq \left\{y^n: (v^n,y^n) \in
   T_{\delta}^n(p_V q_{Y|V})  \right\}.
   \]
  
   We are now set to provide the promised lower bound. Consider
   \begin{align}
     \tr(\Pi_{v^n}\rho_{x^n}) &= \tr \left(\left[ \sum_{y^n: (v^n,y^n) \in
         T_{\delta}^n(p_Vq_{Y|V})} \bigotimes_{t=1}^n \ket{e_{y_t|v_t}} \bra{e_{y_t|v_t}} \right]
   \left[   \bigotimes_{j=1}^n \rho_{x_j} \right] \right) \\
     &= \tr \left(\left[ \sum_{y^n: (v^n,y^n) \in
       T_{\delta}^n(p_Vq_{Y|V})} \bigotimes_{t=1}^n \ket{e_{y_t|v_t}} \bra{e_{y_t|v_t}}
     \rho_{x_t} \right] \right) \\
   &=\sum_{y^n: (v^n,y^n) \in
     T_{\delta}^n(p_Vq_{Y|V})} \prod_{t=1}^n \braket{e_{y_t|v_t}|\rho_{x_t}|e_{y_t|v_t}} \\
   &\geq \sum_{y^n \in
     T_{\delta}^n(p_{XV}p_{Y|XV}|x^n,v^n))} \prod_{t=1}^n p_{Y|XV}(y_t|x_t,v_t) \\
   &\geq 1- 2|\mathcal{X}||\mathcal{Y}||\mathcal{V}| \exp \left\{-\frac{2n \delta^2
   p_{XVY}(x^*,v^*,y^*)}{4(\log(|\mathcal{X}||\mathcal{Y}||\mathcal{V}|))^2} \right\},
   \end{align}
   where we used the definition (\ref{eq:eq(2)}) in the last equality.
   
   We next provide the upper bound. Note from the Gentle measurements lemma
   \cite[Lemma 9.4.2]{2013Bk_Wil}, we have $\| \rho_{x^n} -\Pi_{\rho}^{\delta}\rho_{x^n}
   \Pi_{\rho}^{\delta}|| \leq 3\sqrt{\epsilon}$ if $\tr(\Pi_{\rho}^{\delta}
   \rho_{x^n})\geq 1-\epsilon$. In the following we provide a lower bound on
  $\tr(\Pi_{\rho}^{\delta}   \rho_{x^n})$. 
   Recall that $\Pi_{\rho}^{\delta} =\sum_{y^n \in T_{\delta}^n(s_Y)} \bigotimes_{t=1}^n
     \ket{g_{y_t}}\bra{g_{y_t}}$, where
     \[
\rho=\sum_{y \in \mathcal{Y}} s_Y(y) \ket{g_y}\bra{g_y},
\]
is the spectral decomposition of $\rho$, and $\rho=\sum_{x \in \mathcal{X}} p_X(x)\rho_x$.
Let $\hat{p}_{Y|X}(y|x):=\braket{g_y|\rho_x|g_y}$, for all
$(x,y)\in \mathcal{X} \times \mathcal{Y}$. Note that $\hat{p}_{Y|X}$ is not
related to $p_{Y|X}$ defined previously. We note that $\hat{p}_{Y|X}(y|x)\geq 0$, and
$\sum_{y \in \mathcal{Y}} \hat{p}_{Y|X}(y|x)=\sum_{y \in \mathcal{Y}} \braket{g_y|\rho_x|g_y}=
\tr(\rho_x)=1$ for all $x \in \mathcal{X}$.
Thus we see that $\hat{p}_{Y|X}$ is a stochastic matrix. It can also be noted that
\[
\sum_{x\in \mathcal{X}} \hat{p}_{Y|X}(y|x)p_X(x) =\braket{g_y|\sum_{x \in \mathcal{X}}
  p_X(x)\rho_x|g_y}=\braket{g_y|\rho|g_y}=s_Y(y),
\]
for all $y \in \mathcal{Y}$. 
This implies that the condition  $y^n \in T_{\delta}^n(s_Y)$ is
equivalent to the condition $y^n \in T_{\delta}^n(\hat{p_{Y}})$,
where $\hat{p_Y}(y)=\sum_{x \in \mathcal{X}} \hat{p}_{Y|X}(y|x)p_X(x)$. Moreover,
if $x^n \in T_{\delta/2}^n(p_X)$, and
$y^n \in T_{\delta}^n(p_X \hat{p}_{Y|X}|x^n)$, then we have
$(x^n,y^n) \in T_{\delta}^n(p_X\hat{p}_{Y|X})$. Consequently, we have
$y^n\in T_{\delta}^n(\hat{p_Y})$, which in turn implies that
$y^n \in T_{\delta}^n(s_Y)$. In essense, we have that if $x^n \in T_{\delta/2}^n(p_X)$
then $T_{\delta}^n(p_X\hat{p}_{Y|X}|x^n) \subseteq T_{\delta}^n(s_Y)$.
Now we are set to provide the lower bound on $\tr(\Pi_{\rho}^{\delta}\rho_{x^n})$ as follows:
\begin{align}
  \tr(\Pi_{\rho}^{\delta} \rho_{x^n}) &= \tr \left(\sum_{y^n \in T_{\delta}(s_Y)} \bigotimes_{t=1}^n
    \ket{g_{y_t}}\bra{g_{y_t}} \rho_{x_t} \right)=\sum_{y^n \in T_{\delta}(s_Y)}
      \prod_{t=1}^n \braket{g_{y_t} |\rho_{x_t} |g_{y_t}} \\
      &= \sum_{y^n \in T_{\delta}(s_Y)}
      \prod_{t=1}^n \hat{p}_{Y|X}(y_t|x_t) \geq \sum_{y^n \in T_{\delta}(\hat{p}_{Y|X}p_X|x^n)} 
        \prod_{t=1}^n \hat{p}_{Y|X}(y_t|x_t) \\
         &\geq 1-2|\mathcal{X}||\mathcal{Y}| \exp \left\{ -\frac{2n \delta^2 p_X^2(x^*)
          \hat{p}_{Y|X}^2(y|x)}{4(\log(|\mathcal{X}||\mathcal{Y}|))^2} \right\}. 
  \end{align}
We therefore have
\[
\| \rho_{x^n} -\Pi_{\rho}^{\delta} \rho_{x^n} \Pi_{\rho}^{\delta}\| \leq
6 |\mathcal{X}||\mathcal{Y}| \exp \left\{ -\frac{2n \delta^2 p_X^2(x^*)
  \hat{p}_{Y|X}^2(y|x)}{4(\log(|\mathcal{X}||\mathcal{Y}|))^2} \right\},
\]
and
\[
\tr (\Pi_{v^n} \rho_{x^n} ) \geq 1-2|\mathcal{X}||\mathcal{Y}|||\mathcal{V}|
\frac{2n \delta^2 p_X^2(x^*)
          \hat{p}_{Y|X}^2(y|x)}{4(\log(|\mathcal{X}||\mathcal{Y}|))^2},
\]
thereby permitting us to conclude that
\[
\tr (\Pi_{\rho}^{\delta} \Pi_{v^n}^{\delta} \Pi_{\rho}^{\delta} \rho_{x^n}) \geq
\tr (\Pi_{v^n}^{\delta} \rho_{x^n} )-\frac{1}{2} \| \rho_{x^n}-\Pi_{\rho}^{\delta} \rho_{x^n}
\Pi_{\rho}^{\delta} \| \geq 1-\frac{2n \delta^2 p_X^2(x^*)
          \hat{p}_{Y|X}^2(y|x)}{4(\log(|\mathcal{X}||\mathcal{Y}|))^2},
\]
if $(x^n,v^n) \in T_{\delta/2}^n(p_{XV})$. 

\section{Proof of Propositions}
\subsection{Proof of Proposition \ref{prop:LemmaT_21} }\label{appx:proofofPropT_21}
We begin by defining the sets $\CalJ$ and $\CalK$ as 
\begin{align}
    \CalJ \deq \left\{\begin{array}{c}V^n_2(a_2,m_2) = v_2^n,\alpha_2(m_2) = a_2, \Theta_1(m_1) > 0\\ V^n_3(a_3,m_3) = v_3^n, \alpha_3(m_3) = a_3,\Theta_2(m_2) > 0\end{array}\right\}
  \subseteq \CalK \deq \{V^n_2(a_2,m_2) = v_2^n, V^n_3(a_3,m_3) = v_3^n\}
\end{align}
Now we simplify $\rho_{c,\ulinem}^{Y_1} $ as
\begin{align}\label{eq:SimplifyRho}
    \rho_{c,\ulinem}^{Y_1} &= \sum_{v_2^n, v_3^n \in \CalF_q^n}\sum_{x_1^n\in\CalX_1^n}\sum_{x_2^n,x_3^n \in \CalX^n_2\otimes\CalX_2^n}\!\!\!\!\!\!\! p_{X_2|V_2}(x^n_2|v^n_2)p^n_{X_3|V_3}(x^n_3|v^n_3)\rho_{x_1^nx_2^nx_3^n}^{Y_1}\11_{\{x_1^n(m_1) = x_1^n,v_2^n(\alpha_2(m_2),m_2) = v^n_2,v_3^n(\alpha_3(m_3),m_3) = v^n_3\}} \nonumber \\ 
    & = \sum_{v_2^n, v_3^n \in \CalF_q^n}\sum_{x_1^n \in \CalX_1^n}\rho_{x_1^nv_2^nv_3^n}^{Y_1}\11_{\{x_1^n(m_1) = x_1^n,v_2^n(\alpha_2(m_2),m_2) = v^n_2,v_3^n(\alpha_3(m_3),m_3) = v^n_3\}} \nonumber \\ 
    & = \sum_{x_1^n \in \CalX_1^n}\sum_{v_2^n, v_3^n \in \CalF_q^n}\sum_{a_2,a_3 \in \CalF_q^k} \rho_{x_1^nv_2^nv_3^n}^{Y_1} \11_{\{x_1^n(m_1) = x_1^n\}}\11_{\CalJ},
\end{align}
where the above two equalities are based on the encoding rules employed by the encoders and the last one follows from the definition of  $\CalJ$.
Using the above simplification in the term $T_{21}$, we get
\begin{align}
    T_{21}(\ulinem) & = \sum_{m_1'\neq m_1}\sum_{x_1^n \in \CalX_1^n}\sum_{v_2^n, v_3^n \in \CalF_q^n}\sum_{a_2,a_3 \in \CalF_q^k} \tr(\gamma_{m_1'}^{a,l} \pial\rho_{x_1^nv_2^nv_3^n}^{Y_1} \pial )\11_{\{x_1^n(m_1) = x_1^n\}}\11_{\CalJ}, \nonumber \\
    & = \sum_{m_1'\neq m_1}\sum_{\substack{x_1^n, \hat{x}_1^n \in \CalX_1^n\\}}\sum_{v_2^n, v_3^n \in \CalF_q^n}\sum_{a_2,a_3 \in \CalF_q^k} \sum_{u^n \in \CalF_q^n} \tr(\gammaXhatU \pi_{u^n}\rho_{x_1^nv_2^nv_3^n}^{Y_1} \pi_{u^n} )\11_{\{x_1^n(m_1) = x_1^n, x_1^n(m'_1) = \hat{x}_1^n\}}\11_{\{u^n = v_2^n\oplus v_3^n\}}\11_{\CalJ}, \nonumber
\end{align}
where $\gammaXhatU$ is defined as $\gammaXhatU \deq \pi_{\rho}\pi_{\hat{x}_1^n}\pi_{\hat{x}_1^n,u^n}\pi_{\hat{x}_1^n}\pi_{\rho} $
Taking expectation of the above term, we obtain
\begin{align}\label{eq:T21Ana}
    \EE[T_{21}] & = \sum_{m_1'\neq m_1}\sum_{\substack{x_1^n, \hat{x}_1^n \in \CalX_1^n}}\sum_{\substack{v_2^n, v_3^n \in \CalF_q^n \\ a_2,a_3 \in \CalF_q^k}} \sum_{u^n \in \CalF_q^n} \tr(\gammaXhatU \pi_{u^n}\rho_{x_1^nv_2^nv_3^n}^{Y_1} \pi_{u^n} )\PP(x_1^n(m_1) = x_1^n, x_1^n(m'_1) = \hat{x}_1^n)\11_{\{u^n = v_2^n\oplus v_3^n\}}\PP(\CalJ)\nonumber \\
    & \labelrel\leq{eq:T21label1} \sum_{m_1'\neq m_1}\sum_{\substack{x_1^n, \hat{x}_1^n \in \CalX_1^n}}\sum_{\substack{ a_2,a_3 \in \CalF_q^k}} \sum_{\substack{v_2^n\in\TDelta(V_2),\\ v_3^n \in \TDelta(V_3)}} \sum_{u^n \in \TDelta(U)} \tr(\gammaXhatU \pi_{u^n}\rho_{x_1^nv_2^nv_3^n}^{Y_1} \pi_{u^n} )p_{X_1}^n(x_1^n)p_{X_1}^n(\hat{x}_1^n)\11_{\{u^n = v_2^n\oplus v_3^n\}}\PP(\CalK) \nonumber\\
    & \labelrel\leq{eq:T21label2} \frac{2^{nR_1}q^{2k}}{q^{2n}} \sum_{\substack{x_1^n, \hat{x}_1^n \in \CalX_1^n}}   \sum_{u^n \in \TDelta(U)} \tr(\gammaXhatU \pi_{u^n} \left(\sum_{\substack{v_2^n\in\TDelta(V_2),\\ v_3^n \in \TDelta(V_3)}}\rho_{x_1^nv_2^nv_3^n}^{Y_1}\right) \pi_{u^n} )p_{X_1}^n(x_1^n)p_{X_1}^n(\hat{x}_1^n)\11_{\{u^n = v_2^n\oplus v_3^n\}} \nonumber\\
    & \labelrel\leq{eq:T21label3} \frac{2^{nR_1}q^{2k}}{q^{2n}}2^{n(H(V_2,V_3|U) + \delta_{u_2})} \sum_{{\hat{x}_1^n \in \CalX_1^n}}  p_{X_1}^n(\hat{x}_1^n) \sum_{u^n \in \TDelta(U)} \tr(\gammaXhatU \pi_{u^n}  \left(\sum_{{{x}_1^n \in \CalX_1^n}} p_{X_1}^n(x_1^n) \rho_{x_1u^n}^{Y_1}\right) \pi_{u^n} ) \nonumber\\
     & \labelrel={eq:T21label4} \frac{2^{nR_1}q^{2k}}{q^{2n}}2^{n(H(V_2,V_3|U) + \delta_{u_2})} \sum_{{\hat{x}_1^n \in \CalX_1^n}} p_{X_1}^n(\hat{x}_1^n)   \sum_{u^n \in \TDelta(U)} \tr(\gammaXhatU \pi_{u^n}\rho_{u^n}^{Y_1}\pi_{u^n} )\nonumber\\
     & \labelrel\leq{eq:T21label5} \frac{2^{nR_1}q^{2k}}{q^{2n}}2^{n(H(V_2,V_3|U) + \delta_{u_2})}2^{-n(S(Y_1|U)_{\sigma_1} - \delta_{u_3})} \sum_{{\hat{x}_1^n \in \CalX_1^n}} p_{X_1}^n(\hat{x}_1^n)   \sum_{u^n \in \TDelta(U)} \tr(\pi_{\rho}\pi_{\hat{x}_1^n}\pi_{\hat{x}_1^n,u^n}\pi_{\hat{x}_1^n}\pi_{\rho} \pi_{u^n} )\nonumber \\
     & \labelrel\leq{eq:T21label6} \frac{2^{nR_1}q^{2k}}{q^{2n}}2^{n(H(V_2,V_3|U) + \delta_{u_2})}2^{-n(S(Y_1|U)_{\sigma_1} - \delta_{u_3})} \sum_{{\hat{x}_1^n \in \CalX_1^n}} p_{X_1}^n(\hat{x}_1^n)   \sum_{u^n \in \TDelta(U)} \tr(\pi_{\hat{x}_1^n,u^n}),
\end{align}
where \eqref{eq:T21label1} follows from the presence of indicators $\theta_1(m_1)>0,$ $\theta_2(m_2) > 0$, and definition of $\CalK$, \eqref{eq:T21label2} follows by observing that $\PP(\CalK) = \frac{1}{q^{n}q^{n}}$, and \eqref{eq:T21label3} follows by using the following arguments for any $u^n \in \TDelta(U),$
\begin{align}\label{eq:simplifyRhoV1V2}
\sum_{\substack{v_2^n\in\TDelta(V_2),\\ v_3^n \in \TDelta(V_3)}}\rho_{x_1^nv_2^nv_3^n}^{Y_1}\11_{\{u^n = v_2^n\oplus v_3^n\}} & \leq 2^{n(H(V_2,V_3|U) + \delta_{u_2})}  \sum_{\substack{v_2^n\in\TDelta(V_2),\\ v_3^n \in \TDelta(V_3)}}p_{V_2V_3|U}^n(v_2^n,v_3^n|u^n)\rho_{x_1^nv_2^nv_3^n}^{Y_1}\11_{\{u^n = v_2^n\oplus v_3^n\}} \nonumber \\
& \leq 2^{n(H(V_2,V_3|U) + \delta_{u_2})}  \sum_{\substack{v_2^n,v_3^n}}p_{V_2V_3|U}^n(v_2^n,v_3^n|u^n)\rho_{x_1^nv_2^nv_3^n}^{Y_1}\11_{\{u^n = v_2^n\oplus v_3^n\}} \nonumber \\
& = 2^{n(H(V_2,V_3|U) + \delta_{u_2})} \rho_{x_1^nu^n}^{Y_1}.
\end{align}
The equality in \eqref{eq:T21label4} follows from the definition of $\rho_{u^n}^{Y_1} \deq \sum_{x^n \in \CalX^n}p_{X_1}^n(x_1^n)\rho_{x_1^n u^n}^{Y_1}$, and the inequality in \eqref{eq:T21label5} uses the property of conditional projector i.e., $ \pi_{u^n}\rho_{u^n}^{Y_1}\pi_{u^n} \leq 2^{-n(S(Y_1|U)_{\sigma_1} - \delta_{u_3})}\pi_{u^n},$ for $\sigma_{1}$ as defined in the statement of the theorem. Finally \eqref{eq:T21label6} follows using the cyclicity of trace and the fact that $\pi_{\hat{x}_1^n}\pi_{\rho} \pi_{u^n}\pi_{\rho}\pi_{\hat{x}_1^n} \leq \pi_{\hat{x}_1^n}\pi_{\rho}\pi_{\hat{x}_1^n} \leq  \pi_{\hat{x}_1^n} \leq I.$ 

\noindent Using the dimensional bound of a conditional typical projector i.e., $\tr(\pi_{\hat{x}_1^n,u^n}) \leq 2^{n(S(Y_1|X_1,U) + \delta_{u_4})} $ in \eqref{eq:T21Ana} we obtain
\begin{align}
    \EE[T_{21}] & \leq \frac{2^{nR_1}q^{2k}}{q^{2n}}2^{n(H(V_2,V_3|U) + \delta_{u_2})}2^{-n(S(Y_1|U)_{\sigma_1} - \delta_{u_3})}2^{n(S(Y_1|X_1,U) + \delta_{u_4})}2^{n(H(U)+\delta_{u_1})} \nonumber \\  
    & = \frac{2^{nR_1}q^{2k}}{q^{2n}}2^{n(H(V_2,V_3) + \delta_{u_2})}2^{-n(I(Y_1;X_1|U)_{\sigma_1} - \delta_{u_3})}2^{n\delta_{u_4}}2^{n\delta_{u_1}}. \nonumber
\end{align}
This completes the proof.

\subsection{Proof of Proposition \ref{prop:LemmaT_22}}
\label{appx:proofofPropT_22}
We begin by substituting the simplification performed in \eqref{eq:SimplifyRho} into the expression corresponding to $T_{22}.$ This gives
\begin{align}
    T_{22} & = \sum_{\substack{a'\neq a, l'\neq l}}\tr(\gamma_{m_1}^{a',l'} \pial\rho_{c,\ulinem}^{Y_1}\pial) \nonumber \\
    & = \sum_{\substack{a'\neq a, l'\neq l}}\sum_{x_1^n \in \CalX_1^n}\sum_{v_2^n, v_3^n \in \CalF_q^n}\sum_{a_2,a_3 \in \CalF_q^k}\tr(\gamma_{m_1}^{a',l'} \pial  \rho_{x_1^nv_2^nv_3^n}^{Y_1} \pial)\11_{\{x_1^n(m_1) = x_1^n\}}\11_{\CalJ} \nonumber \\
    & \leq \sum_{\substack{a'\neq a,\\ l'\neq l}}\sum_{x_1^n \in \CalX_1^n}\sum_{\substack{v_2^n \in \TDelta(V_2), \\ v_3^n \in \TDelta(V_3)}}\sum_{a_2,a_3 \in \CalF_q^k}\sum_{u^n, \hat{u}^n \in \CalF_q^n}\hspace{-10pt}\tr(\gammaXUhat \pi_{u^n} \rho_{x_1^nv_2^nv_3^n}^{Y_1} \pi_{u^n})\11_{\{x_1^n(m_1) = x_1^n\}}\nonumber \\ 
    & \hspace{3in} \11_{\{u^n = v_2^n\oplus v_3^n\}}\11_{\{\hat{u}^n = a'g_{I} + l'g_{O/I}+b_2^n+b_3^n\}}\11_{\CalK} \nonumber
\end{align}
Taking expectation over the codebook generation distribution gives
\begin{align}\label{eq:ExpecT22}
    \EE[T_{22}] &\leq \sum_{\substack{a'\neq a, l'\neq l}}\sum_{x_1^n \in \CalX_1^n}\sum_{\substack{v_2^n \in \TDelta(V_2), \\ v_3^n \in \TDelta(V_3)}}\sum_{a_2,a_3 \in \CalF_q^k}\sum_{u^n, \hat{u}^n \in \CalF_q^n}\tr(\gammaXUhat \pi_{u^n} \rho_{x_1^nv_2^nv_3^n}^{Y_1} \pi_{u^n})p_{X_1}^n(x_1^n)\11_{\{u^n = v_2^n\oplus v_3^n\}}\nonumber \\ 
    & \hspace{30pt} \PP(\hat{u}^n = a'G_{I} + l'G_{O/I}+B_2^n+B_3^n, v_2^n = a_2G_{I} + m_2 G_{O/I} + B_2^n, v_3^n = a_3G_{I} + m_3 G_{O/I} + B_3^n ) \nonumber\\
    & \labelrel\leq{eq:T22label1} \frac{q^{3k}q^{l}}{q^{3n}} \sum_{x_1^n \in \CalX_1^n}p_{X_1}^n(x_1^n)\sum_{\substack{v_2^n \in \TDelta(V_2), \\ v_3^n \in \TDelta(V_3)}}\sum_{u^n \in \CalF_q^n}\tr(\Big(\sum_{ \hat{u}^n \in \CalF_q^n}\gammaXUhat\Big) \pi_{u^n} \rho_{x_1^nv_2^nv_3^n}^{Y_1} \pi_{u^n})\11_{\{u^n = v_2^n\oplus v_3^n\}},
\end{align}
where the inequality \eqref{eq:T22label1} is obtained by noting that for $a' \neq a_2\oplus a_3$ and $ l' \neq m_1 \oplus m_2,$ we have 
\begin{align}\label{eq:prob3terms}
     \PP(\hat{u}^n = a'G_{I} + l'G_{O/I}+B_2^n+B_3^n, v_2^n = a_2G_{I} + m_2 G_{O/I} + B_2^n, v_3^n = a_3G_{I} + m_3 G_{O/I} + B_3^n ) = \frac{1}{q^{3n}}.  
\end{align}
Now consider the following simplification of $\sum_{ \hat{u}^n \in \CalF_q^n}\gammaXUhat.$ We have 
\begin{align}
    \sum_{ \hat{u}^n \in \CalF_q^n}\gammaXUhat & = \sum_{ \hat{u}^n \in \TDelta(U)} \pi_{\rho} \pi_{x_1^n} \pi_{x_1^n,u^n} \pi_{x_1^n}\pi_{\rho} \nonumber \\
    & \labelrel\leq{eq:gammalabel1} 2^{n(S(Y_1|X_1,U)_{\sigma_1} + \delta_{u_4})}\sum_{ \hat{u}^n \in \TDelta(U)} \pi_{\rho} \pi_{x_1^n} \rho_{x_1^nu^n} \pi_{x_1^n}\pi_{\rho} \nonumber \\
    & \labelrel\leq{eq:gammalabel2} 2^{n(S(Y_1|X_1,U)_{\sigma_1} + \delta_{u_4})}2^{n(H(U) + \delta_{u_1})}\sum_{ \hat{u}^n \in \TDelta(U)}p_{U}^n(u^n) \pi_{\rho} \pi_{x_1^n} \rho_{x_1^nu^n} \pi_{x_1^n}\pi_{\rho} \nonumber \\
    &\leq 2^{n(S(Y_1|X_1,U)_{\sigma_1} + \delta_{u_4})}2^{n(H(U) + \delta_{u_1})} \pi_{\rho} \pi_{x_1^n}\left(\sum_{ \hat{u}^n \in \CalF_q^n} p_{U}^n(u^n)\rho_{x_1^nu^n}\right) \pi_{x_1^n}\pi_{\rho} \nonumber \\
    & = 2^{n(S(Y_1|X_1,U)_{\sigma_1} + \delta_{u_4})}2^{n(H(U) + \delta_{u_1})} \pi_{\rho} \pi_{x_1^n}\rho_{x_1^n} \pi_{x_1^n}\pi_{\rho} \nonumber \\
    & \labelrel\leq{eq:gammalabel3} 2^{n(S(Y_1|X_1,U)_{\sigma_1} + \delta_{u_4})}2^{n(H(U) + \delta_{u_1})} 2^{-n(S(Y_1|X_1) - \delta_{x_1})} \pi_{\rho} \pi_{x_1^n}\pi_{\rho} \labelrel={eq:gammalabel4} c_1 \cdot I, \nonumber 
\end{align}
where \eqref{eq:gammalabel1} follows using the following arguments
\begin{align}
    \pi_{x_1^n,u^n} \leq 2^{n(S(Y_1|X_1,U)_{\sigma_1} + \delta_{u_4})}\pi_{x_1^n,u^n}\rho_{x_1^nu^n}\pi_{x_1^n,u^n}   & = 2^{n(S(Y_1|X_1,U)_{\sigma_1} + \delta_{u_4})}\sqrt{\pi_{x_1^n,u^n}}\rho_{x_1^nu^n}\sqrt{\pi_{x_1^n,u^n}} \nonumber \\
    & = 2^{n(S(Y_1|X_1,U)_{\sigma_1} + \delta_{u_4})}{\rho_{x_1^nu^n}}. \nonumber 
\end{align}
The inequalities \eqref{eq:gammalabel2},  \eqref{eq:gammalabel3} uses the typicality arguments. Lastly, the inequality
\eqref{eq:gammalabel4} follows by using the fact that $\pi_{\rho} \pi_{x_1^n}\pi_{\rho} \leq I$ and by defining $c_1 \deq 2^{n(S(Y_1|X_1,U)_{\sigma_1} +H(U) - S(Y_1|X_1) + \delta_{u_1} + \delta_{u_4}+ \delta_{x_1})}.$ 

Substituting the above simplification in \eqref{eq:ExpecT22}, we obtain
\begin{align}
    \EE[T_{22}] &\leq \frac{c_1q^{3k}q^{l}}{q^{3n}} \sum_{x_1^n \in \CalX_1^n}p_{X_1}^n(x_1^n)\sum_{\substack{v_2^n \in \TDelta(V_2), \\ v_3^n \in \TDelta(V_3)}}\sum_{u^n \in \CalF_q^n}\tr( \pi_{u^n} \rho_{x_1^nv_2^nv_3^n}^{Y_1} \pi_{u^n})\11_{\{u^n = v_2^n\oplus v_3^n\}}\nonumber \\
     &\labelrel\leq{eq:ExpecT22label1} \frac{c_1q^{3k}q^{l}}{q^{3n}} \sum_{\substack{v_2^n \in \TDelta(V_2), \\ v_3^n \in \TDelta(V_3)}}\sum_{u^n \in \CalF_q^n}\11_{\{u^n = v_2^n\oplus v_3^n\}} \nonumber \\
     &\labelrel\leq{eq:ExpecT22label2} \frac{q^{3k}q^{l}}{q^{3n}}2^{n( +H(U) - I(Y_1;U|X_1)_{\sigma_1} + \delta_{u_1} + \delta_{u_4}+ \delta_{x_1})} 2^{n(H(V_1)+H(V_2) + \delta_{v_1} + \delta_{v_2})}, \nonumber 
\end{align}
where \eqref{eq:ExpecT22label1} uses $\tr( \pi_{u^n} \rho_{x_1^nv_2^nv_3^n}^{Y_1} \pi_{u^n}) \leq 1$, and \eqref{eq:ExpecT22label1} follows from the definition of $c_1$ for $\sigma_1$ as defined in the statement of the theorem. This gives the desired rate to bound $\EE[T_{22}].$

\subsection{Proof of Proposition \ref{prop:LemmaT_23}}\label{appx:proofofPropT_23}
Using the simplification performed in \eqref{eq:SimplifyRho}, we obtain
\begin{align}
    T_{23} & = \sum_{\substack{m_1' \neq m_1,\\a'\neq a, l'\neq l}}\sum_{x_1^n \in \CalX_1^n}\sum_{v_2^n, v_3^n \in \CalF_q^n}\sum_{a_2,a_3 \in \CalF_q^k}\tr(\gamma_{m_1'}^{a',l'} \pial \rho_{x_1^nv_2^nv_3^n}^{Y_1} \pial)\11_{\{x_1^n(m_1) = x_1^n\}}\11_{\CalJ} \nonumber \\
    & \leq \sum_{\substack{m_1' \neq m_1,\\a'\neq a, l'\neq l}}\sum_{x_1^n,  \hat{x}_1^n \in \CalX_1^n}\sum_{\substack{v_2^n \in \TDelta(V_2), \\v_3^n \in \TDelta(V_3)}}\sum_{a_2,a_3 \in \CalF_q^k}\sum_{u^n, \hat{u}^n \in \CalF_q^n}\hspace{-10pt}\tr(\gammaXhatUhat \pi_{u^n} \rho_{x_1^nv_2^nv_3^n}^{Y_1} \pi_{u^n})\11_{\{x_1^n(m_1) = x_1^n\}} \nonumber \\ 
    &  \hspace{200pt}\11_{\{x_1^n(m_1') = \hat{x}_1^n\}} \11_{\{u^n = v_2^n\oplus v_3^n\}}\11_{\{\hat{u}^n = a'g_{I} + l'g_{O/I}+b_2^n+b_3^n\}}\11_{\CalK} ,
\end{align}
where the above inequality follows by noting that $\CalJ \subseteq \CalK.$ By taking expectation of the above term with respect to the codebook generating distributions, we get
\begin{align}\label{eq:ExpecT23}
    \EE[T_{23}] & \leq \sum_{\substack{m_1' \neq m_1,\\a'\neq a, l'\neq l}}\sum_{x_1^n, \hat{x}_1^n \in \CalX_1^n}\sum_{\substack{v_2^n \in \TDelta(V_2), \\v_3^n \in \TDelta(V_3)}}\sum_{a_2,a_3 \in \CalF_q^k}\sum_{u^n, \hat{u}^n \in \CalF_q^n}\hspace{-10pt}\tr(\gammaXhatUhat \pi_{u^n} \rho_{x_1^nv_2^nv_3^n}^{Y_1} \pi_{u^n})p_{X_1}^n(x_1^n)p_{X_1}^n(\hat{x}_1^n)\11_{\{u^n = v_2^n\oplus v_3^n\}} \nonumber \\ 
    & \hspace{30pt} \PP(\hat{u}^n = a'G_{I} + l'G_{O/I}+B_2^n+B_3^n, v_2^n = a_2G_{I} + m_2 G_{O/I} + B_2^n, v_3^n = a_3G_{I} + m_3 G_{O/I} + B_3^n ) \nonumber \\
    & \labelrel\leq{eq:ExpecT23label1} \frac{2^{nR_1}q^{3k}q^{l}}{q^{3n}}\sum_{x_1^n, \hat{x}_1^n \in \CalX_1^n} p_{X_1}^n(\hat{x}_1^n)p_{X_1}^n(x_1^n) \sum_{u^n, \hat{u}^n \in \CalF_q^n}\hspace{-10pt}\tr\Bigg(\gammaXhatUhat \pi_{u^n} \Bigg(\sum_{\substack{v_2^n \in \TDelta(V_2), \\v_3^n \in \TDelta(V_3)}}\rho_{x_1^nv_2^nv_3^n}^{Y_1}\Bigg) \pi_{u^n}\Bigg)\11_{\{u^n = v_2^n\oplus v_3^n\}}, \nonumber 
\end{align}
where the second inequality above uses the claim from \eqref{eq:prob3terms}.

Consider the following simplifications.
\begin{align}
    \sum_{x_1^n \in \CalX_1^n}p_{X_1}^n(\hat{x}_1^n)& \sum_{u^n, \hat{u}^n \in \CalF_q^n}\hspace{-10pt}\tr\Bigg(\gammaXhatUhat \pi_{u^n} \Bigg(\sum_{\substack{v_2^n \in \TDelta(V_2), \\v_3^n \in \TDelta(V_3)}}\rho_{x_1^nv_2^nv_3^n}^{Y_1} \11_{\{u^n = v_2^n\oplus v_3^n\}}\Bigg) \pi_{u^n}\Bigg)\nonumber \\
    & \labelrel\leq{eq:subSimplifyT22label1} 2^{n(H(V_2,V_3|U) + \delta_{u_2})}  \sum_{u^n, \hat{u}^n \in \TDelta(U)}\hspace{-10pt}\tr(\gammaXhatUhat \pi_{u^n} \Bigg(\sum_{x_1^n \in \CalX_1^n}p_{X_1}^n(\hat{x}_1^n)\rho_{x_1^nu^n}^{Y_1}\Bigg) \pi_{u^n})\nonumber \\
    & \labelrel\leq{eq:subSimplifyT22label2} 2^{n(H(V_2,V_3|U) + \delta_{u_2})}  \sum_{u^n, \hat{u}^n \in \TDelta(U)}\hspace{-10pt}\tr(\gammaXhatUhat \rho_{u^n}^{Y_1} )\nonumber \\
    & \leq  2^{n(H(V_2,V_3|U) + \delta_{u_2})} 2^{n(H(U) + \delta_{u_1})}  \sum_{\hat{u}^n \in \TDelta(U)}\hspace{-10pt}\tr(\pi_{\hat{x}_1^n}\pi_{\hat{x}_1^n,\hat{u}^n}\pi_{\hat{x}_1^n}\pi_{\rho}  \Big(\sum_{u^n \in\TDelta(U)}p_U^n(u^n)\rho_{u^n}^{Y_1}\Big)\pi_{\rho} ) \nonumber \\
    & \labelrel\leq{eq:subSimplifyT22label3} 2^{n(H(V_2,V_3|U) + \delta_{u_2})} 2^{n(H(U) + \delta_{u_1})}2^{-n(S(Y_1)_{\sigma} - \delta_{\rho})}  \sum_{\hat{u}^n \in \TDelta(U)}\hspace{-10pt}\tr( \pi_{\hat{x}_1^n,\hat{u}^n}),
\end{align}
where \eqref{eq:subSimplifyT22label1} holds from the above bounds obtained in \eqref{eq:simplifyRhoV1V2}, \eqref{eq:subSimplifyT22label2} follows by the definition of $\rho_{u^n}^{Y_1}$ and by using the inequality $ \pi_{u^n}\rho_{u^n}^{Y_1} \pi_{u^n} \leq \rho_{u^n}^{Y_1},$ \eqref{eq:subSimplifyT22label3} follows using (i) $\pi_{\rho}  \Big(\sum_{u^n \in\TDelta(U)}p_U^n(u^n)\rho_{u^n}^{Y_1}\Big)\pi_{\rho} \leq \pi_{\rho}  \rho \pi_{\rho} \leq 2^{-n(S(Y_1)_{\sigma} - \delta_{\rho})}\pi_{\rho} $ and (ii) $\tr(\pi_{\hat{x}_1^n}\pi_{\hat{x}_1^n,\hat{u}^n}\pi_{\hat{x}_1^n}\pi_{\rho}) \leq \tr(\pi_{\hat{x}_1^n,\hat{u}^n}). $ Using the above simplification in \eqref{eq:ExpecT23} we obtain
\begin{align}
    \EE[T_{23}] & \leq \frac{2^{nR_1}q^{3k}q^{l}}{q^{3n}} 2^{n(H(V_2,V_3|U) + \delta_{u_2})} 2^{n(H(U) + \delta_{u_1})}2^{-n(S(Y_1)_{\sigma} - \delta_{\rho})} \sum_{ \hat{x}_1^n \in \CalX_1^n} p_{X_1}^n(\hat{x}_1^n) \sum_{\hat{u}^n \in \TDelta(U)}\hspace{-10pt}\tr( \pi_{\hat{x}_1^n,\hat{u}^n}  )\nonumber \\ 
    & \leq  \frac{2^{nR_1}q^{3k}q^{l}}{q^{3n}} 2^{n(H(V_2,V_3|U) + \delta_{u_2})} 2^{n(H(U) + \delta_{u_1})}2^{-n(S(Y_1)_{\sigma} - \delta_{\rho})} 2^{n(H(U) + \delta_{u_1})}2^{n(S(Y_1|X_1,U)_{\sigma_1} + \delta_{u_4})}\nonumber \\ 
    & = \frac{2^{nR_1}q^{3k}q^{l}}{q^{3n}} 2^{n(H(V_2,V_3) + H(U) - I(Y_1;X_1,U)_{\sigma_1} + \delta_{\rho} + 2\delta_{u_1}+\delta_{u_2} + \delta_{u_4})} .
\end{align}
This completes the proof.

\bibliographystyle{IEEEtran}
{\bibliography{ThreeUserCQIC}}

\end{document}